\numberwithin{equation}{section}
\numberwithin{equation}{subsection}
\theoremstyle{plain}
\newtheorem{theorem}
[equation]
{Theorem}
\newtheorem{lemma}[equation]{Lemma}
\newtheorem{thm}[equation]{Theorem}
\newtheorem{cor}[equation]{Corollary}
\newtheorem{lem}[equation]{Lemma}
\newtheorem{prop}[equation]{Proposition}
\theoremstyle{definition}
\newtheorem{remark}[equation]{Remark}
\newtheorem{ex}[equation]{Example}
\newtheorem{rem}[equation]{Remark}
\numberwithin{equation}{section}
\numberwithin{equation}{subsection}
\DeclareMathOperator{\tr}{{\rm tr}}
\newcommand{ \rk }{ \mbox{rk} }
\def\C{\mathbb C}
\def\R{\mathbb R}
\begin{document}

\title{The geometry of the Hermitian matrix space and the Schrieffer--Wolff transformation}

\author{Gerg\H{o} Pint\'er}
\affiliation{Department of Theoretical Physics, Institute of Physics, Budapest University of Technology and Economics, M\H{u}egyetem rkp. 3., H-1111 Budapest, Hungary}
\affiliation{HUN-REN-BME-BCE Quantum Technology Research Group, Műegyetem rkp. 3., H-1111 Budapest, Hungary}

\author{Gy\"orgy Frank}
\affiliation{Department of Theoretical Physics, Institute of Physics, Budapest University of Technology and Economics, M\H{u}egyetem rkp. 3., H-1111 Budapest, Hungary}

\author{D\'aniel Varjas}
\affiliation{Department of Theoretical Physics, Institute of Physics, Budapest University of Technology and Economics, M\H{u}egyetem rkp. 3., H-1111 Budapest, Hungary}
\affiliation{IFW Dresden and W{\"u}rzburg-Dresden Cluster of Excellence ct.qmat, Helmholtzstrasse 20, 01069 Dresden, Germany}
\affiliation{Max Planck Institute for the Physics of Complex Systems, Nöthnitzer Strasse 38, 01187 Dresden, Germany}

\author{Andr\'as P\'alyi}
\affiliation{Department of Theoretical Physics, Institute of Physics, Budapest University of Technology and Economics, M\H{u}egyetem rkp. 3., H-1111 Budapest, Hungary}
\affiliation{HUN-REN-BME-BCE Quantum Technology Research Group, Műegyetem rkp. 3., H-1111 Budapest, Hungary}

\maketitle

%\date{}

\begin{abstract}
    In quantum mechanics, the Schrieffer--Wolff (SW) transformation (also called quasi-degenerate perturbation theory) is known as an approximative method to reduce the dimension of the Hamiltonian. We present a geometric interpretation of the SW transformation: We prove that it induces a local coordinate chart in the space of Hermitian matrices near a $k$-fold degeneracy submanifold. Inspired by this result, we establish a `distance theorem': we show that the standard deviation of $k$ neighboring  eigenvalues of a Hamiltonian equals the distance of this Hamiltonian from the corresponding $k$-fold degeneracy submanifold, divided by $\sqrt{k}$. 
    Furthermore, we investigate one-parameter perturbations of a degenerate Hamiltonian, and prove that the standard deviation and the pairwise differences of the eigenvalues lead to the same order of splitting of the energy eigenvalues,
    which in turn is the same as the order of distancing from the degeneracy submanifold. 
   As applications, we prove the `protection' of Weyl points using the transversality theorem, and infer geometrical properties of certain degeneracy submanifolds based on results from quantum error correction and topological order.
\end{abstract}

\tableofcontents

\section{Introduction}\label{s:intro}

Many physics problems are translated to matrix eigenvalue problems.
For example, the frequency spectrum and the spatial patterns (modes) of small oscillations of a mechanical system are described by the eigenvalues and the eigenvectors of the dynamical matrix, which is a real symmetric matrix. 
Another example, which is the focus of this work, is quantum mechanics, where the stationary states and the energies are given by the eigenvectors and eigenvalues of the Hamiltonian, which is a Hermitian operator, often a finite-dimensional matrix.

In certain cases, a physics problem translated to a matrix eigenvalue problem is treated using perturbative methods. 
One of these is the Schrieffer-Wolff (SW) transformation \cite{SW,luttinger1955motion,BravyiSW} --- also known as quasidegenerate perturbation theory~\cite{Winkler, day2024pymablock} Van Vleck perturbation theory~\cite{vanVleck1929},
or L\"{o}wdin partitioning~\cite{Lowdin_1962} --- which decouples the subspaces of the relevant and irrelevant energy eigenstates through an appropriate unitary transformation.

In this work, motivated by many examples in quantum mechanics (see below), we focus on the case, when the Hamiltonian ($n \times n$ Hermitian matrix) of interest is in the vicinity of an \emph{unperturbed Hamiltonian} $H_0$ whose energy spectrum hosts a $k$-fold degenerate eigenvalue, and we need to determine only those eigenvalues and eigenstates of the perturbed Hamiltonian $H$ that correspond to the degenerate subspace of $H_0$.
In this case, the SW transformation takes the $n\times n$ Hermitian matrices $H_0$ and $H$ as inputs, and outputs a $k\times k$ Hermitian matrix $H_\text{eff}$, the \emph{effective Hamiltonian}, whose eigenvalues and eigenvectors represent the relevant eigenvalues and eigenvectors of $H$. 
This provides a practical, useful computational method: if $k$ is much smaller than $n$, then computing the eigensystem of $H_\text{eff}$ may be much easier than doing the same for $H$.
Furthermore, the SW transformation is typically applied together with an approximate, truncated power expansion of $H_\text{eff}$ in the perturbation parameter(s), i.e., $H_\text{eff}$ is approximated as a low-order polynomial. 

In this work, we provide a geometrical interpretation of the SW transformation, relating the latter to the geometry of the submanifolds of degenerate matrices.
Our work is motivated by concrete examples in physics: band-structure degeneracy points (including Weyl points) and degenerate ground states of quantum spin models (including topologically ordered systems and stabiliser code Hamiltonians). 
At the same time, we provide a purely mathematical description of concepts and relations throughout the paper. 
The geometrical view we develop here builds a new connection between differential geometry and quantum mechanics, and hence enables the application of tools in one domain to deepen the understanding of, or solve problems in, the other domain.

Our first contribution is that we recognize that for any unperturbed degenerate Hamiltonian $H_0$, the SW transformation provides a \emph{canonical analytical local chart} of the manifold of Hermitian matrices, which is aligned with the corresponding degeneracy submanifold.
Furthermore, we identify the effective Hamiltonian as the collection of those coordinates of this chart that lead out of the degeneracy submanifold.

Our second contribution is a `distance theorem', which is  a proportionality relation between 
(i) the Frobenius (a.k.a. Hilbert-Schmidt) distance between a generic Hamiltonian $H$ and a $k$-fold degeneracy submanifold, and 
(ii) the standard deviation of the quasidegenerate eigenvalues of $H$ corresponding to that degeneracy submanifold (sometimes simply called the `energy splitting of the degeneracy').
More precisely, the standard deviation of the quasidegenerate eigenvalues equals the distance divided by $\sqrt{k}$.

Third, we relate the first and second contributions: we show that the norm of the effective Hamiltonian $H_\text{eff}$ obtained from the SW transformation is the same as the distance between $H$ and the degeneracy submanifold.

Fourth, we consider the energy splitting of the degenerate energy eigenvalues of $H_0$ due to a linear perturbation $H(t) = H_0 + t H_1$.
We define the \emph{order of energy splitting} $r \in \{1,2,\dots \} \cup \{\infty\}$ for higher degeneracies $k>2$ in multiple ways, and we prove that those definitions are equivalent to each other.
Moreover, as a consequence of our distance theorem, the order of energy splitting coincides with the order at which the distance of $H(t)$ from the degeneracy submanifold increases (\emph{order of distancing}).
The order of energy splitting also generalises from linear to analytical perturbations.

One application of our results is an alternative explanation of the `protection'of Weyl points, which are  degeneracy points appearing, e.g., in the electronic band structure of crystalline materials \cite{Armitage2018}.
Making use of the relations developed here, we show that the protection of Weyl points against perturbations is analogous to the protection of the crossing point of two lines drawn on a paper sheet, and we formalize this analogy by a common underlying theorem, namely the transversality theorem. 

We also demonstrate the applicability of our results at the intersection of differential geometry, condensed matter physics, and quantum information science.
In the condensed-matter and quantum-information domains, quantum systems with robust degeneracies are often desirable.
In this particular context, a robust degeneracy is such that physically relevant perturbations break it with a high order $r$ of energy splitting; $r = 1$ (\emph{linear energy splitting}) is not robust, $r = 2$ (\emph{quadratic energy splitting}) is already robust, and the higher the $r$ the more robust the degeneracy.

Examples include the (1) the ground-state degeneracy of the quantum Ising model at zero field, which is robust against a small tranverse field \cite{Igloi,Damski,Juhasz}, (2) the zero-energy degeneracy of the edge modes of the fully dimerized Su-Schrieffer-Heeger model \cite{SSH,Asboth}, which is robust against chiral-symmetric local perturbations, (3) the zero-energy degeneracy of the Majorana edge modes of the Kitaev chain \cite{Kitaev2001}, which is robust against local perturbations respecting the particle-hole symmetry of the Bogoliubov-de Gennes formalism, and (4) the ground-state degeneracy of the Toric Code \cite{KitaevToricCode}, which is robust against 1-local perturbations (Zeeman fields). 
In all four cases, the considered perturbations cause an energy splitting whose order is proportional to the linear size of the system.
Further examples are (5) stabiliser code Hamiltonians, which are Hamiltonians generalising (4) and corresponding to quantum error correction codes \cite{GottesmanPhD}. Stabiliser code Hamiltonians have degenerate ground states that are robust against 1-local perturbations, exhibiting an order of energy splitting set by the code distance.

As we have described above (fourth contribution), an important finding of this work is that the order of energy splitting of a degeneracy due to a linear perturbation is the order of distancing of the perturbed Hamiltonian from the degeneracy submanifold. 
As an application of this result, we show that the key characteristics of a stabiliser code Hamiltonian, such as size, ground-state degeneracy, and code distance, can be translated into geometric information about the corresponding degeneracy submanifold in the space of Hamiltonians. 

The rest of this paper is structured as follows.
Sec.~\ref{s:pre} summarizes relevant preliminaries.
Sec.~\ref{sec:summary} provides a summary of the key concepts and statements of this work, including physics-related examples as well as applications. 
Sec.~\ref{sec:proofs} contains the proofs and further details of the results, with subsections in one-to-one correspondence with those of Sec.~\ref{sec:summary}. 
The Appendix collects further notes on the SW transformation.

\section{Preliminaries, notations, and conventions}\label{s:pre} 

Here we summarize some well known properties of the space of Hermitian matrices. We aim to fix our notation and to provide the setting of our paper.

\subsection{The space of the Hermitian matrices}\label{ss:spaceherm} The Hermitian matrices (of size $n \times n$) are the complex matrices $H$ with $H=H^{\dagger}$. Their set $\mbox{Herm}(n)$ is a real vector space, it is a subspace of  $\C^{n \times n}$ of all $ n \times n$ complex matrices. Although $\C^{n \times n}$ is a complex vector space (of dimension $n^2$), $\mbox{Herm}(n)$ inherits only real vector space structure, because it is not closed under the multiplication by the imaginary unit $i$. 

The dimension of $\mbox{Herm}(n)$ over $\R$ is $n^2$. 
We fix a basis of $\mbox{Herm}(n)$, which we will refer to as `canonical basis', which enables to identify $\mbox{Herm}(n)$ with $\R^{n^2}$. This basis is formed by three families of matrices, the real off-diagonal parts ($(n^2-n)/2$ matrices), imaginary parts ($(n^2-n)/2$ matrices) and the diagonal parts ($n$ matrices):
\begin{equation}\label{eq:bas1}
    \sigma_{ab}^{(\textup{real})}= 
    \frac{1}{\sqrt{2}} (e_b \cdot e_a^{\dagger} + e_a \cdot e_b^{\dagger})=
    \frac{1}{\sqrt{2}}\begin{pNiceMatrix}[last-row=6,last-col=6]
        &\vdots&&\vdots&&\\
        \dots&0&\dots&1&\dots&a\\
        &\vdots&&\vdots&&\\
        \dots&1&\dots&0&\dots&b\\
        &\vdots&&\vdots&&\\
        &a&&b&&
    \end{pNiceMatrix}\
    \mbox{\hspace{1cm}(for $a < b$)},
\end{equation}
\begin{equation}\label{eq:bas2}
    \sigma_{ab}^{(\textup{im})}= 
    \frac{i}{\sqrt{2}} (e_b \cdot e_a^{\dagger} - e_a \cdot e_b^{\dagger})=
    \frac{1}{\sqrt{2}}\begin{pNiceMatrix}[last-row=6,last-col=6]
        &\vdots&&\vdots&&\\
        \dots&0&\dots&-i&\dots&a\\
        &\vdots&&\vdots&&\\
        \dots&i&\dots&0&\dots&b\\
        &\vdots&&\vdots&&\\
        &a&&b&&
    \end{pNiceMatrix}\
    \ \mbox{\hspace{1cm}(for $a < b$)},
\end{equation}
\begin{equation}\label{eq:bas3}
    \sigma_{aa}^{(\textup{diag})}= 
     e_a \cdot e_a^{\dagger} =
     \begin{pNiceMatrix}[last-row=6,last-col=6]
        0&&&&&\\
        &\ddots&&&&\\
        &&1&&&a\\
        &&&\ddots&&\\
        &&&&0&\\
        &&a&&&
    \end{pNiceMatrix},
\end{equation}
where $e_1, \dots, e_n$ is the standard basis of $\C^n$. For any other unitary basis $u_1, \dots, u_n$ of $\C^n$ (where $u_a=Ue_a$ is the $a$-th column of a unitary matrix $U \in U(n)$), there is an associated basis of $\mbox{Herm}(n)$ (over $\R$), obtained from Equations~\eqref{eq:bas1}, \eqref{eq:bas2}, \eqref{eq:bas3} by replacing $e_a$ with $u_a$, resulting $\sigma^U_{ab}=U\sigma_{ab} U^{\dagger}$.

For brevity, we denote the canonical basis of $\mathrm{Herm}(n)$ by $C=(c_1, \dots, c_{n^2})$, where $c_a$-s are the above defined $\sigma$ matrices in an appropriate order where the first $k^2$ elements generate the upper left $k\times k$ block. For example, 
\begin{equation}
    c_1=\sigma_{11}^{(\textup{diag})}, \ 
    c_2=\sigma_{12}^{(\textup{real})}, \ 
    c_3=\sigma_{12}^{(\textup{im})}, \ 
    c_4=\sigma_{22}^{(\textup{diag})}, \
    c_5=\sigma_{13}^{(\textup{real})}, \
    c_6=\sigma_{13}^{(\textup{im})}, \ 
    \dots
\end{equation}
Similarly, we denote the basis associated with a unitary matrix $U \in U(n)$ by $C^U=(c^U_1, \dots, c^U_{n^2})$, with $c^U_a=Uc_aU^{\dagger}$, 
 therefore, $c^U_1, \dots, c^U_{k^2}$ span the subspace of $\mbox{Herm}(n)$ consisting of the  matrices acting on the subspace of $\C^n$ spanned by $u_1, \dots, u_k$.

For $k < n$, $\mbox{Herm}(k)$ is embedded in $\mbox{Herm}(n)$ as the subspace that consists of the matrices having zero entries outside the upper left $k \times k$ block, that is, the subspace spanned by $c_1, \dots, c_{k^2}$. We often identify $\mbox{Herm}(k)$ with this subspace of $\mbox{Herm}(n)$. 

The complex vector space $\C^{n \times n}$ is endowed with a Hermitian inner product (Frobenius inner product) defined as \begin{equation}\label{eq:compinn}
    \langle M, N \rangle =\tr (M^{\dagger} \cdot N)=\sum_{a=1}^{n} \sum_{b=1}^{n} M^\ast_{ab} N_{ab},
\end{equation}
where $z^\ast$ denotes the complex conjugate of $z \in \C$. That is, the inner product of two matrices agrees with the inner product of the vectors formed by the entries.

Although, this inner product takes complex values, its restriction to $\mbox{Herm}(n)$ is real valued because every summand on the right side of Equation~\eqref{eq:compinn} has its complex conjugate too. Therefore, the Frobenius inner product on $\mbox{Herm}(n)$ simplifies as \mbox{$\langle H, K \rangle=\tr(H \cdot K)\in\mathbb{R}$}, making $\mbox{Herm}(n)$ a Euclidean space. The basis $C^U$ of $\mathrm{Herm}(n)$ associated to $U \in U(n)$ (in particular, the canonical basis $C$) is orthonormal with respect to the Frobenius inner product. If the coordinates of $H$ and $K$ in the basis $C^U$ are $h^U_a$ and $k^U_b$, respectively, then
\begin{equation}
\langle H, K \rangle=\sum_{a=1}^{n^2} h^U_a k^U_a .  
\end{equation}
The induced Frobenius norm of a Hermitian matrix is $\| H \|= \sqrt{\tr(H^2)}$. 
The distance of two matrices $H, G \in \mbox{Herm}(n)$ induced by the Frobenius norm is denoted by $d(H,G)=\| H-G\|$. This distance induces a topology on $\mbox{Herm}(n)$. An open neighborhood, or simply a neighborhood of $H$ in $\mbox{Herm}(n)$ is an open subset $A \subset \mbox{Herm}(n)$ which contains an open ball around $H$, that is, there is a radius $0<r$ such that $d(H, G)<r$ implies that $G \in A$.

\begin{lem}\label{le:uniact}
    The scalar product is invariant under the conjugation by unitary matrices. That is, $\langle UHU^{\dagger},  UKU^{\dagger} \rangle=\langle H, K \rangle$ holds for every $H, K \in \mbox{Herm}(n)$ and $U \in U(n)$.
\end{lem}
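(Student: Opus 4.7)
The plan is to derive the equality by direct manipulation of the trace definition of the Frobenius inner product given in Equation~\eqref{eq:compinn}, since all the needed ingredients (unitarity of $U$, cyclicity of the trace, and $H^\dagger = H$) are already available.

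First I would write out the left-hand side using the definition:
\begin{equation*}
\langle UHU^\dagger, UKU^\dagger \rangle = \tr\!\bigl((UHU^\dagger)^\dagger \cdot (UKU^\dagger)\bigr).
\end{equation*}
Then I would simplify the adjoint: $(UHU^\dagger)^\dagger = U H^\dagger U^\dagger = U H U^\dagger$, where the second equality uses $H \in \mathrm{Herm}(n)$. Substituting back yields $\tr(U H U^\dagger U K U^\dagger)$, and the unitarity relation $U^\dagger U = I_n$ collapses the middle to give $\tr(U H K U^\dagger)$.

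Next I would invoke the cyclic property of the trace to move the leftmost $U$ to the right end, obtaining $\tr(H K U^\dagger U) = \tr(H K)$, which by definition equals $\langle H, K\rangle$. This completes the argument.

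The proof is essentially a one-line computation, so there is no real obstacle; the only thing worth flagging is that the argument in fact works for the Frobenius inner product on all of $\mathbb{C}^{n\times n}$ (without assuming Hermiticity), since $(UMU^\dagger)^\dagger = U M^\dagger U^\dagger$ and the same cyclicity argument goes through. Hence the stated invariance is a special case of a general property of the Frobenius inner product under conjugation by unitaries, and the Hermitian hypothesis is only used implicitly to ensure that both sides are real.
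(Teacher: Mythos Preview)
Your proof is correct and follows essentially the same one-line trace computation as the paper: expand the inner product via its trace definition, collapse $U^\dagger U$ using unitarity, and apply cyclicity to remove the remaining conjugation. Your observation that Hermiticity is not actually needed is a nice bonus not mentioned in the paper.
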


\begin{proof}
$\langle UHU^{\dagger},  UKU^{\dagger} \rangle=\tr (UH^{\dagger} U^{\dagger} UKU^{\dagger})= \tr (U^{\dagger} U H^{\dagger} K)=\tr (H^{\dagger} K)=\langle H, K \rangle $.
\end{proof}

Let $\mbox{Herm}_0(n) \subset \mbox{Herm}(n)$ be the subspace consisting of the matrices with zero trace. The dimension of $\mbox{Herm}_0(n)$ over $\R$ is $n^2-1$. A basis  $\breve{c}_1, \dots, \breve{c}_{n^2-1}$ is formed by the $\sigma_{ij}^{\textup{(real)}}$ and $\sigma_{ij}^{\textup{(im)}}$ matrices, extended with an orthonormal basis of the diagonal matrices with zero trace.
This latter basis of traceless diagonal matrices replaces the matrices $\sigma_{ii}^{\textup{(diag)}}$, which are not contained in $\mbox{Herm}_0(n)$. For example, for $n=2$ one can choose the normalized Pauli matrices as a basis of $\mbox{Herm}_0(2)$:
\begin{eqnarray}
    \frac{1}{\sqrt{2}} \sigma_x&=& 
    \frac{1}{\sqrt{2}}
    \begin{pmatrix}
        0 & 1 \\
        1 & 0
    \end{pmatrix}=\sigma_{12}^{\textup{(real)}},\\
     \frac{1}{\sqrt{2}} \sigma_y&=&\frac{1}{\sqrt{2}}
    \begin{pmatrix}
        0 & -i \\
        i & 0
    \end{pmatrix}=\sigma_{12}^{\textup{(im)}},\\
        \frac{1}{\sqrt{2}}\sigma_z&=&\frac{1}{\sqrt{2}}\begin{pmatrix}
        1 & 0 \\
        0 & -1
    \end{pmatrix}=\frac{1}{\sqrt{2}}\left(\sigma_{11}^{\textup{(diag)}}-\sigma_{22}^{\textup{(diag)}}\right).
    \end{eqnarray}
    
Note that, in $\mbox{Herm}_0(n)$ with $n>2$, the differences $\sigma_{a,a}^{\textup{(diag)}}-\sigma_{a+1,a+1}^{\textup{(diag)}}$ do form a basis of the traceless diagonal matrices, however, they are not orthogonal to each other. In order to obtain an orthonormal basis, one needs to perform the Gram--Schmidt process. For example, for $n=3$ the normalized diagonal Gell--Mann matrices read
\begin{eqnarray}
    \frac{1}{\sqrt{2}} \lambda_3&=& 
    \frac{1}{\sqrt{2}}
    \begin{pmatrix}
        1 & 0 & 0\\
        0 & -1 & 0\\
        0 & 0 & 0
    \end{pmatrix}=\frac{1}{\sqrt{2}}\left(\sigma_{11}^{\textup{(diag)}}-\sigma_{22}^{\textup{(diag)}}\right),\\
     \frac{1}{\sqrt{2}} \lambda_8&=& 
    \frac{1}{\sqrt{6}}
    \begin{pmatrix}
        1 & 0 & 0\\
        0 & 1 & 0\\
        0 & 0 & -2
    \end{pmatrix}=\frac{1}{\sqrt{6}}\left(\sigma_{11}^{\textup{(diag)}}+\sigma_{22}^{\textup{(diag)}}-2\sigma_{33}^{\textup{(diag)}}\right).
    \end{eqnarray}
    (The other Gell-Mann matrices are the same as the off-diagonal matrices defined in Equation~\eqref{eq:bas1} and \eqref{eq:bas2}, up to $\sqrt{2}$ factor.)

\begin{rem}\label{re:op2norm} We note that in quantum mechanics, the operator 2-norm $\| H \|_2$  is more frequently used than the Frobenius norm, since it refers directly to the 2-norm of the state vectors $\Psi \in \C^n$. Indeed, it is defined as $\| H \|_2=\mbox{sup}\{\|H \cdot \Psi\| \ | \ \| \Psi \|=1  \}$, or equivalently (for Hermitian matrices) $\| H \|_2=\max \{ |\lambda_i|\}$, where $\lambda_i$ are the eigenvalues of $H$. However, the Frobenius norm is more suitable to study the geometry of $\mbox{Herm}(n)$: Importantly, the Frobenius norm is induced by an inner product, which allows to study angles and distances as well. In contrast, the operator 2-norm is not induced by an inner product, as it does not satisfy the parallelogram law. 
By Lemma~\ref{le:uniact} the Frobenius norm is $\| H\| = \sqrt{\sum_{i=1}^n \lambda_i^2}$, so we have the relation $ \| H \|_2 \leq \| H \| \leq \sqrt{n} \cdot \| H \|_2$.
This shows that the operator 2-norm and the Frobenius norm on $\mbox{Herm}(n)$ are equivalent, hence they induce the same topology. (Indeed, every norm on a finite dimensional vector space is equivalent.) 

In Ref.~\cite{BravyiSW}, the operator 2-norm is used, and we also use it to compare our results to those of \cite{BravyiSW} in the \hyperref[ss:Bravyi]{Appendix}.
\end{rem}

\subsection{Degeneracy of the eigenvalues}\label{ss:preldegen} Hermitian matrices have real eigenvalues and they can be diagonalized by a unitary basis transformation.
Furthermore, for each $H \in  \mbox{Herm}(n)$, one can choose a unitary matrix $U \in U(n)$ such that 
\begin{equation}\label{eq:dia} \Lambda = U^{-1} \cdot H \cdot U \end{equation}
is diagonal with the eigenvalues in increasing order $\lambda_1 \leq \lambda_2 \leq \dots \leq \lambda_n$. The ordered eigenvalues are continuous functions $\lambda_i: \mbox{Herm}(n) \to \R$, as it can be proven e.g. using Weyl's inequality~\cite{Weyl1912}.
However, the diagonalizing matrix $U$ is not unique; each column, in fact, can be independently multiplied by an arbitrary phase factor. Moreover, in the case of degenerate eigenvalues $\lambda_{i+1} =\dots = \lambda_{i+k} $, the corresponding columns of $U$ form a unitary basis of the corresponding $k$-dimensional eigenspace. This basis can be transformed by any unitary action of $U(k)$ to obtain a different matrix $U'$ that still diagonalizes $H$.

The \emph{degeneracy set} $\Sigma \subset \mbox{Herm}(n)$ is the set of matrices with at least two coinciding eigenvalues. 
This degeneracy set $\Sigma$ is a subvariety. Indeed, it can be defined
by one polynomial equation, namely, $\Sigma$ is the zero locus of the discriminant of the characteristic polynomial.
By the Neumann--Wigner theorem \cite{Neumann1929} the codimension of $\Sigma$ is 3.
For convenience, we summarize the original proof in Table~\ref{tab:neumann}. Our work provides an alternative proof, as a byproduct of the SW chart, see Corollary~\ref{co:locchart}.

\begin{table}[h]
    \begin{center}
	\begin{tabular}{|c|cc|cc|}
	    \hline
		&\multicolumn{2}{c|}{Non-degenerate matrices}&\multicolumn{2}{c|}{$k$-fold ground state degenerate matrices}\\ &\thead{Diagonalization\\structure}&\thead{Number of\\parameters}&\thead{Diagonalization\\structure}&\thead{Number of\\ parameters}\\
		\hline
		$\Lambda$ & $\lambda_1<\dots<\lambda_n$ & $n$ & $\lambda_1=\dots=\lambda_k<\dots<\lambda_n$ & $n-k+1$ \\
		\hline
  $U$ & \small{$[U]\in\frac{\textstyle\text{U}(n)}{\underbrace{\text{U}(1)\times \dots \times \text{U}(1)}_{\textstyle n\text{ times}}}$} &$n^2-n$&\small{$[U]\in\frac{\textstyle\text{U}(n)}{\textstyle\text{U}(k)\times \underbrace{\text{U}(1)\times \dots \times \text{U}(1)}_{\textstyle n-k\text{ times}}}$} & $\begin{array}{c} n^2-k^2\;\;\;\\\;\;\;-(n-k)\end{array}$\\
  \hline
  $H$&&$n^2$&&$n^2-\left(k^2-1\right)$\\
		\hline
	\end{tabular}
	\caption{Summary of the proof of the Neumann--Wigner theorem for $k$-fold (ground state) degeneracy. Consider a Hermitian matrix in form $H=U \Lambda U^{-1}$, where $\Lambda$ is diagonal, containing the ordered eigenvalues, and the columns of the unitary matrix $U \in U(n)$ are eigenvectors of $H$. The dimension of $U(n)$ is $n^2$, since every unitary matrix sufficiently close to the identity can be written as $e^{iK}$ with $K \in \mbox{Herm}(n)$ (in other words, the Lie algebra of the Lie group $U(n)$ is the space $\mathfrak{u}(n)=i \cdot \mbox{Herm}(n)$ of the anti-Hermitian matrices). If we construct a non-degenerate $H$,  every column of $U$ can be modified by a $U(1)$ action providing the same matrix $H$. Hence the choice of $U$ has $n^2-n$ free parameters, which together with the $n$ parameters of $\Lambda$ verifies that the dimension of the non-degenerate matrices (i.e., the generic part of $\mbox{Herm}(n)$) is $n^2$. If we construct a $k$-fold degenerate $H$ ($ \lambda_{1}  = \dots = \lambda_{k} < \lambda_{k+1}$), then the eigenvectors $u_{1}, \dots, u_{k}$ spanning the degenerate eigenspace can be modified by a $U(k)$ action. Hence the choice of $U$ has $n^2-k^2-(n-k)$ free parameters, which together with the $n-k+1$ parameters of $\Lambda$ verifies that the dimension of the $k$-fold degeneracy is $n^2-k^2+1$, hence, its codimension is $k^2-1$ independently of $n$. In particular, the codimension of the two-fold degeneracy (i.e., the smooth part of $\Sigma$)  is 3. The same proof shows that the codimension of a general stratum $ \Sigma_{\kappa}$ is $\sum_{i=1}^l \left(k_i^2-1 \right)$, where $\kappa=(k_1, \dots, k_l)$. \label{tab:neumann}}
     \end{center}
\end{table}

The set $\Sigma$ can be decomposed into the disjoint union of different \emph{strata} based on the type of the degeneracy, i.e., which eigenvalues coincide \cite{ArnoldSelMath1995}. Each stratum can be labeled by an ordered partition of $n$ in the following  way. 
Let $\kappa=(k_1, \dots, k_l)$ be a sequence of integers $1 \leq k_i \leq n$ of length $1 \leq l < n$ such that $n=\sum_{i=1}^l k_i$, defining $\kappa$ as an ordered partition of $n$. 
The number of the ordered partitions of $n$ is $2^{n-1}$, (including the case with $l=n$). The associated stratum $\Sigma_{\kappa}$ of $\Sigma $ consists of the matrices with coinciding eigenvalues $\lambda_1 = \dots = \lambda_{k_1} <  \lambda_{k_1 + 1} = \dots = \lambda_{k_1+k_2} < \dots$.

The partition with $l=n$ and $k_i=1$ marks the complement of $\Sigma$, the set of non-degenerate matrices. Each stratum $\Sigma_{\kappa}$ is a (not closed) smooth submanifold of $\mbox{Herm}(n)$, whose dimension is $\dim \Sigma_{\kappa} =n^2-\sum_{i=1}^l \left(k_i^2-1 \right)$ by the Neumann--Wigner theorem \cite{Neumann1929}, cf. Table~\ref{tab:neumann}. The closure $\mbox{cl}(\Sigma_{\kappa})$ contains the higher degeneracies. The smooth points of $\Sigma$ are the strictly two-fold degenerate matrices, that is, exactly 2 eigenvalues coincide and all the others are different. The set $\Sigma$ is singular at all other points, see Figure~\ref{fig:strat4}. We refer to \cite{ArnoldSelMath1995, vassiliev2014spaces} for details. 

\begin{figure}
	\begin{center}
		\includegraphics[width=0.8\columnwidth]{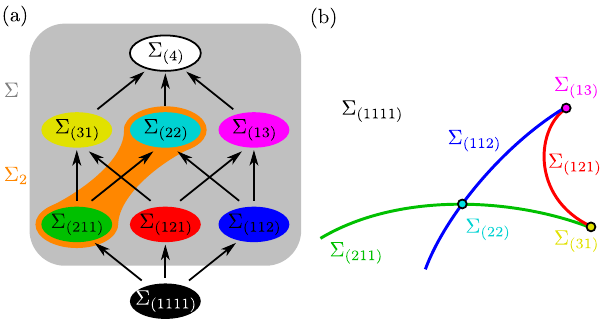}
	\end{center}
	\caption{The stratification of $\mbox{Herm}(4)$, the space of $4\times 4$ Hermitian matrices. (a) The strata are labeled with the ordered partitions of 4, whose number is $2^3=8$. The non-degenerate matrices form an open and dense subset $\Sigma_{(1111)}$ in $\mbox{Herm}(4)$. Its complement is the degeneracy set $\Sigma$ (grey), the set of matrices with at least two coinciding eigenvalues. For example, $\lambda_1=\lambda_2<\lambda_3 < \lambda_4$ in $\Sigma_{(211)}$, and $\lambda_1=\lambda_2 < \lambda_3=\lambda_4$ in $\Sigma_{(22)}$. The two-fold ground state degeneracy set $\Sigma_2$ (orange) consists of the matrices with $\lambda_1=\lambda_2<\lambda_3$, that is, $\Sigma_2= \Sigma_{(2,2)} \cup \Sigma_{(2,1,1)}$. (b)
$\Sigma$ is a subvariety of $\mbox{Herm}(4)$. Although, each stratum is a (non-closed) smooth submanifold of $\mbox{Herm}(4)$, however, $\Sigma$ is not a smooth manifold. The smooth points of $\Sigma$  are the exactly two-fold  degenerate matrices in $\Sigma_{(211)} \cup \Sigma_{(121)} \cup \Sigma_{(112)}$. The two-fold ground state degeneracy set $\Sigma_2$ is a smooth submanifold in $\mbox{Herm}(4)$.
Its generic points belong to $\Sigma_{(2 1 1)}$. At the points   of  $\Sigma_{(22)}$, $\Sigma_2$ is smooth, but $\Sigma$  is not smooth. Indeed, $\Sigma_{(2,2)}$ is the (transverse) intersection of the smooth branches $\Sigma_2$ and $\Sigma_{(1,1,2)} \cup \Sigma_{(2,2)}$. 	\label{fig:strat4}}
\end{figure}

In this article we restrict our study to the subset $\Sigma_k $ of $\Sigma$ consisting of $k$-fold ground-state degenerate matrices, that is, matrices with $\lambda_1=\lambda_2= \dots = \lambda_k < \lambda_{k+1}$. Since higher eigenvalues can coincide with each other, $\Sigma_k$ is the union of all strata $\Sigma_{\kappa}$  corresponding to $\kappa=(k_1, \dots, k_l)$ with $k_1=k$. The generic points of $\Sigma_k$ belong to the stratum corresponding to $(k, 1, \dots, 1)$.

The set $\Sigma_k$ is a smooth, not closed submanifold in  $\mbox{Herm}(n)$ of codimension $k^2-1$, that is, $\dim(\Sigma_k)=n^2-(k^2-1)$, cf. Figure~\ref{fig:strat4}. This follows from the Neumann-Wigner theorem, see Table~\ref{tab:neumann} for a sketch of the proof.

In addition to $\Sigma_k$, its closure $\mbox{cl}(\Sigma_k)$ contains the higher degeneracies as well, it precisely decomposes as the disjoint union $\mbox{cl}(\Sigma_k)=\Sigma_k \cup \Sigma_{k+1} \cup \dots \cup \Sigma_{n}$. The proof is the following. If an infinite  sequence $H_1, H_2, \dots$ of elements $H_i \in \Sigma_k$ is convergent in $\mbox{Herm}(n)$, then its limit $\lim_{i \to \infty} H_i$ is in $\mbox{cl}(\Sigma_k)$ by the definition of the closure. As a consequence of the continuity of the eigenvalues, the lowest $k$ eigenvalues of $\lim_{i \to \infty} H_i$ are degenerate, and it might happen that one or more other eigenvalues also converge to them. 

Our results can be naturally adapted to arbitrary $k$-fold degeneracy, not necessarily ground state, i.e., the set of matrices with $\lambda_i < \lambda_{i+1}= \dots = \lambda_{i+k} < \lambda_{i+k+1}$.

We introduce one more notation. Let 
$\Sigma_{[k, k+1]} \subset \Sigma$ denote the set of matrices $H \in \mbox{Herm}(n)$ with $\lambda_k = \lambda_{k+1}$, that is, the closure of the two-fold degeneracy stratum corresponding to $\lambda_k=\lambda_{k+1}$. For example, in $\mbox{Herm}(4)$ we have $\Sigma_{[2,3]}=\Sigma_{(1,2,1)} \cup \Sigma_{(1,3)} \cup \Sigma_{(3,1)} \cup \Sigma_{(4)}$, its complement is $\Sigma_{(1,1,1,1)} \cup \Sigma_{(2, 1,1)}  \cup \Sigma_{(1,1,2)}\cup \Sigma_{(2,2)}$, cf. Figure~\ref{fig:strat4}.

\section{Summary of results}
\label{sec:summary}

Here we summarize the results of this paper. The proofs and further details can be found in Section~\ref{sec:proofs}, each one in the corresponding subsection with the same title.

\subsection{The Schrieffer--Wolff transformation induces a local chart}

We consider matrices $H \in \mbox{Herm}(n)$ in a sufficiently small neighborhood $\mathcal{V}_0$ of a fixed $k$-fold ground-state degenerate matrix $H_0 \in \Sigma_k$. For simplicity, we assume that $H_0$ is diagonal in the canonical basis of $\C^n$ with increasing order of the eigenvalues. For a more general setting (non-diagonal or non-degenerate $H_0$) see the \hyperref[ss:Bravyi]{Appendix}.

As we mentioned before, the choice of the unitary matrix $U \in U(n)$ in the diagonalization \eqref{eq:dia} of $H$ is not unique. Furthermore, $U$ cannot be chosen to depend continuously on $H$ in any (open) neighborhood of $H_0$ in $\mbox{Herm}(n)$. This is because the eigenvectors corresponding to degenerate eigenvalues cannot be chosen continuously\footnote{The eigenspaces form a complex line bundle over the non-degenerate matrices, which is nontrivial. For example, it is always possible to find a small 2-sphere $S^2$ around $H_0$ in $\mbox{Herm}(n) \setminus \Sigma$, such that the first Chern number of the lowest eigenstate is non-zero, showing the impossibility of a continuous choice of  eigenvectors.}.
Instead, what happens if our objective is to attain only a block diagonal structure
\begin{equation}\label{eq:elsofelb}  \widetilde{B}=U^{-1}  H U,\end{equation}
where $\widetilde{B}$ consists of a $k \times k$ and a $(n-k) \times (n-k)$ block? 
It turns out that such  a family of unitary matrices $U \in U(n)$ can be chosen not only as a continuous, but an analytical function of $H$ as well.
Moreover, if $U$ is assumed to be of the form $U=e^{iS}$ with $S$ a block off-diagonal Hermitian matrix, the decomposition \eqref{eq:elsofelb} is essentially unique. It is formulated by our first statement as follows.

\begin{thm}[Exact SW decomposition, cf. Figure~\ref{fig:pikto}]\label{th:sw}
    Fix a diagonal matrix $H_0 \in \Sigma_k$ with increasing order of its eigenvalues along its diagonal. Then, there are neighborhoods $\mathcal{V}_0, \mathcal{W}_0 \subset \mbox{Herm}(n)$ of $H_0$  and a neighborhood $  \mathcal{X}_0 \subset \mbox{Herm}(n)$ of 0  such that for every $H \in \mathcal{V}_0$ there is a  unique decomposition
\begin{equation}\label{eq:koo}
H=e^{iS} \cdot \widetilde{B} \cdot e^{-iS},
\end{equation}
where $\widetilde{B}$ and $S$ are $n \times n $ Hermitian matrices with the following special properties: 
\begin{enumerate}
	\item $ \widetilde{B} \in \mathcal{W}_0$ is a block diagonal matrix with $k \times k$ and $(n-k) \times (n-k)$ blocks.
	\item $S \in \mathcal{X}_0$ is an off-block matrix, i.e., its $k \times k$ and $(n-k) \times (n-k)$ blocks along the diagonal are zero. Non-zero entries are in the $k\times (n-k)$ and $(n-k)\times k$ off-diagonal blocks. 
 \item The first $k$ columns of $e^{iS}$ span the sum of the eigenspaces of $H$ corresponding to the lowest $k$ eigenvalues.
\end{enumerate}

Furthermore, the dependence of $S$ and $\widetilde{B}$ on $H \in \mathcal{V}_0$ is (real) analytic.  
\end{thm}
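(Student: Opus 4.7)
The strategy is to set up the decomposition as a local inverse of an explicit analytic map and apply the real analytic implicit function theorem. Concretely, let $\mathcal{X}$ denote the (real) linear subspace of off-block Hermitian matrices (zero diagonal blocks of sizes $k$ and $n-k$) and let $\mathcal{W}$ denote the subspace of block-diagonal Hermitian matrices of matching size. Define
\begin{equation}
F:\mathcal{X}\times\mathcal{W}\to \mathrm{Herm}(n), \qquad F(S,\widetilde{B}) = e^{iS}\,\widetilde{B}\,e^{-iS}.
\end{equation}
This map is real analytic because the matrix exponential is. A direct dimension count gives $\dim\mathcal{X}+\dim\mathcal{W}=2k(n-k)+k^2+(n-k)^2=n^2=\dim\mathrm{Herm}(n)$, so $F$ is a candidate local diffeomorphism at $(0,H_0)$.

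Next I would compute the differential of $F$ at $(0,H_0)$. Using $\frac{d}{dt}\bigl(e^{itS}\widetilde{B}e^{-itS}\bigr)\big|_{t=0}=i[S,\widetilde{B}]$, one gets
\begin{equation}
dF_{(0,H_0)}(\delta S,\delta \widetilde{B}) = \delta\widetilde{B} + i[\delta S, H_0].
\end{equation}
Since $H_0$ is block-diagonal (its upper block is $\lambda I_k$ and its lower block is a diagonal matrix with entries strictly greater than $\lambda$), the commutator $[\delta S, H_0]$ with $\delta S$ off-block is again off-block. Writing $\delta S$ in terms of its $k\times(n-k)$ block $A$, the map sends $A_{ij}\mapsto A_{ij}(d^j_2-\lambda)$ where the $d_2^j$ are the last $n-k$ eigenvalues of $H_0$. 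All factors $d_2^j-\lambda$ are nonzero by the $k$-fold ground-state degeneracy assumption, so the commutator map is a bijection $\mathcal{X}\to\mathcal{X}$. Combined with the fact that the block-diagonal part of the image is $\delta\widetilde{B}$, this shows $dF_{(0,H_0)}$ is a linear isomorphism.

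By the real analytic inverse function theorem, $F$ is a real analytic diffeomorphism from an open neighborhood $\mathcal{X}_0\times\mathcal{W}_0$ of $(0,H_0)$ onto an open neighborhood $\mathcal{V}_0$ of $H_0$ in $\mathrm{Herm}(n)$. This produces the existence, uniqueness (within $\mathcal{X}_0\times\mathcal{W}_0$), and analytic dependence of the pair $(S,\widetilde{B})$ on $H\in\mathcal{V}_0$.

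Finally I would verify property (3) on eigenspaces after possibly shrinking $\mathcal{V}_0$. Because the eigenvalues depend continuously on the matrix, shrinking $\mathcal{V}_0$ (hence $\mathcal{W}_0$) we can arrange that the $k$ lowest eigenvalues of $\widetilde{B}$ are exactly the eigenvalues of its upper $k\times k$ block, as this block stays close to $\lambda I_k$ while the lower block stays close to $\mathrm{diag}(\lambda_{k+1},\dots,\lambda_n)$ with a strict spectral gap. Therefore the corresponding eigenvectors of $\widetilde{B}$ lie in $\mathrm{span}(e_1,\dots,e_k)$, and applying the intertwiner $e^{iS}$ yields that $e^{iS}e_1,\dots,e^{iS}e_k$ — i.e.\ the first $k$ columns of $e^{iS}$ — span the lowest $k$ eigenspaces of $H$. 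The main subtle point is the injectivity of the commutator map $\mathrm{ad}_{H_0}$ restricted to off-block matrices, which is precisely where the hypothesis that $H_0$ lies in $\Sigma_k$ with eigenvalues ordered increasingly is used; everything else is a routine application of the analytic implicit function theorem together with a neighborhood-shrinking argument for the spectral gap.
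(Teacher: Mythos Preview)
Your proof is correct and follows essentially the same route as the paper: define the analytic map $(S,\widetilde{B})\mapsto e^{iS}\widetilde{B}e^{-iS}$, compute its differential at the base point as $\delta\widetilde{B}+i[\delta S,H_0]$, use that $\mathrm{ad}_{H_0}$ is a bijection on off-block matrices (because $\lambda_b-\lambda_a\neq 0$ across the blocks) to conclude invertibility, apply the analytic inverse function theorem, and then shrink neighborhoods using continuity of eigenvalues to ensure the lowest-$k$-eigenspace property. The only cosmetic difference is that the paper shifts variables to $\breve{B}=\widetilde{B}-H_0$ so the base point is $(0,0)$, but the argument is otherwise identical.
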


Actually, the theorem gives an exact formulation of the well-known method called SW transformation \cite{luttinger1955motion, SW, BravyiSW, Winkler, day2024pymablock}, by breaking up $\widetilde{B}$ into parts according to the blocks as follows, cf. Figure~\ref{fig:pikto}:
\begin{equation}\label{eq:koo}
H=e^{iS} \cdot (H_0+B+T+H_{\textup{eff}}) \cdot e^{-iS},
\end{equation}
where $T$, $B$, $H_{\textup{eff}}$ and $S$ are $n \times n $ Hermitian matrices with the following special properties: 
\begin{enumerate}
	\item $ B$ is block diagonal matrix and it has non-zero entries only in the $(n-k) \times (n-k)$ bottom right block.
 \item $T$ is a scalar matrix in the $k \times k$ upper left block and all the other entries are zero.
 \item The traceless \emph{effective Hamiltonian}  $H_{\textup{eff}}$ has a traceless $k \times k$ block and all the other entries are zero.
	\item $S $ is an off-block matrix, see above (point (2) in Theorem~\ref{th:sw}). 
\end{enumerate}
Assuming that $(S, T, B, H_{\textup{eff}})$ is in a sufficiently small neighborhood of $(0,0,0,0)$, then the decomposition is unique, and 
the dependence of $S$, $T$, $B$ and $H_{\textup{eff}}$  on $H \in \mathcal{V}_0$ is (real) analytic, by Theorem~\ref{th:sw}.  Note that in the literature the effective Hamiltonian  is usually defined together with its trace, i.e., $H_{\textup{eff}}+T $.

The proof of Theorem~\ref{th:sw}, and hence the unique decomposition \eqref{eq:koo} is based on the analytic inverse function theorem, see  Section~\ref{ss.prsw}.
The unitary transformation $e^{iS}$ is described in \cite{BravyiSW}  as a `direct rotation' between the eigenspaces of $H$ and $H_0$ corresponding to the lowest $k$ eigenvalues. In the \hyperref[ss:Bravyi]{Appendix} we clarify the relation of \cite{BravyiSW} with Theorem~\ref{th:sw}, and we adapt the results of \cite{BravyiSW} to specify the validity range of decomposition \eqref{eq:koo}. We also clarify the relation of $e^{iS}$ with the parallel transport in Appendix~\ref{app:para}.

\begin{figure}
	\begin{center}		\includegraphics[width=0.5\columnwidth]{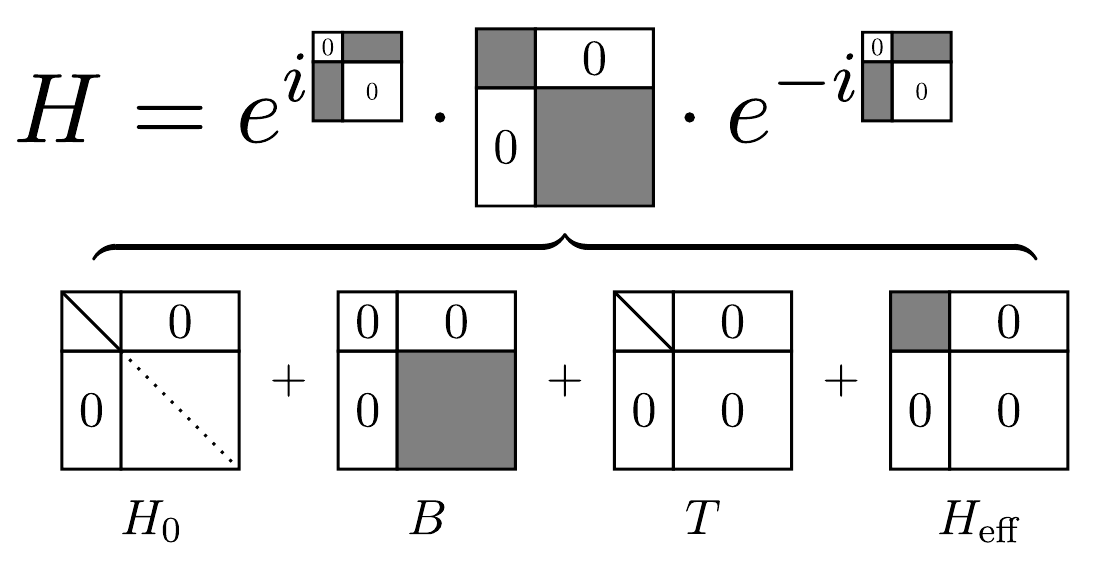}
	\end{center}
	\caption{The exact SW decomposition \eqref{eq:koo}. By Theorem~\ref{th:sw} the block diagonalization of a matrix $H$ close enough to $H_0$ is unique, if the unitary matrix has the special form $e^{iS}$, where $S$ is off-block, and both $S$ and the block diagonal matrix are sufficiently close to 0. Decomposing the block diagonal matrix into parts gives the `non-interesting block' $B$, the scalar matrix $T$ in the `interesting block' carrying the trace of this block, and the traceless effective Hamiltonian $H_{\textup{eff}}$. 
	\label{fig:pikto}}
\end{figure}

By the convention introduced in Section~\ref{ss:spaceherm}, $H_{\textup{eff}}$ can be considered as a traceless $k \times k$ Hermitian matrix, that is, $H_{\textup{eff}} \in \mbox{Herm}_0(k) \subset \mbox{Herm}(n)$. Let $y_1, \dots, y_{k^2-1}$ denote its coordinates in the orthonormal basis $\breve{c}_1, \dots, \breve{c}_{k^2-1}$ of $ \mbox{Herm}_0(k)$ introduced in Section~\ref{s:pre}. 
For example for two-fold degeneracy ($H_0 \in \Sigma_2$), the effective Hamiltonian is expressed as 
\begin{equation}
    H_{\textup{eff}}=
    y_1 \breve{c}_1+y_2 \breve{c}_2+y_3 \breve{c}_3
    =\frac{1}{\sqrt{2}}(y_1 \sigma_x + y_2 \sigma_y + y_3 \sigma_z).
\end{equation}
The number of the (possibly) nonzero coordinates of the matrices $B$, $T$ and $S$ in the canonical basis of $\mbox{Herm}(n)$ is
\begin{enumerate}
    \item $B$: \ $(n-k)^2$,
    \item $T$: \ $1$,
    \item $S$: \ $2 \cdot k \cdot (n-k)$.
\end{enumerate}
Together these are $n^2-k^2+1$ coordinates, let $x_1, \dots, x_{n^2-k^2+1}$ denote them. 

\begin{cor}[SW decomposition induces a local chart, cf. Figure~\ref{fig:chart}]\label{co:locchart} \begin{enumerate}[label=(\alph*)]
    \item The map $\varphi: \mathcal{V}_0 \to \R^{n^2}$ with coordinate functions $x_i$ and $y_j$ is a local chart on $\mbox{Herm}(n)$ around $H_0$, and $\varphi$ is analytic in the canonical coordinates on $\mbox{Herm}(n)$.
    \item In this local chart the set $\Sigma_k$ of $k$-fold ground state degenerate matrices is the common zero locus of the coordinates $y_j$, that is,
\begin{equation}\label{eq:locchart}
    \Sigma_k \cap \mathcal{V}_0 = \{ H \in \mathcal{V}_0 \ | \ y_1(H)=y_2(H)= \dots =y_{k^2-1}(H)=0 \}.
    \end{equation}
\end{enumerate}
    \end{cor}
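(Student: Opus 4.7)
The plan is to deduce both parts directly from Theorem~\ref{th:sw}: Part (a) is essentially a dimension count combined with the analyticity already supplied by the theorem, and Part (b) requires one short eigenvalue-separation argument.

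For Part (a), I would start by checking the dimension count. The $x_i$ collect the (possibly) nonzero canonical-basis coordinates of $(B,T,S)$, totalling $(n-k)^2 + 1 + 2k(n-k) = n^2 - k^2 + 1$ real parameters, while the $y_j$ are the $k^2-1$ coordinates of $H_{\textup{eff}} \in \mathrm{Herm}_0(k)$ in the orthonormal basis $\breve{c}_1,\dots,\breve{c}_{k^2-1}$; together this is $n^2 = \dim_\R \mathrm{Herm}(n)$. Theorem~\ref{th:sw} already provides the analytic bijection $H \mapsto (S, B, T, H_{\textup{eff}})$ on $\mathcal{V}_0$ with analytic inverse $(S,B,T,H_{\textup{eff}}) \mapsto e^{iS}(H_0 + B + T + H_{\textup{eff}})e^{-iS}$. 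Extracting coordinates in the chosen orthonormal bases is a linear isomorphism on each factor, so the composed map $\varphi: \mathcal{V}_0 \to \R^{n^2}$ is an analytic diffeomorphism onto an open subset of $\R^{n^2}$, and hence a local analytic chart on $\mathrm{Herm}(n)$ around $H_0$.

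For Part (b), I would prove the equivalence $H \in \Sigma_k \cap \mathcal{V}_0 \iff H_{\textup{eff}} = 0$. For the ``$\Leftarrow$'' direction, if $H_{\textup{eff}} = 0$ then $\widetilde{B} = H_0 + B + T$ is block diagonal and its upper $k\times k$ block equals the scalar matrix $(\lambda_1(H_0) + t)\,I_k$, where $t$ is the scalar value carried by $T$. Thus $\widetilde{B}$, and hence the unitarily conjugate $H$, has an eigenvalue of multiplicity at least $k$. Shrinking $\mathcal{V}_0$ if necessary, continuity of the ordered eigenvalues (as recalled in Section~\ref{ss:preldegen} via Weyl's inequality) keeps the eigenvalues of the lower $(n-k)\times(n-k)$ block of $\widetilde B$ close to $\lambda_{k+1}(H_0),\dots,\lambda_n(H_0)$, all strictly above the upper-block eigenvalue; so this is a ground-state degeneracy and $H \in \Sigma_k$.

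The reverse direction is the one genuinely nontrivial step. Assume $H \in \Sigma_k$, so the unitarily conjugate $\widetilde B$ also has $k$-fold ground-state degeneracy. Using the same block-eigenvalue separation (with $\mathcal{V}_0$ small enough), all $k$ lowest eigenvalues of $\widetilde B$ must come from its upper-left $k \times k$ block. A $k\times k$ Hermitian matrix whose $k$ eigenvalues all coincide is a scalar multiple of the identity, so the upper-left block of $\widetilde B$ equals its scalar ($T$-)part and the traceless remainder $H_{\textup{eff}}$ vanishes, giving $y_1=\dots=y_{k^2-1}=0$. The main obstacle throughout is ensuring $\mathcal{V}_0$ is small enough for the block-eigenvalue-separation to work uniformly; this is a routine continuity argument that I would make explicit up front and then reuse in both implications.
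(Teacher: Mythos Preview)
Your proposal is correct and follows essentially the same approach as the paper. The paper's proof is even terser: for (a) it simply cites the analytic inverse function theorem argument from Theorem~\ref{th:sw}, and for (b) it just says $y_j=0$ means $H_{\textup{eff}}=0$, hence $H\in\Sigma_k$ by the decomposition~\eqref{eq:koo}. One small redundancy in your write-up: you re-derive the block-eigenvalue separation (``shrinking $\mathcal{V}_0$ if necessary''), but this is already packaged into property~(3) of Theorem~\ref{th:sw}, which guarantees that the eigenvalues of the upper $k\times k$ block are exactly the lowest $k$ eigenvalues of $H$; citing that property directly would streamline both implications in~(b).
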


    \begin{figure}
	\begin{center}
		\includegraphics[width=0.8\columnwidth]{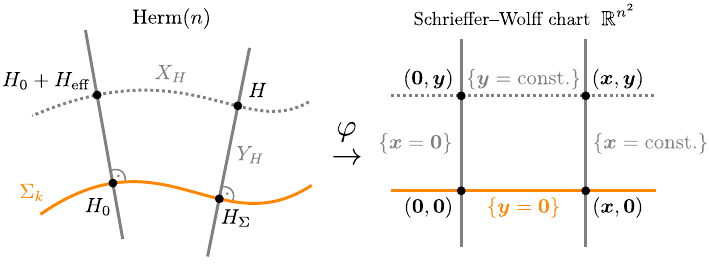}
	\end{center}
	\caption{A schematic picture showing the local chart induced by the SW decomposition around the fixed $H_0$. A matrix $H \in \mathcal{V}_0$ is endowed with two sets of coordinates: (1) the projection $H_{\text{proj}}=H_{\Sigma}$ of $H$ to $\Sigma_k$ is described by the coordinates of $S$, $B$ and $T$ (these are $n^2-k^2+1$ coordinates denoted by $x_i$), (2) the traceless effective Hamiltonian $H_{\textup{eff}}$ is described by $k^2-1$  coordinates denoted by $y_i$. The constant level set of the $y$ and $x$ coordinates through  $H$ is denoted by $X_H$ and $Y_H$, respectively. The $Y_H$ sets are affine subspaces in $\mbox{Herm}(n)$ of dimension $k^2-1$. As an essential step of the proof of Theorem~\ref{th:distance} (see Proposition~\ref{pr:YHorthog}) we will show that the line joining $H$ and $H_{\Sigma}$ is orthogonal to $\Sigma_k$ at $H_{\Sigma}$, or equivalently, the $Y_H$ subspaces are orthogonal to $\Sigma_k$, moreover, $H_{\Sigma}$ is the closest point of $\Sigma_k$ to $H$.
	\label{fig:chart}}
\end{figure}
    
This chart shows that $\Sigma_k$ is a codimension $k^2-1$ submanifold in $\mbox{Herm}(n)$, providing an alternative proof for the Neumann--Wigner theorem (whose original proof is summarised in  Table~\ref{tab:neumann}).

Consider the projection $H_{\textup{proj}}$ of $H$ to $\Sigma_k$  by omitting $H_{\text{eff}}$ from the SW decomposition \eqref{eq:koo}, that is, 
\begin{equation}\label{eq:koo0}
H_{\textup{proj}}=e^{iS} \cdot (H_0+B+T) \cdot e^{-iS}.
\end{equation}
This can be rephrased in the SW chart picture as making the $y_j$ coordinates 0, see Figure~\ref{fig:chart}. We show that  $H_{\textup{proj}}$ can be constructed without using the SW decomposition. The following construction works for the large set $H \in \mbox{Herm}(n) \setminus \Sigma_{[k,k+1]}$ of Hermitian matrices, not only on the domain $\mathcal{V}_0$ of the SW decomposition around an $H_0$. 
Let $\Lambda=U^{-1}HU$ be a diagonalization of $H$, containing the eigenvalues $\lambda_1 \leq \lambda_2 \leq \dots \leq \lambda_k <  \lambda_{k+1} \leq \dots \leq \lambda_n $ of $H$. 
Let
\begin{equation}
    \overline{\lambda}=\frac{\sum_{j=1}^k \lambda_j}{k}.
\end{equation} 
be the mean of the lowest $k$ eigenvalues. 
Let $\Lambda_{\Sigma}$ be the diagonal matrix  obtained from $\Lambda$ by replacing the first $k$ entries $\lambda_j$ ($j=1,\dots, k$) with $\overline{\lambda}$. Define 
\begin{equation}\label{eq:hsigma}
    H_{\Sigma}=U \cdot \Lambda_{\Sigma} \cdot U^{-1},
\end{equation}
the matrix obtained from $H$ by `collapsing the lowest $k$ eigenvalues'. Observe that $H_{\Sigma} \in \Sigma_k$. In Section~\ref{ss.prsw} we show that $H_{\Sigma}$ does not depend on the choice of the unitary matrix $U$, and $H_{\Sigma}$ depends on $H$ in analytic way, although $U$ cannot be chosen continuously. The following statement will be also proved in Section~\ref{ss.prsw}.

\begin{thm}[Projection to $\Sigma_k$, cf. Figure~\ref{fig:chart}]\label{th:proj} If the SW decomposition \eqref{eq:koo} of $H$ with respect to $H_0 \in \Sigma_k$ is defined, then it holds that $H_{\Sigma}=H_{\textup{proj}}$.
\end{thm}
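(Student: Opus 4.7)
The plan is to reduce the computation of $H_\Sigma$ to a block-diagonal calculation performed inside the SW frame, and to show that collapsing the lowest $k$ eigenvalues to their mean produces exactly the block-diagonal piece $H_0 + B + T$ that defines $H_{\textup{proj}}$.

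First I would use the SW decomposition $H = e^{iS}\widetilde{B}\,e^{-iS}$, with $\widetilde{B} = H_0 + B + T + H_{\textup{eff}}$ block diagonal, to produce an explicit diagonalization of $H$. Choosing a block-diagonal unitary $V = \mathrm{diag}(V_1, V_2)$ that diagonalizes the upper $k\times k$ and lower $(n-k)\times(n-k)$ blocks of $\widetilde{B}$ separately, I set $U = e^{iS} V$ so that $\Lambda := U^{-1} H U$ is diagonal. Property (3) of Theorem~\ref{th:sw} guarantees that the first $k$ columns of $e^{iS}$, and hence of $U$, span the eigenspace of the lowest $k$ eigenvalues of $H$, so the entries of $\Lambda$ corresponding to the upper block are exactly $\lambda_1,\dots,\lambda_k$.

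The next step is to identify $\overline{\lambda}$ with the scalar of the upper block of $H_0 + T$. Since that upper block of $\widetilde{B}$ equals a scalar multiple of $I_k$ plus the traceless piece $H_{\textup{eff}}^{(k,k)}$, its trace equals $k$ times that scalar, so $\overline{\lambda}$ coincides with the scalar. Forming $\Lambda_\Sigma$ from $\Lambda$ by replacing the first $k$ diagonal entries with $\overline{\lambda}$, the upper block of $\Lambda_\Sigma$ becomes $\overline{\lambda}\, I_k$, which is invariant under conjugation by $V_1$. A block-by-block comparison then identifies $V\Lambda_\Sigma V^{-1}$ with $H_0 + B + T$: the upper $k\times k$ block is $\overline{\lambda}\, I_k$ on both sides, and the lower $(n-k)\times(n-k)$ block is the bottom block of $\widetilde{B}$ on both sides (since $B$ lives only there and $T$, $H_{\textup{eff}}$ vanish there). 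Conjugating by $e^{iS}$ then gives $H_\Sigma = e^{iS}(H_0 + B + T)e^{-iS} = H_{\textup{proj}}$.

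The main preparatory point I expect to be the obstacle is verifying that $H_\Sigma$ defined by \eqref{eq:hsigma} is genuinely independent of the diagonalizing unitary $U$, so that I am free to use the specific $U = e^{iS} V$ above rather than an arbitrary one. This is the well-definedness claim flagged in Section~\ref{ss.prsw}; it follows because any ambiguity in the eigenbasis of the degenerate $k$-fold block is a $U(k)$ action that stabilises the scalar matrix $\overline{\lambda}\, I_k$, while the remaining columns are only rephased within distinct eigenspaces. With independence secured, the remainder of the argument is a mechanical block identification.
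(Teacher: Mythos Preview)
Your proof is correct and rests on the same two facts the paper uses: property~(3) of Theorem~\ref{th:sw} to identify the low-energy subspace, and the tracelessness of $H_{\textup{eff}}$ to pin down $\overline{\lambda}$ as the scalar in the upper block. The packaging differs slightly. You build an explicit diagonalizer $U=e^{iS}V$ with $V$ block-diagonal, compute $H_\Sigma$ in that frame, and then must separately argue that $H_\Sigma$ is independent of the choice of $U$. The paper instead works intrinsically with the invariant subspaces $\mathcal{P}$ and $\mathcal{P}^{\perp}$ (spanned by the first $k$ and last $n-k$ columns of $e^{iS}$): it observes that both $H_\Sigma$ and $H_{\textup{proj}}$ act as $\overline{\lambda}\cdot\mathrm{id}$ on $\mathcal{P}$ and as $H$ itself on $\mathcal{P}^{\perp}$, so they coincide. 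This avoids ever choosing a full eigenbasis and hence sidesteps the well-definedness check you flagged. Your route is a bit more concrete; the paper's is a bit cleaner for exactly the reason you anticipated as the obstacle.
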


\subsection{Energy splitting and the distance from $\Sigma_k$} 

A classical problem is to find the closest matrix for a given matrix with prescribed properties. 
A well-known example is the Eckart--Young--Mirsky theorem \cite{EckartYoung,Mirsky}: for a given real matrix of size $n \times m$, in Frobenius metric the closest matrix of rank at most $r$ can be obtained from the singular value decomposition by replacing the lowest $\min(m,n)-r$ singular values with 0. Similar results can be found e.g. in \cite{Breiding}.
Here, we show that for a given hermitian matrix $H$, it holds that $H_{\textup{proj}}=H_{\Sigma}$ is the closest matrix with $k$-fold ground state degeneracy, and we express the distance in terms of the standard deviation of the eigenvalues. More generally, our method finds the closest matrix with given degeneracy type.

Recall the distance of two matrices $H, G \in \mbox{Herm}(n)$ induced by the Frobenius metric is denoted by $d(H,G)=\| H-G\|$. The distance $d(H,A)$  of an element $H \in \mbox{Herm}(n)$ and a subset $A \subset \mbox{Herm}(n)$  is the infimum of the distances $d(H,G)$, $G \in A$. 

Consider the standard deviation of the lowest $k$ eigenvalues of $H$, that is, 
\begin{equation}
    D_k(H)=\sigma(\lambda_1, \dots, \lambda_k)=
    \sqrt{\frac{\sum_{j=1}^k  ( \lambda_j- \overline{\lambda} )^2}{k}},
\end{equation}
where $\overline{\lambda}=\sum_{j=1}^k \lambda_j/k$ is the mean of the eigenvalues. 

\begin{thm}[Distance from $\Sigma_k$]\label{th:distance}  
For every $H \in \mbox{Herm}(n) \setminus \Sigma_{[k, k+1]}$
      \begin{equation}\label{eq:distance} 
      d(H, \Sigma_k )= d(H, H_{\Sigma})  = \sqrt{k} \cdot D_k(H)  = \| H_{\textup{eff}} \|
      \end{equation}
holds, where $H_{\textup{eff}}$ is the effective Hamiltonian of $H$ with respect to any $H_0 \in \Sigma_k$ for which the exact SW decomposition \eqref{eq:koo} of $H$ is defined.
\end{thm}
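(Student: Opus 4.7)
The plan is to split the theorem into three separate equalities and handle them in order of increasing difficulty: first $d(H,H_\Sigma) = \sqrt{k}\,D_k(H)$, then $\|H_{\textup{eff}}\| = d(H,H_\Sigma)$, and finally the genuine content, the global minimization $d(H,\Sigma_k) = d(H,H_\Sigma)$.

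First I would compute $d(H,H_\Sigma)$ directly. Diagonalizing $H = U\Lambda U^{-1}$ with $\Lambda$ diagonal, the matrix $H_\Sigma = U\Lambda_\Sigma U^{-1}$ is built via the same $U$ by definition~\eqref{eq:hsigma}. By the unitary invariance of the Frobenius norm (Lemma~\ref{le:uniact}),
\begin{equation}
\|H - H_\Sigma\|^2 = \|\Lambda - \Lambda_\Sigma\|^2.
\end{equation}
Now $\Lambda - \Lambda_\Sigma$ is diagonal with entries $\lambda_j - \overline{\lambda}$ for $j \le k$ and zero otherwise, so the right-hand side equals $\sum_{j=1}^k (\lambda_j - \overline{\lambda})^2 = k\, D_k(H)^2$. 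For the identity $\|H_{\textup{eff}}\| = d(H, H_\Sigma)$, Theorem~\ref{th:proj} identifies $H_\Sigma$ with $H_{\textup{proj}}$, and subtracting~\eqref{eq:koo0} from~\eqref{eq:koo} yields $H - H_\Sigma = e^{iS} H_{\textup{eff}} e^{-iS}$; invoking unitary invariance once more closes this step.

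The substantive piece is $d(H, \Sigma_k) = d(H, H_\Sigma)$. The inequality $\le$ is immediate since $H_\Sigma \in \Sigma_k$. For the reverse direction, I would invoke the Hoffman--Wielandt inequality for Hermitian matrices, which, with eigenvalues of both matrices sorted in increasing order, reads
\begin{equation}
\|H - G\|^2 \;\ge\; \sum_{i=1}^n \bigl(\lambda_i(H) - \lambda_i(G)\bigr)^2 \qquad \text{for every } G \in \mbox{Herm}(n).
\end{equation}
If $G \in \Sigma_k$ (or more generally in $\mbox{cl}(\Sigma_k)$), then $\lambda_1(G) = \cdots = \lambda_k(G) = \mu$ for some common value $\mu$, so the first $k$ terms contribute $\sum_{j=1}^k (\lambda_j(H) - \mu)^2$. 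This expression is minimized over $\mu$ at $\mu = \overline{\lambda}$, yielding $k\, D_k(H)^2$. Dropping the remaining nonnegative summands produces $\|H - G\|^2 \ge k\, D_k(H)^2 = d(H, H_\Sigma)^2$, completing the proof.

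The main obstacle I anticipate is making the global minimization watertight: one must cite Hoffman--Wielandt in its sorted form (so that the $k$ smallest eigenvalues of $G$ are indeed matched to $\lambda_1(H),\dots,\lambda_k(H)$) and then verify that the infimum defining $d(H, \Sigma_k)$ is actually attained, which follows from closedness and the bound just established. The alternative route hinted at in Figure~\ref{fig:chart} --- verifying via Proposition~\ref{pr:YHorthog} that the affine slice $Y_H$ meets $\Sigma_k$ orthogonally at $H_\Sigma$ --- gives only local criticality, and one would still need a convexity or global argument to upgrade it to the claimed global minimum; the Hoffman--Wielandt shortcut avoids this upgrade entirely.
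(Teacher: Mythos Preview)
Your proof is correct and takes a genuinely different route from the paper's. The easy equalities $d(H,H_\Sigma)=\sqrt{k}\,D_k(H)=\|H_{\textup{eff}}\|$ are handled identically (this is the paper's Proposition~\ref{co:distHsigma}), but for the global minimization $d(H,\Sigma_k)=d(H,H_\Sigma)$ the paper proceeds geometrically: it first shows that any minimum of $d_H|_{\Sigma_k}$ must lie on a line through $H$ orthogonal to $\Sigma_k$ (Proposition~\ref{pr:criti}), then characterizes all such lines as the $L_H^I$ constructed by collapsing an arbitrary $k$-subset $I$ of eigenvalues (Propositions~\ref{pr:YHorthog2}, \ref{pr:YorthogOnly}), then eliminates every candidate $H_\Sigma^I$ with $I\neq\{1,\dots,k\}$ by an explicit triangle-inequality construction (Proposition~\ref{pr:notclosest}), and finally handles the compactness issue by passing to $\mbox{cl}(\Sigma_k)$ and ruling out the boundary strata (Propositions~\ref{pr:closure}, \ref{pr:vanlegkozelebbi}).

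Your Hoffman--Wielandt shortcut bypasses this entire critical-point analysis and delivers the global lower bound in one line, which is considerably cleaner. The trade-off is that the paper's longer argument is self-contained (no external matrix-analysis inequality) and yields extra information along the way: it identifies \emph{all} critical points of $d_H|_{\Sigma_k}$, proves \emph{uniqueness} of the closest point, and develops the tangent-space description (Lemma~\ref{le:tangspace}) that feeds into later results. Your closing remark about the orthogonality route giving only local criticality is exactly the obstacle the paper confronts and overcomes through Propositions~\ref{pr:YorthogOnly}--\ref{pr:vanlegkozelebbi}.
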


We prove the theorem in Section~\ref{ss:dist}. 
The  first equation expresses that $H_{\Sigma}$ is the closest point of $\Sigma_k$  to $H$.  
We note that the first and second equations hold also for $H \in \Sigma_{[k,k+1]}$, although the projection $H_{\Sigma}$ is not unique, as it depends on the choice of $U$.

In particular, for the special case $k=2$, Eq.~\eqref{eq:distance}  takes the following form:
\begin{equation}\label{eq:distance2}  d(H, \Sigma_2)= d(H, H_{\Sigma} ) = \frac{1}{\sqrt{2}} \cdot | \lambda_2-\lambda_1 |=\| H_{\textup{eff}} \|,
      \end{equation}
       where $\lambda_1$ and $\lambda_2$ are the lowest two eigenvalues of $H$.

\subsection{Order of energy splitting of a degeneracy due to a perturbation}
For $k >2$ one can also consider  the pairwise differences $\lambda_i-\lambda_j$. 
We describe these functions only for one-parameter families.
 Consider a one-parameter perturbation of $H_0 \in \Sigma_k$, also called a one-parameter family, or a curve of Hermitian matrices. This is an analytic function\footnote{For simplicity, we formulate the results for analytic families, but smooth ($\mathcal{C}^{\infty}$) is enough, with the obvious modification of the statements.} $H: \R \to \mbox{Herm}(n)$ defined in a neighborhood of $0 \in \R$ with $H(0)=H_0 \in \Sigma_k$. An example is a linear perturbation in the form $H(t)=H_0 + tH_1$ with a choice of $0\neq H_1 \in \mbox{Herm}(n)$.  
 (Note the slight abuse of notation: so far $H$ denoted a single matrix, here it denotes a map into the matrix space.)

The frequently used quantity `order of energy splitting' can be defined in several different ways. On the one hand, the energy splitting along $H(t)$ is measured by the standard deviation function 
\begin{equation}
\label{eq:orderofstddev}
\mathfrak{s}_k(t):= D_k(H(t)). 
\end{equation}
    By Theorem~\ref{th:distance} this agrees with the distance function up to a scalar factor, that is,     \begin{equation}\label{eq:splitdist} 
    \sqrt{k} \cdot \mathfrak{s}_k(t)= d(H(t), \Sigma_k). \end{equation} 
    
    On the other hand, one can consider the pairwise differences of the eigenvalues, and possibly define the order of the energy splitting either as (1) the minimum of the orders of the pairwise differences or (2) the minimum of the orders of the differences of the neighbours or (3) the order of the difference of the two extrema $\lambda_1$ and $\lambda_k$ or (4) the minimum of the orders of the differences from the mean value.
    Theorem~\ref{th:ordspl} below clarifies that these orders  are equal.
    Moreover, any of them agrees with the order of the standard deviation of the lowest $k$ eigenvalues, and hence, the order of the distancing of $H(t)$ from $\Sigma_k$.

   To formalize the statements of the previous paragraph, we define the pairwise energy splitting functions $\mathfrak{s}_{i,j}$ and
   the splitting from the mean of the eigenvalues $\overline{\mathfrak{s}}_i$, respectively, as 
    \begin{eqnarray}
    \mathfrak{s}_{i,j}(t)&=& \lambda_i(t)-\lambda_j(t),\\
    \overline{\mathfrak{s}}_i(t)&=& \lambda_i(t)- 
    \overline{\lambda}(t),
    \end{eqnarray}
    where $1 \leq i < j \leq k$.     
    Although the functions $\mathfrak{s}_{i,j}$ and $\overline{\mathfrak{s}}_i$ defined in this way are not differentiable at $t=0$ in general (cf. Figure~\ref{fig:order}), for positive $t$ they are given by a power series of $t$ centered at 0, hence their order is well defined in this sense. Indeed, they can be made analytic around 0 by a slightly modified definition, presented in Section~\ref{ss:split}. For this, we use the following well-known fact, and we provide a new proof for it in Section~\ref{ss:split}, based on the SW decomposition:
    
    \begin{thm}[The eigenvalues are analytic]\label{th:anal} 
        The eigenvalues of one-parameter analytic Hermitian matrix families form analytic functions, after a suitable re-ordering of the indexing for negative $t$.
    \end{thm}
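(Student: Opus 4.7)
The plan is to induct on the matrix size $n$, using Theorem~\ref{th:sw} to peel off one analytic block at each step. The base case $n=1$ is immediate, since the single entry is itself the unique eigenvalue and is analytic by assumption. For the inductive step I fix the point at which analyticity is tested and, after a shift, assume it is $t=0$; the argument then splits into two cases according to the structure of $H(0)$.

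If $H(0)$ has at least two distinct eigenvalues, I conjugate $H(t)$ by a constant unitary $U$ that diagonalizes $H(0)$ with eigenvalues in increasing order. The resulting family $\widehat H(t) = U^{\dagger} H(t) U$ is analytic and satisfies the hypotheses of Theorem~\ref{th:sw} with $k$ equal to the multiplicity of the lowest eigenvalue of $H(0)$. For $t$ near $0$ the exact SW decomposition of $\widehat H(t)$ provides an analytic block-diagonalization into a $k\times k$ and an $(n-k)\times(n-k)$ block, both depending analytically on $t$. Because $1 \le k < n$, the inductive hypothesis applies to each block separately and yields analytic eigenvalue branches whose union gives the eigenvalues of $H(t)$.

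If instead $H(0) = \mu I_n$ is scalar, I expand $H(t) = \sum_{m\ge 0} a_m t^m$ as a convergent Taylor series of Hermitian matrices. When every $a_m$ is a scalar multiple of the identity, $H(t) = f(t) I_n$ with $f$ analytic, and all eigenvalues equal $f(t)$. Otherwise I let $m_0 \ge 1$ be the smallest index for which $a_{m_0}$ is not a scalar multiple of the identity, and decompose $H(t) = g(t) I_n + t^{m_0} K(t)$, where $g(t) := \sum_{m < m_0} \mu_m t^m$ collects the initial scalar coefficients (so $a_m = \mu_m I_n$ for $m < m_0$) and $K(t) := \sum_{j\ge 0} a_{m_0+j} t^j$ is analytic with $K(0) = a_{m_0}$ non-scalar. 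Since $K(0)$ Hermitian and non-scalar has at least two distinct eigenvalues, the previous case applies to $K$ and produces analytic eigenvalue branches $\nu_1(t),\ldots,\nu_n(t)$ of $K(t)$; the eigenvalues of $H(t)$ are then $g(t) + t^{m_0}\nu_i(t)$, again analytic.

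The main obstacle I anticipate is bookkeeping rather than a substantive analytic difficulty. The construction above produces analytic branches on an open interval around $0$, but the increasing-order convention $\lambda_1 \le \cdots \le \lambda_n$ used in the theorem statement may assign a single ordered label to different analytic branches on the two sides of $0$ whenever two branches cross at $t=0$ and then separate with opposite signs. Relabeling the ordered eigenvalues for negative $t$ by a suitable permutation --- the ``suitable re-ordering'' in the statement --- realigns the ordered indices with the continuous analytic branches. Uniqueness of this relabeling up to a permutation follows from the fact that the analytic branches are the roots of the characteristic polynomial of $H(t)$, whose coefficients are themselves analytic in $t$.
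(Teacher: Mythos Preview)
Your argument is correct and is organized differently from the paper's. Both rely on Theorem~\ref{th:sw} as the engine, but the paper iterates within a fixed degenerate block: it forms the traceless effective Hamiltonian $H_{\textup{eff}}(t)$, divides by $t$ to get $H^{(1)}(t)$, and repeats, with a separate ``Case~3'' to handle the possibility that the iteration never terminates (in which case the effective Hamiltonian is divisible by every power of $t$ and hence vanishes identically). You instead induct on the matrix size $n$: whenever $H(0)$ is non-scalar, the SW block-diagonalization strictly reduces both block sizes, and the scalar case is dispatched by factoring out the leading scalar part $g(t)I_n$ and the power $t^{m_0}$ before re-entering the non-scalar case. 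Your structure has the advantage that the induction is manifestly finite and no infinite-iteration argument is needed; the paper's version has the advantage that it tracks more closely how the eigenvalue branches separate order by order in $t$ (which is what the surrounding discussion of the order of energy splitting uses). The re-ordering remark at the end is adequate, though you might note explicitly that multiplying by $t^{m_0}$ with $m_0$ odd reverses the ordering of the branches $\nu_i$ for $t<0$, which is exactly where the permutation enters.
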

      Based on these preliminaries, the equality of the order of the standard deviation function defined in Eq.~\eqref{eq:orderofstddev}, and the further orders listed as (1)-(4) below that, is formalized in the following theorem. 
 
    \begin{thm}[Order of energy splitting]\label{th:ordspl}
\begin{eqnarray}
\mathrm{ord}_0 (\mathfrak{s}_k)&=&\min_{ 1\leq i < j \leq k} \{ \mathrm{ord}_0 (\mathfrak{s}_{i,j})\}\\
&=&\min_{ 1 \leq i \leq k-1} \{ \mathrm{ord}_0 (\mathfrak{s}_{i,i+1})\}\nonumber\\
&=&\mathrm{ord}_0 (\mathfrak{s}_{1,k})\nonumber\\
&=&\min_{ 1 \leq i \leq k} \{ \mathrm{ord}_0 (\overline{\mathfrak{s}}_{i})\}\nonumber ,
\end{eqnarray}
where $\mathrm{ord}_0 (f)$ denotes the order of the function $f$ at $t=0$.
    \end{thm}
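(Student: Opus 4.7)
The plan is to establish the chain of five equalities by combining Theorem~\ref{th:anal} with elementary identities among the eigenvalue-difference functions and a positivity-of-squares argument. By Theorem~\ref{th:anal} and the analytic setup promised in Section~\ref{ss:split}, I may treat the ordered eigenvalues $\lambda_i(t)$, the mean $\overline{\lambda}(t)=\tfrac{1}{k}\sum_{i=1}^k\lambda_i(t)$, and hence also $\mathfrak{s}_{i,j}=\lambda_i-\lambda_j$, $\overline{\mathfrak{s}}_i=\lambda_i-\overline{\lambda}$, and $\mathfrak{s}_k(t)^2=\tfrac{1}{k}\sum_{i=1}^k\overline{\mathfrak{s}}_i(t)^2$ as real-analytic functions of $t$ on a neighbourhood of $0$ (for the first two possibly after a harmless reindexing which does not affect any of the orders appearing in the theorem). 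I write $r_{i,j}=\mathrm{ord}_0(\mathfrak{s}_{i,j})$ and $\overline{r}_i=\mathrm{ord}_0(\overline{\mathfrak{s}}_i)$, valued in $\{0,1,2,\dots\}\cup\{\infty\}$; the case where every such function vanishes identically is trivial, so I assume at least one has finite order.

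Next I reduce the three pairwise-type quantities to each other. The telescoping identity $\mathfrak{s}_{i,j}=\sum_{\ell=i}^{j-1}\mathfrak{s}_{\ell,\ell+1}$ gives $r_{i,j}\ge \min_\ell r_{\ell,\ell+1}$, and hence $\min_{i<j}r_{i,j}=\min_i r_{i,i+1}$. For the extremum, the ordering $\lambda_1(t)\le\lambda_i(t)\le\lambda_j(t)\le\lambda_k(t)$ valid for small $t\ge 0$ gives $|\mathfrak{s}_{i,j}(t)|\le|\mathfrak{s}_{1,k}(t)|$, so comparing leading monomials as $t\to 0^+$ forces $r_{1,k}\le r_{i,j}$, whence $r_{1,k}=\min_{i<j}r_{i,j}$.

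Then I tie in the deviations from the mean via the linear identities $\overline{\mathfrak{s}}_i=\tfrac{1}{k}\sum_{j=1}^k\mathfrak{s}_{i,j}$ and $\mathfrak{s}_{i,j}=\overline{\mathfrak{s}}_i-\overline{\mathfrak{s}}_j$, which yield $\overline{r}_i\ge \min_j r_{i,j}$ and $r_{i,j}\ge \min(\overline{r}_i,\overline{r}_j)$ respectively; taking minima shows $\min_i\overline{r}_i=\min_{i<j}r_{i,j}$. Finally, $k\,\mathfrak{s}_k(t)^2=\sum_{i=1}^k\overline{\mathfrak{s}}_i(t)^2$ is a sum of squares of real-analytic functions: each summand $\overline{\mathfrak{s}}_i^2$ has order $2\overline{r}_i$ with strictly positive leading coefficient, so there is no cancellation at the minimal exponent, giving $\mathrm{ord}_0(\mathfrak{s}_k^2)=2\min_i\overline{r}_i$ and therefore $\mathrm{ord}_0(\mathfrak{s}_k)=\min_i\overline{r}_i$. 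This closes the chain.

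The main obstacle is the interaction between Theorem~\ref{th:anal} and the ordering of eigenvalues: the ordered $\lambda_i(t)$ are only analytic after a possible permutation for $t<0$, so the inequality $|\mathfrak{s}_{i,j}|\le|\mathfrak{s}_{1,k}|$ should be applied one-sidedly on $t\to 0^+$, where the ordering is intact and the order of each analytic function coincides with the order read off from its one-sided Taylor series. Everything else is formal, with the only subtle ingredient being the non-cancellation of the sum of squares — immediate once one notes that squares of non-trivial real-analytic functions have strictly positive leading coefficients.
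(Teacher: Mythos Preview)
Your proof is correct and follows essentially the same route as the paper: analyticity of the eigenvalues (Theorem~\ref{th:anal}), the sum-of-squares non-cancellation argument for $\mathfrak{s}_k^2=\tfrac{1}{k}\sum_i\overline{\mathfrak{s}}_i^2$ (the paper's Lemma~\ref{le:order}), and the one-sided ordering inequality $|\mathfrak{s}_{i,j}|\le|\mathfrak{s}_{1,k}|$ for $t\to 0^+$ to pin down $\mathrm{ord}_0(\mathfrak{s}_{1,k})$. The only cosmetic difference is that where the paper proves $\min_{i,j}r_{i,j}=\min_i r_{i,i+1}=\min_i\overline r_i$ by a coefficient-wise equivalence at each degree (Lemma~\ref{le:mean}), you obtain the same equalities from the linear identities $\mathfrak{s}_{i,j}=\sum_{\ell}\mathfrak{s}_{\ell,\ell+1}$, $\overline{\mathfrak{s}}_i=\tfrac{1}{k}\sum_j\mathfrak{s}_{i,j}$, and $\mathfrak{s}_{i,j}=\overline{\mathfrak{s}}_i-\overline{\mathfrak{s}}_j$; both are equally elementary.
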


Theorem~\ref{th:ordspl} together with Equation~\eqref{eq:splitdist} provides a geometric description for the order of the splitting. Namely, it is independent of the way we measure it, and it agrees with the order of distancing from $\Sigma_k$. Moreover, it also agrees with the leading order of the effective Hamiltonian, which is the minimum of the orders of the coordinates $y_1, \dots, y_{k^2-1}$. 

\begin{theorem}[Energy splitting and distancing have the same order]\label{th:minden}

Let $r$ denote the order of energy splitting, that is, any of the five equal quantities in Theorem~\ref{th:ordspl}. Then 
\begin{equation}\label{eq:ord22}
    r=\mathrm{ord}_0(t \mapsto d(H(t), \Sigma_k))=\mathrm{ord}_0 (t \mapsto H_{\textup{eff}}(t))=\min_{1 \leq j \leq k^2-1} \{ \mathrm{ord}_0 (t \mapsto y_j(t)) \},
\end{equation}
where $H_{\textup{eff}}(t)$ is the effective Hamiltonian of $H(t)$ with respect to any diagonal $H'_0 \in \Sigma_k$ (not necessarily equal to $H_0$) for which the exact SW decomposition \eqref{eq:koo} of $H(t)$ is defined. 
\end{theorem}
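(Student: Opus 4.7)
The plan is to verify the three equalities in order, each as a short consequence of earlier results.

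\textbf{First equality.} By Equation~\eqref{eq:splitdist}, $d(H(t),\Sigma_k)=\sqrt{k}\,\mathfrak{s}_k(t)$, and $r=\mathrm{ord}_0(\mathfrak{s}_k)$ by Theorem~\ref{th:ordspl}. Multiplication by the non-zero constant $\sqrt{k}$ preserves the order at $0$, so $\mathrm{ord}_0(t\mapsto d(H(t),\Sigma_k))=r$.

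\textbf{Second equality.} After a global unitary conjugation, which preserves Frobenius norms and distances by Lemma~\ref{le:uniact}, we may assume that $H(0)$ is diagonal with eigenvalues in increasing order, and then take $H_0'=H(0)$; more generally any diagonal $H_0'\in\Sigma_k$ with $H(t)\in\mathcal{V}_0$ for small $t$ works identically. Since the SW chart is real analytic in $H$ by Theorem~\ref{th:sw}, the composition $H_{\textup{eff}}(t)$ is analytic in $t$. For an analytic matrix-valued function $A(t)=\sum_{j\ge m} t^j A_j$ with $A_m\ne 0$ one has $\|A(t)\|=|t|^m\|A_m\|+O(|t|^{m+1})$, so $\mathrm{ord}_0(A)=\mathrm{ord}_0(\|A\|)$. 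Applying this to $A=H_{\textup{eff}}$ and invoking Theorem~\ref{th:distance} in the form $\|H_{\textup{eff}}(t)\|=d(H(t),\Sigma_k)$ yields $\mathrm{ord}_0(H_{\textup{eff}})=\mathrm{ord}_0(d(H(t),\Sigma_k))=r$. As a byproduct, this shows that $\mathrm{ord}_0(H_{\textup{eff}})$ is independent of the particular choice of $H_0'$, as required by the statement.

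\textbf{Third equality and the main obstacle.} I expand $H_{\textup{eff}}(t)=\sum_{j=1}^{k^2-1} y_j(t)\,\breve{c}_j$ in the orthonormal basis $\{\breve{c}_j\}$ of $\mbox{Herm}_0(k)$, so that $\|H_{\textup{eff}}(t)\|^2=\sum_j y_j(t)^2$. Each $y_j$ is analytic in $t$ by Corollary~\ref{co:locchart}. Setting $r_j=\mathrm{ord}_0(y_j)$ and $r'=\min_j r_j$, and writing $y_j(t)=a_j t^{r_j}+O(t^{r_j+1})$ with $a_j\ne 0$ (discarding any $y_j\equiv 0$), one obtains
\begin{equation*}
\|H_{\textup{eff}}(t)\|^2 \;=\; t^{2r'}\!\!\sum_{j:\,r_j=r'}\! a_j^2 \;+\; O(t^{2r'+1}).
\end{equation*}
The only genuinely substantive point of the proof is that the leading coefficient $\sum_{j:\,r_j=r'}a_j^2$ is strictly positive: this holds because every summand is a square of a non-zero real, so no cancellation can occur. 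Hence $\mathrm{ord}_0(\|H_{\textup{eff}}\|^2)=2r'$, giving $\mathrm{ord}_0(\|H_{\textup{eff}}\|)=r'$, and combined with the second equality we get $r'=r$, completing the proof.
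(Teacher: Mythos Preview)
Your proof is correct and follows essentially the same route as the paper. The paper packages your ``third equality'' argument (order of $\sqrt{\sum_j y_j^2}$ equals $\min_j \mathrm{ord}_0(y_j)$ because positive squares cannot cancel) as a standalone Lemma~\ref{le:order}, which it then applies twice --- once to the $y_j$'s (Corollary~\ref{co:anal0}) and once to the $\overline{\mathfrak{s}}_i$'s --- whereas you derive this inline; but the substance is identical, relying on Theorem~\ref{th:distance} to identify $d(H(t),\Sigma_k)$, $\sqrt{k}\,\mathfrak{s}_k(t)$ and $\|H_{\textup{eff}}(t)\|$, and on the non-cancellation of squares.
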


\begin{cor}\label{co:tangstick}
    For a linear family $H(t)=H_0+tH_1$ the order of energy splitting is 
    \begin{itemize}
        \item $r=1$ if  $H_1$ is not tangent to $\Sigma_k$ at $H_0$, i.e. $H_1 \notin T_{H_0} \Sigma_k$,
        \item $r \geq 2$ if $H_1$ is  tangent to $\Sigma_k$ at $H_0$, i.e. $H_1 \in T_{H_0} \Sigma_k$.
    \end{itemize}
\end{cor}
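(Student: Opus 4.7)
The plan is to reduce the corollary directly to Theorem~\ref{th:minden} by using the SW chart $\varphi = (x_1,\dots,x_{n^2-k^2+1}, y_1,\dots,y_{k^2-1})$ around $H_0$ provided by Corollary~\ref{co:locchart}. By that corollary, $\Sigma_k \cap \mathcal{V}_0$ is cut out in $\mathcal{V}_0$ by the vanishing of $y_1,\dots,y_{k^2-1}$, so the tangent space of $\Sigma_k$ at $H_0$ is exactly the common kernel of the differentials $dy_j|_{H_0}$:
\begin{equation}
T_{H_0}\Sigma_k = \bigcap_{j=1}^{k^2-1} \Ker\bigl(dy_j|_{H_0}\bigr).
\end{equation}
Theorem~\ref{th:minden} moreover tells us that the order of energy splitting is $r = \min_j \mathrm{ord}_0\bigl(t \mapsto y_j(H(t))\bigr)$, so the whole statement reduces to computing these orders for the linear family $H(t)=H_0+tH_1$.

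Next I would apply the chain rule: since each $y_j$ is analytic in the canonical coordinates on $\mbox{Herm}(n)$ (Corollary~\ref{co:locchart}(a)), and since $y_j(H_0)=0$, the Taylor expansion at $t=0$ reads
\begin{equation}
y_j(H(t)) = dy_j|_{H_0}(H_1)\, t + O(t^2).
\end{equation}
Consequently $\mathrm{ord}_0(y_j(H(t))) = 1$ iff $dy_j|_{H_0}(H_1) \neq 0$, and $\mathrm{ord}_0(y_j(H(t))) \geq 2$ iff $dy_j|_{H_0}(H_1) = 0$.

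The two cases of the corollary now follow immediately. If $H_1 \notin T_{H_0}\Sigma_k$, then by the kernel description of the tangent space there exists some index $j$ with $dy_j|_{H_0}(H_1) \neq 0$, so $\mathrm{ord}_0(y_j(H(t)))=1$, giving $r=1$. If instead $H_1 \in T_{H_0}\Sigma_k$, then $dy_j|_{H_0}(H_1)=0$ for every $j$, hence $\mathrm{ord}_0(y_j(H(t))) \geq 2$ for every $j$, and consequently $r \geq 2$.

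I do not expect any serious obstacle: the genuine content (analyticity and trivialization of $\Sigma_k$ in the chart, and the identification $r = \min_j \mathrm{ord}_0(y_j)$) has been packaged into Corollary~\ref{co:locchart} and Theorem~\ref{th:minden}, so the corollary is essentially a one-line computation with differentials. The only point one has to be mildly careful about is that the chart is only defined on a neighborhood $\mathcal{V}_0$ of $H_0$, but since we are evaluating orders at $t=0$ and $H(0)=H_0 \in \mathcal{V}_0$, the values $y_j(H(t))$ are well defined for $|t|$ small enough, which is all that is needed for computing orders.
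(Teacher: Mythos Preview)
Your proposal is correct and follows essentially the same approach as the paper's own proof: both use the SW chart from Corollary~\ref{co:locchart} to identify $T_{H_0}\Sigma_k$ as the kernel of the differential of the $y$-coordinates, and then invoke Theorem~\ref{th:minden} to equate the order of energy splitting with $\min_j \mathrm{ord}_0(t \mapsto y_j(H_0+tH_1))$. The paper packages the $y_j$ together into a map $\phi_y$ and writes the tangent space as $\ker((\mathrm{d}\phi_y)_{H_0})$, while you write it as the intersection of the individual kernels, but this is purely notational.
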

According to the above proposition, $r >2$ means stronger `stickiness' of the tangent vector $H_1$ to the degeneracy submanifold $\Sigma_k$ at $H_0$.

\subsection{Parameter-dependent quantum systems and Weyl points}
\label{sec:weylrobustness}

Generic degeneracy points in quantum systems depending on three parameters are often called Weyl points. As we show below, their protection against perturbations relies on the transversality theorem. A similar idea was already suggested by Arnold in \cite[Sec. 7]{ArnoldSelMath1995} in a slightly different context, 
 without explicit formulation. Here we provide a characterization of Weyl points in a general context in terms of SW and transversality. 

Parameter-dependent quantum systems are described by a smooth ($\mathcal{C}^{\infty}$) map from a manifold $M$ of dimension $m$ to the space of Hamiltonians, i.e., Hermitian matrices $\mbox{Herm}(n)$. Slightly abusing the notation again, we denote this map by $H$. So from now on $H: M \to \mbox{Herm}(n)$ is  a smooth map with $H(p_0)= H_0 \in \Sigma_k$ for a  point $p_0 \in M$, which is called degeneracy point. For simplicity, assume that $H_0$ is diagonal --- it can be reached by a unitary change of basis in $\C^n$, see App.~\ref{ss:coordfree}. 

As above, let $\mathcal{V}_0$ denote a neighborhood of $H_0$ where the SW decomposition is unique.
Then, consider the corresponding neighborhood $\mathcal{W}_0 \subset H^{-1}(\mathcal{V}_0)$ of $p_0$ in $M$.
On this $\mathcal{W}_0$, the (traceless) effective Hamiltonian map is defined by the SW decomposition, that is, 
\begin{equation}
          H_{\textup{eff}}: \mathcal{W}_0 \to \mbox{Herm}_0(k).
\end{equation}
By introducing a local chart in $\mathcal{W}_0$ centered at $p_0$ (i.e., $p_0=0$) and expressing $H_{\textup{eff}}$ in the basis $\breve{c}_i$ of $\mbox{Herm}_0(k)$ (cf.~Sec.~\ref{s:pre}), we obtain a map  
\begin{equation}
          h: \R^m \to \R^{k^2-1}
\end{equation}
defined in a neighborhood of the origin, satisfying $h(0)=0$. The $i$-th component of $h$ is $h_i=y_i \circ H$, where $y_i$ are the effective coordinates of the SW chart in Corollary~\ref{co:locchart}, $i=1, \dots, k^2-1$.

If we want to describe the `type' of the degeneracy point $p_0$, this problem leads to the description of the intersection of $H$ and $\Sigma_k$ at $H_0$, which can be reduced to the description of the `type' of the root of $h$ at 0. For this it is necessary to know $h$ in an arbitrarily small neighborhood of the origin. This description of the degeneracy point leads to the study of map germs $h: (\R^m, 0) \to (\R^{k^2-1}, 0)$, and their classifications in singularity theory. 
Although this relation between degeneracy points and  singularities implicitly appears in several works, see e.g. \cite{Teramoto2017, teramoto2020application, Pinter2022, naselli2024stability}, this relation is essentially based on the geometric picture described here. 
We illustrate this relation on the characterization of Weyl points. Weyl points represent the simplest type of degeneracy, and we show that they correspond to the simplest type of intersection called transverse intersection of $H$ and $\Sigma$, and also to the simplest singularity type of $h$. These relations explain the protected nature of Weyl points, as a straightforward consequence of transversality. 
            
Recall that in the physics terminology, Weyl points are isolated twofold degeneracy points in a 3-dimensional parameter space, with linear energy splitting in every direction. We formalize this informal description below.
According to this description, we restrict our study to
twofold (ground state) degeneracy --- that is,  $H(p_0)= H_0 \in \Sigma_2$ -- and $m=3$. 
Hence, the resulting map (germ) is $h: (\R^3, 0) \to (\R^{3}, 0)$.

\begin{rem}
    Observe that the equality of the dimensions is coming from the fact that the codimension of $\Sigma_2$ in $\mbox{Herm}(n)$ is 3, and we also choose 3 for the dimension of the parameter space $M$. This is the case in many physical examples (3D crystals, magnetic fields, see Example~\ref{ex:fizrendszerek}), hence, this coincidence of the (co)dimensions explains the appearance of Weyl points.
\end{rem}

 Recall the notion of transversality, see e.g. \cite{guillemin2010differential, golubitsky2012stable}. Consider smooth manifolds $M$ and $N$, a smooth submanifold $Z \subset N$ of $N$, and a smooth map $f: M \to N$ with  $f(p) \in Z$ for a point $p \in M$. The tangent map  $\mathrm{d} f_{p}$ is a linear map from the tangent space $T_{p} M$ of $M$ at $p$ to the tangent space $T_{f(p)}N$ of $N$ at $f(p)$. In local coordinates $(\mathrm{d}f)_{p}$ is the Jacobian matrix. Then, $f$ is \emph{transverse} to $Z$ at $p$ if 
\begin{equation}\label{eq:transverse}
    T_{f(p)} Z + (\mathrm{d} f)_{p} (T_{p} M) = T_{f(p)} N
\end{equation}
holds. That is, the tangent space of the submanifold $Z$ and the image of the tangent map $(\mathrm{d} f)_p$ span the tangent space of $N$ at $f(p)$, see panel (a) and (d) in Figure~\ref{fig:transvers}.

 \begin{figure}
	\begin{center}
		\includegraphics[width=0.7\columnwidth]{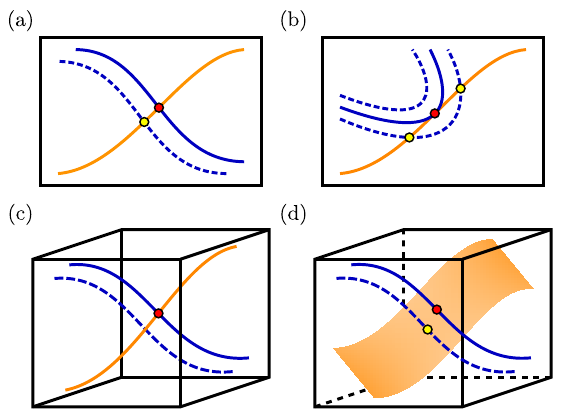}
	\end{center}
	\caption{Transversality, as a characterization of Weyl points, cf. Equation~\eqref{eq:transverse} and Theorem~\ref{th:charweyl}. On each panel the submanifold is orange, and the image of the map is blue. 
 (a) Transverse intersection of two curves at a single point on the plane. For a small perturbation the intersection point slightly moves away, but it remains transverse.
 (b) Non-transverse intersection of two curves on the plane at a single point. Non-transversality follows from the fact that their tangent lines coincide, hence they do not span the tangent plane of the plane, thus Equation~\eqref{eq:transverse} does not hold for them. For a small perturbation the intersection splits into transverse intersection points, whose number is either 0 or 2 in this case.
 (c) Non-transverse intersection of two curves in 3-space at a single point. Non-transversality follows from
the inequality $1+1 < 3$, that is, the sum of the dimensions is less than the dimension of the ambient space. In this case Equation~\eqref{eq:transverse} cannot hold at an intersection point. According to this, the intersection point disappears for a small perturbation.
(d) Transverse intersection of a surface and a curve at a single point in 3-space. Any small perturbation preserves transversality, the intersection point only slightly moves away.  In the context of parameter-dependent quantum systems $H: M \to \mbox{Herm}(n) $, the submanifold $\Sigma_2 $ has codimension 3, the same as the dimension of the parameter space $M$. The red point is the image of a twofold ground state degeneracy point $H(p_0)=H_0 \in \Sigma_2$. It is a Weyl point if $H$ is transverse to $\Sigma_2$ in $\mbox{Herm}(n)$ at $p_0$ (like in panel (a) and (d)), otherwise it is a non-generic degeneracy point, which splits into Weyl points (yellow points) or disappears for a small perturbation of $H$ (dashed blue line). \label{fig:transvers}}
\end{figure}

Transversality implies the expected dimension of the pre-image, namely, if $f$ is transverse to $Z$ at every point of $f^{-1}(Z)$, then $f^{-1}(Z)$ is a submanifold of $M$ of dimension $\dim (f^{-1}(Z))=\dim (M) + \dim (Z) - \dim (N)$, see \cite[pg. 28]{guillemin2010differential}. 
By the statements closely related to the transversality theorem \cite[pg. 35, 68-69]{guillemin2010differential}, transversality with respect to a fixed submanifold is a stable and generic property of smooth maps, cf. Figure~\ref{fig:transvers}.

     \begin{thm}[Characterization of Weyl points]\label{th:charweyl} Given a parameter-dependent quantum system by a $\mathcal{C}^{\infty}$ map $ H: M^3 \to \mbox{Herm}(n)$ and a two-fold ground state degeneracy point $p_0 \in M^3$, $H(p_0)=H_0 \in \Sigma_2$ and the induced effective map germ $h: (\R^3, 0) \to (\R^3, 0)$, the following properties are equivalent:
     \begin{enumerate}
         \item The map $H$ is transverse to $\Sigma_2$ at $p_0$,
         \item The rank of the Jacobian of $h$ at 0 has maximal rank, i.e. $\rk ((\mathrm{d} h)_0)=3$.
         \item Taking any curve $\gamma: (\R, 0) \to (M^3, p_0)$ with  $\gamma(0)=p_0$ and $\gamma'(0) \neq 0$, for the composition $(h \circ \gamma)'(0) \neq 0$ holds. 
         \item Taking any curve $\gamma: (\R, 0) \to (M^3, p_0)$ with  $\gamma(0)=p_0$ and $\gamma'(0) \neq 0$, for the composition $H \circ \gamma$ the order of energy splitting is 1.

     \end{enumerate}
     \end{thm}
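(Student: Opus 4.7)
The plan is to unwind everything through the SW chart from Corollary~\ref{co:locchart}. Apply the SW chart $\varphi$ centered at $H_0 \in \Sigma_2$ (so $k=2$, $k^2-1=3$), giving coordinates $(x_1,\dots,x_{n^2-3},y_1,y_2,y_3)$ on $\mathcal{V}_0$ with $\Sigma_2 \cap \mathcal{V}_0 = \{y_1=y_2=y_3=0\}$. Two identifications will do almost all the work: the tangent space $T_{H_0}\Sigma_2 = \ker (\mathrm{d}y_1, \mathrm{d}y_2, \mathrm{d}y_3)_{H_0}$, and the natural projection $\pi: T_{H_0}\mathrm{Herm}(n) \to T_{H_0}\mathrm{Herm}(n)/T_{H_0}\Sigma_2 \cong \R^3$ is $(x,y)\mapsto y$. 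Because $h_j = y_j \circ H$ by definition, the chain rule gives $\pi \circ (\mathrm{d}H)_{p_0} = (\mathrm{d}h)_0$ as linear maps $T_{p_0}M \to \R^3$.

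Given this setup, (1) $\Leftrightarrow$ (2) is immediate: the transversality condition \eqref{eq:transverse} says $T_{H_0}\Sigma_2 + (\mathrm{d}H)_{p_0}(T_{p_0}M) = T_{H_0}\mathrm{Herm}(n)$, which, after quotienting by $T_{H_0}\Sigma_2$, is equivalent to $\pi \circ (\mathrm{d}H)_{p_0} = (\mathrm{d}h)_0$ being surjective onto $\R^3$. Since source and target both have dimension $3$, surjectivity is equivalent to $\rk((\mathrm{d}h)_0)=3$. Equivalence (2) $\Leftrightarrow$ (3) is a purely linear fact: a $3\times 3$ matrix has full rank iff it is injective. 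Injectivity means that every nonzero $v \in T_{p_0}M$ satisfies $(\mathrm{d}h)_0(v)\neq 0$; realizing each such $v$ as $\gamma'(0)$ for some smooth curve through $p_0$ (and conversely), this is exactly condition (3) via $(h\circ\gamma)'(0) = (\mathrm{d}h)_0(\gamma'(0))$.

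The last equivalence (3) $\Leftrightarrow$ (4) is where the ``order of splitting'' content enters. Apply Theorem~\ref{th:minden} (in its $\mathcal{C}^\infty$ version, as allowed by the footnote) to the one-parameter family $t \mapsto H(\gamma(t))$. Using the SW chart centered at $H_0$, the order of energy splitting of this family equals
\begin{equation*}
\min_{1\le j\le 3} \mathrm{ord}_0\bigl(y_j \circ H \circ \gamma\bigr) = \min_{1\le j\le 3} \mathrm{ord}_0(h_j \circ \gamma).
\end{equation*}
This minimum equals $1$ iff at least one component of $h\circ\gamma$ has a nonzero first-order Taylor coefficient at $0$, i.e.\ iff $(h\circ\gamma)'(0)\neq 0$. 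This closes the cycle $(1)\Rightarrow(2)\Rightarrow(3)\Rightarrow(4)\Rightarrow(3)\Rightarrow(2)\Rightarrow(1)$.

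The proof is essentially bookkeeping, so no single step is genuinely hard; the only point that needs care is the identification $T_{H_0}\Sigma_2 = \mathrm{span}\{\partial_{x_i}\}$ and the dual identification of the normal directions with the $y_j$'s. Both follow at once from Corollary~\ref{co:locchart}(b), so once the SW chart is invoked the four conditions reduce to a chain of routine equivalences about a $3\times 3$ Jacobian.
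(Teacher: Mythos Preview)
Your proof is correct and follows essentially the same route as the paper: both arguments use the SW chart to identify $T_{H_0}\Sigma_2$ with the kernel of the $y$-projection, reduce (1)$\Leftrightarrow$(2) to surjectivity of $(\mathrm{d}h)_0$ via the chain rule (you quotient by $T_{H_0}\Sigma_2$, the paper applies $(\mathrm{d}\phi_y)_{H_0}$ --- these are the same operation), treat (2)$\Leftrightarrow$(3) as the linear-algebra fact that a $3\times 3$ matrix has full rank iff trivial kernel, and invoke Theorem~\ref{th:minden} for (3)$\Leftrightarrow$(4). The only cosmetic difference is that you spell out the intermediate identification $\pi\circ(\mathrm{d}H)_{p_0}=(\mathrm{d}h)_0$ explicitly at the start, which makes the bookkeeping slightly cleaner.
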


       A two-fold degeneracy point $p_0$ satisfying any, hence all of the properties (1)--(4) is called Weyl point. Note that (4) formulates the physicist definition. 
       
       \begin{rem}\label{re:weyl} By (2), the characterization of Weyl points is already determined by the first-order part of the exact effective map $h$. It implies that in the approximate computation of the SW transformation (see e.g. \cite{Winkler, day2024pymablock})      
       the first-order term is sufficient to decide whether a degeneracy point $p_0 \in M^3$  is a Weyl point or not. Namely, it can be decided by the following steps, cf. Example~\ref{ex:fizrendszerek}:
       \begin{enumerate}
           \item Take the first-order SW transformation, i.e. the upper-left traceless $2 \times 2$ block of $H(p)$.
           \item Via the Pauli decomposition it can be considered as a map ${h}^{(1)}: \R^3 \to \R^3$, defined in a neighborhood of  the origin $p_0=0 $.
           \item $p_0$ is a Weyl point if and only if $\rk((\mathrm{d}h^{(1)})_{0}=3$.
       \end{enumerate}
\end{rem}
       
       The properties of the transversality mentioned above imply the following corollaries.

     \begin{cor}[Weyl points are isolated]\label{co:weyliso}
         Every Weyl point $p_0$ is an isolated degeneracy point, in the sense that there is a neighborhood $\mathcal{W}_0 \subset M^3$ of $p_0$ such that $H(p) \notin \Sigma_2$ for $p_0 \neq p \in \mathcal{W}_0$.
     \end{cor}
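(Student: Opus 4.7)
The plan is to reduce the statement to a standard application of the inverse function theorem, using the equivalence already established in Theorem~\ref{th:charweyl}. First, I would work in the neighborhood $\mathcal{W}_0 \subset H^{-1}(\mathcal{V}_0)$ of $p_0$ on which the SW decomposition of $H(p)$ (with respect to $H_0$) is defined, so that the effective coordinates $y_1, y_2, y_3$ of Corollary~\ref{co:locchart} pull back under $H$ to the components $h_1, h_2, h_3$ of the map germ $h: (\R^3, 0) \to (\R^3, 0)$. By part (b) of Corollary~\ref{co:locchart}, $\Sigma_2 \cap \mathcal{V}_0$ is cut out precisely by $y_1 = y_2 = y_3 = 0$, so for $p \in \mathcal{W}_0$ the condition $H(p) \in \Sigma_2$ is equivalent to $h(p) = 0$.

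Next, I would invoke Theorem~\ref{th:charweyl}, property (2): since $p_0$ is assumed to be a Weyl point, the Jacobian $(\mathrm{d}h)_0$ has rank $3$, i.e., full rank as a map between three-dimensional spaces. The analytic inverse function theorem then provides a possibly smaller open neighborhood $\mathcal{W}_0' \subset \mathcal{W}_0$ of $p_0$ on which $h$ is a diffeomorphism onto its image, and in particular injective. Combined with $h(p_0) = 0$, this gives $h(p) = 0 \iff p = p_0$ on $\mathcal{W}_0'$, hence $H(p) \in \Sigma_2$ forces $p = p_0$ for $p \in \mathcal{W}_0'$.

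There is essentially no serious obstacle: the only thing to be careful about is that the SW chart is only guaranteed on the neighborhood $\mathcal{V}_0$ of $H_0$ provided by Theorem~\ref{th:sw}, so the initial choice of $\mathcal{W}_0 = H^{-1}(\mathcal{V}_0)$ uses the continuity of $H$, and then the inverse function theorem may shrink this neighborhood once more. Conceptually one could equivalently argue via the transversality theorem: the codimensions match ($\dim M = \mathrm{codim}\, \Sigma_2 = 3$), so transverse intersection forces $H^{-1}(\Sigma_2)$ to be a $0$-dimensional submanifold near $p_0$, which is precisely the assertion that $p_0$ is isolated; but going through the SW chart and the inverse function theorem seems the cleanest, as it directly uses the effective map $h$ already set up for the characterization theorem.
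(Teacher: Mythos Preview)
Your argument is correct. You and the paper both reduce to the full-rank condition $\rk((\mathrm{d}h)_0)=3$ from Theorem~\ref{th:charweyl}(2), but then finish differently: you apply the inverse function theorem to $h$ to get local injectivity and hence $h^{-1}(0)=\{p_0\}$ near $p_0$, whereas the paper first uses continuity of $p\mapsto\det((\mathrm{d}h)_p)$ to extend transversality of $H$ to $\Sigma_2$ over a whole neighborhood, and then invokes the preimage theorem for transverse maps to conclude that $H^{-1}(\Sigma_2)$ is a $0$-dimensional submanifold there. Your route is slightly more elementary and self-contained (it avoids the preimage theorem), while the paper's route yields a bit more: it shows that \emph{every} nearby degeneracy point, not just $p_0$, would itself be a Weyl point, which feeds naturally into the stability argument of Corollary~\ref{co:weylgene}. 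You anticipated this alternative at the end of your proposal, so the two approaches are really two sides of the same coin.
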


     Intuitively, the  protected nature of Weyl points includes the following phenomena: 
     \begin{enumerate}[label=(\alph*)]
     \item \emph{Weyl points are stable:} For any small perturbation, a Weyl point does not disappear, it only gets displaced by a small amount in $M$ (if at all).
         \item \emph{Weyl points are generic:} For a generic perturbation, a non-generic degeneracy point splits into Weyl points (or possibly disappears). 
     \end{enumerate}
     
    To translate (a) and (b)  into rigorous claims, we consider one-parameter perturbations $H_t$ of $H$. Formally, these are $\mathcal{C}^{\infty}$ maps (germs) from $M^3 \times \R$ to $\mbox{Herm}(n)$ defined on a neighborhood of $(p_0, 0)$, such that $H_{t=0}=H$. 
    For simplicity we formulate the statement in a local version for isolated twofold degeneracy points, although, it can be generalized for non-isolated or multifold degeneracy points, see Remark~\ref{re:altalanosabb}.

     \begin{cor}[Weyl points are stable and generic]\label{co:weylgene} 
     Let 
     $ H: M^3 \to \mbox{Herm}(n)$ be  a parameter-dependent quantum system with an isolated two-fold ground state degeneracy point $p_0 \in M^3$, $H(p_0)=H_0 \in \Sigma_2$. Let $\mathcal{W}_0 \subset M^3$ be a neighborhood of $p_0$ whose closure $\mbox{cl}(\mathcal{W}_0)$ is compact and it does not contain other  ground state degeneracy points, that is, $H^{-1}(\mbox{cl}(\Sigma_2) )\cap \mbox{cl}(\mathcal{W}_0)=\{ p_0 \}$. 
         \begin{enumerate}[label=(\alph*)]
             \item If $p_0$ is a Weyl point, then for every one-parameter perturbation $H_t$ of $H_{t=0}=H$, there is an $0 < \epsilon $, such that for $|t| < \epsilon$ the perturbed Hamiltonian $H_t$  has exactly one degeneracy point in $\mathcal{W}_0$, and it is a Weyl point. Moreover, there is a $\mathcal{C}^{\infty}$ curve $\gamma:  (-\epsilon, \epsilon) \to M^3$ such that $\gamma(t)$ is the unique Weyl point of $H_t$ in $\mathcal{W}_0$.
             \item If $p_0$ is not a Weyl point, then for every $0<\epsilon$ there is a $K \in \mbox{Herm}(n)$ with $\|K\| < \epsilon$, such that every degeneracy point of the perturbed map $H_K: p \mapsto H_K(p)=H(p)+K$ in $\mathcal{W}_0$ is a Weyl point.  
         \end{enumerate}
     \end{cor}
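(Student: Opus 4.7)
The plan is to reduce both parts of the corollary to statements about the effective map germ $h = (y_1, y_2, y_3) \circ H$ via the equivalence in Theorem~\ref{th:charweyl}, and then invoke the implicit function theorem for (a) and parametric transversality (Sard's theorem) for (b). In both cases the SW chart is only defined on the neighborhood $\mathcal{V}_0$ of $H_0$, so the arguments are carried out on a suitably small neighborhood $\mathcal{V} \subset \mathcal{W}_0$ of $p_0$, and the assumption that $\mbox{cl}(\mathcal{W}_0)$ is compact with $H^{-1}(\mbox{cl}(\Sigma_2)) \cap \mbox{cl}(\mathcal{W}_0) = \{p_0\}$ is then used to exclude the appearance of stray degeneracies on $\mbox{cl}(\mathcal{W}_0) \setminus \mathcal{V}$.

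For part (a), the hypothesis that $p_0$ is a Weyl point translates, via Theorem~\ref{th:charweyl}(2), into invertibility of $(\mathrm{d} h)_0$. For a $\mathcal{C}^\infty$ perturbation $H_t$, I would define $h_t = (y_1, y_2, y_3) \circ H_t$ on $\mathcal{V} \times (-\epsilon_0, \epsilon_0)$, which is smooth in $(p, t)$, satisfies $h_0(p_0) = 0$, and has $\partial_p h_t \bigr|_{(p_0, 0)} = (\mathrm{d} h)_0$ invertible. The implicit function theorem then yields a unique $\mathcal{C}^\infty$ curve $\gamma: (-\epsilon, \epsilon) \to \mathcal{V}$ with $\gamma(0) = p_0$ and $h_t(\gamma(t)) = 0$, equivalently $H_t(\gamma(t)) \in \Sigma_2$. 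Shrinking $\epsilon$ preserves invertibility of $\partial_p h_t$ along the curve, so each $\gamma(t)$ is a Weyl point. Uniqueness in $\mathcal{W}_0$ is established by using continuity of $p \mapsto d(H(p), \mbox{cl}(\Sigma_2))$ on the compact set $\mbox{cl}(\mathcal{W}_0) \setminus \mathcal{V}$ to get a uniform positive lower bound $\delta$; for $|t|$ small, the supremum of $\|H_t(p) - H(p)\|$ on $\mbox{cl}(\mathcal{W}_0)$ is below $\delta/2$, so no additional degeneracy points can arise outside $\mathcal{V}$.

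For part (b), with $\rk(\mathrm{d} h)_0 < 3$, consider the map $\Phi: \mathcal{V} \times \mbox{Herm}(n) \to \R^3$ defined by $\Phi(p, K) = (y_1, y_2, y_3)(H(p) + K)$ on a small enough $\mathcal{V}$ so that $H(p) + K \in \mathcal{V}_0$ for $\|K\|$ small. Because $(y_1, y_2, y_3)$ are coordinate functions of the SW chart on $\mbox{Herm}(n)$, the partial derivative $\partial_K \Phi$ is surjective onto $\R^3$; consequently $\Phi$ is a submersion and $\Phi^{-1}(0)$ is a smooth submanifold of codimension $3$. Applying Sard's theorem to the projection $\pi: \Phi^{-1}(0) \to \mbox{Herm}(n)$, $(p, K) \mapsto K$, the critical values of $\pi$ form a measure-zero set, so there are $K$ of arbitrarily small Frobenius norm for which $K$ is a regular value of $\pi$ and hence $0$ is a regular value of the map $F_K := \Phi(\cdot, K): \mathcal{V} \to \R^3$. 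Every zero of $F_K$ in $\mathcal{V}$ is then a transverse intersection with $\Sigma_2$, i.e.\ a Weyl point of $H_K$ by Theorem~\ref{th:charweyl}. The same compactness argument as in (a) rules out degeneracy points of $H_K$ on $\mbox{cl}(\mathcal{W}_0) \setminus \mathcal{V}$ for small $K$.

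The main obstacle is the local-to-global bookkeeping: the SW chart only lives in $\mathcal{V}_0$, so every transversality statement has to be carried out on a small $\mathcal{V} \subset \mathcal{W}_0$, and then extended to all of $\mathcal{W}_0$ by the compactness-and-continuity argument. A secondary subtlety in (a) is ensuring that the implicit function theorem's solution is in fact the \emph{only} zero of $h_t$ inside $\mathcal{V}$; this follows by shrinking $\mathcal{V}$ and $\epsilon$ so that $\partial_p h_t$ remains uniformly invertible, making $h_t$ a diffeomorphism onto its image on $\mathcal{V}$ for each fixed $t$.
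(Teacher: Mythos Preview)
Your proposal is correct and follows essentially the same route as the paper. For part (a) the paper phrases the argument via the transversality preimage theorem (the map $(p,t)\mapsto H_t(p)$ is transverse to $\Sigma_2$, so the preimage is a smooth $1$-manifold whose projection gives $\gamma$), while you invoke the implicit function theorem directly on $h_t(p)=0$; these are equivalent formulations, and the compactness argument you sketch to exclude stray degeneracies on $\mbox{cl}(\mathcal{W}_0)\setminus\mathcal{V}$ matches the paper's.

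For part (b) there is a minor but worthwhile difference: the paper applies the parametric transversality theorem directly to the map $M\times\mbox{Herm}(n)\to\mbox{Herm}(n)$, $(p,K)\mapsto H(p)+K$, which is transverse to $\Sigma_2$ because for fixed $p$ it is a translation of the whole target. This bypasses the SW chart entirely and yields density of the ``good'' $K$ globally, so no separate compactness step is needed to cover all of $\mathcal{W}_0$. Your route through $\Phi(p,K)=(y_1,y_2,y_3)(H(p)+K)$ and Sard's theorem on the projection $\pi:\Phi^{-1}(0)\to\mbox{Herm}(n)$ is the standard proof of that very transversality theorem, carried out in the SW coordinates; it is equally valid but forces you to work only on the small $\mathcal{V}$ where the chart is defined and then patch via compactness. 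Both approaches land in the same place.
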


    Finally, Property (2) of Theorem~\ref{th:charweyl} implies that the topological charge of a Weyl point is $\pm 1$. Indeed, the topological charge is equal to the local degree $\deg_0 h$ of $h$ at $0$, which is $\pm 1$, if $\rk (\mathrm{d} h_0)=3$, see \cite{Pinter2022}.
    (Note that in the physics literature
    the topological charge  is defined as the first Chern number of the eigenvector bundle corresponding to the lowest eigenvalue, evaluated on a small sphere in $M^3$ around $p_0$, but it is equal to the local degree $\deg_0 h$ of $h$ at $0$.)

\subsection{Examples}\label{ss:ex}

\begin{ex}[$\mbox{Herm}(3)$]
In the case of $3\times 3$ matrices, the exact SW decomposition \ref{th:sw} can be given in a closed form using Cardano's formula to determine the eigenvalues, then, to get the block-diagonalized form, one needs to perform the exact direct rotation between the near-degenerate subspaces, cf. Appendix~\ref{app:directrot}. However, the resulting expressions are extremely complicated. As an alternative to exact decomposition, one might use a series expansion~\cite{Winkler} to approximate the terms of the decomposition.

As an example, we take a general $H \in \mbox{Herm}(3)$ around $H_0=\mathrm{diag}(0,0,1)\in\Sigma_2$ with the elements as coordinates such that
\begin{equation}\label{eq:3x3chart}
H=\begin{pmatrix}
    0&0&0\\
    0&0&0\\
    0&0&1
\end{pmatrix}
+\begin{pmatrix}
    v+z&x-iy&p-iq\\
    x+iy&v-z&r-is\\
    p+iq&r+is&w
\end{pmatrix}.
\end{equation}
The SW decomposition up to second order reads
\begin{eqnarray}
iS & = & 
\begin{pmatrix}
    0&0&(1+v-w+z)(p-iq)+(x-iy)(r-is)\\
    0&0&(1+v-w-z)(r-is)+(x+iy)(p-iq)\\
    -\text{h.c.}&-\text{h.c.}&0
\end{pmatrix}+\dots,\\
B &=& \left(1+w+p^2+q^2+r^2+s^2+\dots\right)
\begin{pmatrix}
    0&0&0\\
    0&0&0\\
    0&0&1
\end{pmatrix},\\ 
T &=&  \left(v-\frac{p^2+q^2+r^2+s^2}{2}+\dots\right)\begin{pmatrix}
    1&0&0\\
    0&1&0\\
    0&0&0
\end{pmatrix}, \mbox{ and} \\
  H_{\textup{eff}} & =& 
   \begin{pmatrix}
    z&x-iy&0\\
    x+iy&-z&0\\
    0&0&0
\end{pmatrix}-\frac{1}{2}
\begin{pmatrix}
    p^2+q^2-r^2-s^2&2(p-iq)(r+is)&0\\
    2(p+iq)(r-is)&-p^2-q^2+r^2+s^2&0\\
    0&0&0
\end{pmatrix}+\dots\nonumber\\
&=&\left(x-pr-qs+\dots\right)\sigma_x\nonumber\\
&+&\left(y+ps-qr+\dots\right)\sigma_y\\
&+&\left(z-\frac{p^2+q^2-r^2-s^2}{2}+\dots\right)\sigma_z.\nonumber
\end{eqnarray}
Recall that in the last equation $H_{\textup{eff}}$ is considered as a $ 2 \times 2$ matrix of trace zero. Note that the effective Hamiltonian in the first-order $H_\text{eff}^{(1)}$ is the truncation of $H$ to its near-degenerate upper-left $2\times 2$ block. 

\end{ex}

\begin{ex}[Exact SW decomposition] 
In the special case $p=q=r=s=0$ in Eq.~\eqref{eq:3x3chart} it is trivial to perform the SW decomposition, as $H$ is already block diagonal, and $S=0$. 

For a simple non-trivial example we take the 2 dimensional section of $\mbox{Herm}(n)$ 
\begin{eqnarray}\label{eq:3x3chart2dim}
    H(p,r)  = \begin{pmatrix}
        0&0&p\\
        0&0&r\\
        p&r&1 \\
    \end{pmatrix}.
\end{eqnarray}
The components $S$, $B$, $T$, $H_{\textup{eff}}$  of the SW decomposition \eqref{eq:koo} of $H$ with respect to $H_0$ can be expressed explicitly as functions of $p$ and $r$ as:
\begin{eqnarray}
iS(p,r) & = & \frac{1}{\sqrt{p^2+r^2}}\tan^{-1}\left(\frac{\sqrt{1+4p^2+4r^2}-1}{2\sqrt{p^2+r^2}}\right)\begin{pmatrix}
    0&0&p\\
    0&0&r\\
    -p&-r&0
\end{pmatrix},\\
B(p,r) &=& \frac{1+\sqrt{1+4p^2+4r^2}}{2}
\begin{pmatrix}
    0&0&0\\
    0&0&0\\
    0&0&1
\end{pmatrix},\\ 
T(p,r) &=&  
\frac{1-\sqrt{1+4p^2+4r^2}}{4}\begin{pmatrix}
    1&0&0\\
    0&1&0\\
    0&0&0
\end{pmatrix}, \mbox{ and} \\
  H_\text{eff}(p,r) &=&  
\frac{1-\sqrt{1+4p^2+4r^2}}{4(p^2+r^2)}\begin{pmatrix}
    p^2-r^2&2pr&0\\
    2pr&-p^2+r^2&0\\
    0&0&0
\end{pmatrix}\nonumber\\
&=&\frac{1-\sqrt{1+4p^2+4r^2}}{4(p^2+r^2)}\left(2pr\sigma_x+(p^2-r^2)\sigma_z\right).
\end{eqnarray}
Observe that every term $S$, $B$, $T$ and $H_{\textup{eff}}$ is an analytic function of $(p,r)$ in a neighborhood of $(0,0)$, although this is not obvious at first sight. The non-analytic behaviour of these maps far from $(0,0)$ shows that the SW decomposition can only be defined locally.

\end{ex}

\begin{ex}[Parameter-dependent quantum systems exhibiting Weyl points.]\label{ex:fizrendszerek}

In Sec.~\ref{sec:weylrobustness} we have discussed the generic nature of Weyl points in a mathematical context.
This discussion is relevant to many physical setups. 
Weyl points arise as spectral features in the electronic, phononic, photonic, magnonic band structures of crystalline materials, or metamaterials. 
More generally, Weyl points also arise in quantum systems described by a Hamiltonian depending on three parameters.
One example is an interacting spin system in a homogeneous magnetic field, where the manifold of parameters is $M = \R^3$, corresponding to the external magnetic field vector \cite{Scherubl2019,Frank2020}.
Another example is a multiterminal Josephson junction, where the manifold of parameters is the three-dimensional torus, hosting the values of three magnetic flux biases piercing the loops of the superconducting circuit \cite{Riwar2016,Fatemi2021,Frank2022}. 

An explicit $3\times 3$ example for a Hamiltonian with a Weyl point is the following:
\begin{equation}
H(x,y,z)=\begin{pmatrix}
    z&x-iy&y-ixz\\
    x+iy&-z&x-iyz\\
    y+ixz&x+iyz&1+xyz
\end{pmatrix}.
\end{equation}
This matrix has a two-fold ground state degeneracy at $x=y=z=0$. This is a Weyl point according to Remark~\ref{re:weyl} as the first-order effective Hamiltonian is
\begin{equation}
    H_{\mathrm{eff}}^{(1)}(x,y,z)=x\sigma_x+y\sigma_y+z\sigma_z,
\end{equation}
which, expanded in the orthonormal Pauli basis, corresponds to the effective map
\begin{equation}
    h^{(1)}(x,y,z)=\sqrt{2} \left(x, y, z\right).
\end{equation}
The Jacobian of $h^{(1)}$ is $\sqrt{2}$ times the identity, hence it has maximal rank 3. 
Note that the off-block matrix elements can be arbitrary functions of $x$,$y$ and $z$ with constant term 0, so that the resulting Hamiltonian still describes a Weyl point at the origin. Moreover, one can also perturb the $2\times 2$ block with higher-order terms without destroying the Weyl point, as they do not change the Jacobian (and hence its rank) of the first-order effective map $h^{(1)}$ at the origin. Off-block and higher-order terms have significant effect, with the possibility of creating new degeneracy points, only far from the origin.

A $3\times 3$ counterexample which is a point-like two-fold ground state degeneracy that is not a Weyl point is described in Eq.~\eqref{eq:3x3chart2dim}. It can be considered as a $\mathcal{C}^{\infty}$ map $H: \R^2 \to \mbox{Herm}(3)$.
 The effective map  $h: (\R^2, 0) \to (\R^3, 0)$ defined in a neighborhood of the origin reads 
\begin{equation}
    h(p,r)=\frac{1-\sqrt{1+4p^2+4r^2}}{2\sqrt{2}(p^2+r^2)}\left(2pr, 0, (p^2-r^2)\right).
\end{equation}
This shows that the order of the energy splitting is 2 along every curve $\gamma: \R \to \R^2$ with $\gamma(0)=(0,0)$ and $\gamma'(0)$. Therefore, the order of distancing $\mathrm{ord}_0(t \mapsto d(H (\gamma(t)), \Sigma_2))$ is 2 along every such curve. Moreover, this map is not even equidimensional, meaning that a generic small perturbation lifts the degeneracy, cf. panel (c) in Figure~\ref{fig:transvers}.
\end{ex}

\begin{ex}[Parameter-dependent non-interacting quantum systems exhibiting higher-order energy splitting of a degeneracy]

\label{ex:ssh}

In Sec.~\ref{s:intro}, we have mentioned physical systems having energy degeneracies that exhibit higher-order energy splitting for physically relevant perturbations.
Here, we illustrate how those examples relate to our formalism and results.
Our first example is the Su-Schrieffer-Heeger (SSH) model \cite{SSH,Asboth}.
As an application of our results, we translate a well-known property of the SSH model into information about the geometry of the corresponding degeneracy submanifold of the Hermitian matrix space. 

We describe the SSH model (or \emph{SSH chain}) as a tight-binding model of a single electron on a one-dimensional bipartite crystal lattice with a unit cell of two atoms (orbitals), translational invariance, and open boundary conditions, i.e., the lattice terminates at both ends.
The SSH Hamiltonian is in $\text{Herm}(2N)$, where $N$ denotes the number of unit cells. 
For example, an SSH chain of $N=4$ unit cells is described by the following $8\times 8$ Hamiltonian matrix:
\begin{equation}
H_\text{SSH} = \left( \begin{array}{cccccccc}
0 & v & 0 & 0 & 0 & 0 & 0 & 0\\
v & 0 & w & 0 & 0 & 0 & 0 & 0\\
0 & w & 0 & v & 0 & 0 & 0 & 0\\
0 & 0 & v & 0 & w & 0 & 0 & 0\\
0 & 0 & 0 & w & 0 & v & 0 & 0\\
0 & 0 & 0 & 0 & v & 0 & w & 0\\
0 & 0 & 0 & 0 & 0 & w & 0 & v\\
0 & 0 & 0 & 0 & 0 & 0 & v & 0
\end{array}
\right).
\end{equation}

We define the unperturbed Hamiltonian as $H_0 = H_\text{SSH}(v=0,w=1)$, which is referred to as the \emph{topological fully dimerized limit} of the SSH chain.
It is well known, and straightforward to show from the block diagonal structure of $H_0$, that $H_0$ has a twofold degenerate eigenenergy at zero, separated from the other two eigenenergies $1$ and $-1$ which both have $(N-1)$-fold degeneracy. 
In this section, we will use $\Sigma_\text{SSH}$ to denote the degeneracy submanifold of $\text{Herm}(2N)$ where the $N$th and $(N+1)$th ordered eigenvalues are degenerate, but different from the others, i.e., $\{\dots \leq \lambda_{N-1} < \lambda_{N} = \lambda_{N+1} < \lambda_{N+2} \leq \dots \}$. Clearly, $H_0$ is on this degeneracy submanifold $\Sigma_\text{SSH}$.

The twofold zero-energy degeneracy is robust against special perturbations.
Consider a perturbation $H_1$ that is referred to as \emph{disordered nearest-neighbor hopping} in physics terminology.
This means that $H_1$ is a tridiagonal Hermitian matrix with zeros on the diagonal, i.e., 
it has nonzero elements only on the first diagonal above the main diagonal, and the complex conjugates of those on the first diagonal below the main diagonal.
Such perturbation defines a $(4N-2)$-dimensional subspace in the Hermitian matrix space. 
This perturbation causes an energy splitting of the zero-energy degeneracy, with an order of energy splitting of $N$ (at least), a fact which can be shown, e.g., by SW perturbation theory after diagonalizing $H_0$. 
This result, well known in the subfield of physics studying \emph{topological insulators}, combined with Corollary \ref{co:tangstick} and the first equality of Theorem \ref{th:minden}, implies the following geometrical property of the submanifold $\Sigma_\text{SSH} \subset \text{Herm}(2N)$: 
in the point $H_0 \in \Sigma_\text{SSH}$, the $(4N-2)$-dimensional subspace corresponding to the perturbation $H_1$ is in the $(4N^2-3)$-dimensional tangent space of $\Sigma_\text{SSH}$ at $H_0$ (according to Corollary \ref{co:tangstick}, since $r = N \geq 2$), and this subspace of the tangent space is sticking to the to the degeneracy submanifold especially strongly, as the distancing function in this subspace has order $N$ (according to Theorem \ref{th:minden}).

\end{ex}

\begin{ex}[Parameter-dependent interacting quantum systems exhibiting high-order energy splitting of a degeneracy.]
\label{ex:ising}

A further example of a robust degeneracy is the twofold ground-state degeneracy of the quantum mechanical Ising model, a model of interacting qubits (or spins), in the presence of a transverse-field perturbation.

Here, we focus on the 1D Ising model, i.e., the Ising chain, with open boundary conditions. 
The Ising chain consists of $N$ qubits that are nearest-neighbor coupled with Ising interaction: 
\begin{equation}
    H_0 = - \left(
        \sigma_z \otimes \sigma_z \otimes I \otimes I \otimes \dots 
        +
        I \otimes \sigma_z \otimes \sigma_z \otimes I \otimes \dots 
        + 
        \dots
    \right)
    \equiv
    - \sum_{i=1}^{N-1} Z_i Z_{i+1}.
\end{equation}
Here $Z = \ket{1} \bra{1} - \ket{0} \bra{0}$ is the single-qubit Pauli $z$ operator defined using the orthonormal basis states $\ket{0}$ and $\ket{1}$ of the qubit.  
Furthermore, we use the usual shorthand notation for the tensor products of operators, e.g., 
$Z_1 Z_2 \equiv \sigma_z \otimes \sigma_z \otimes I \otimes I \otimes \dots$, etc.
The unperturbed Hamiltonian $H_0$ is in $\text{Herm}(2^N)$, and $H_0$ has a twofold degenerate ground-state energy at $-N+1$, with two orthogonal ground states given as $\ket{00\dots}$ and $\ket{11\dots}$.

Then, a special type of perturbation is the disordered transverse field
\begin{equation}
H_1 = \sum_{i=1}^N \left( x_i X_i + y_i Y_i \right),
\end{equation}
which is described by $2 N$ real parameters, namely, the \emph{local transverse fields} $x_i$ and $y_i$ multiplying the Pauli operators $X$ and $Y$ of each qubit.
Hence, the parameter space of this perturbation $H_1$ is $\R^{2N}$.
In this case, the order of energy splitting due to this perturbation is $N$ (at least), which can be proven, e.g., by mapping this model to the SSH model via the Jordan-Wigner transformation (see, e.g., Eq.~(6) of \cite{Juhasz}).
The fact that the order of energy splitting is $N$ is translated, using Corollary \ref{co:tangstick} and the first equality of Theorem \ref{th:minden}, to a geometry result, as follows: 
The parameter space $\R^{2N}$ of the perturbation correspond to a $2N$-dimensional subspace of the $(2^{2N}-3)$-dimensional tangent space of the twofold ground-state degenerate submanifold at its point $H_0$
(according to Corollary \ref{co:tangstick}, since $r=N \geq 2$), and this $2N$-dimensional subspace is sticking to the degeneracy submanifold especially strongly, as the distance function in this subspace  has order $N$ (according to Theorem \ref{th:minden}).
    
\end{ex}

\begin{ex}[Hamiltonians defined from stabiliser quantum error correction codes] 

\label{ex:stabiliser}

Stabiliser-code Hamiltonians describe interacting spins, and are directly related to $[[n,k,d]]$ stabiliser quantum error correction codes \cite{GottesmanPhD}.
Here, $n$ is the number of physical qubits, $k$ is the number of logical qubits, and $d$ is the code distance.
Examples are the Toric Code \cite{KitaevToricCode} or the five-qubit code \cite{GottesmanPhD}.
For code families incorporating the $n \to \infty$ limit and corresponding to a finite-dimensional qubit lattice, such as the Toric Code, the stability of the ground-state degeneracy under local perturbations in the thermodynamic limit has been proven \cite{Klich,BravyiTO,BravyiTO2}; this research has been recently extended \cite{Lavasani,YaodongLi} to the more general class of quantum low-density parity check codes \cite{BreuckmannPRXQuantum,Panteleev}. 

For a stabiliser-code Hamiltonian derived from an $[[n,k,d]]$ stabiliser code, (i) the ground-state degeneracy is $2^k$-fold, and (ii)
the order of ground-state energy splitting caused by 1-local perturbations are of order $d$.
(The latter claim is indicated in \cite{KitaevToricCode} for the Toric Code, and is proven for $[[n,1,3]]$ stabiliser-code Hamiltonians in \cite{Bacon}; we are not aware of a formal proof for general stabiliser-code Hamiltonians though.)
Combining this information with our results, we deduce geometric information for the $2^k$-fold ground-state degeneracy submanifolds.
We illustrate this on the five-qubit code, but a similar analysis can also be done for other stabiliser code Hamiltonians, such as the Toric Code.

In the $[[5,1,3]]$ five-qubit code, 5 physical qubits are used to encode 1 logical qubit. The sum of the so-called generators defines the stabiliser-code Hamiltonian $H_0$, i.e., a Hermitian matrix in $\mathrm{Herm}(32)$,

\begin{equation}
H_0 = - X_1 Z_2 Z_3 X_4
- X_2 Z_3 Z_4 X_5
- X_1 X_3 Z_4 Z_5
- Z_1 X_2 X_4 Z_5.
\end{equation}
$H_0$ has twofold ground-state degeneracy, that is, $\lambda_1 = \lambda_2 < \lambda_3 \leq \dots$, and its two-dimensional ground-state subspace encodes the logical qubit. The logical qubit is robust against errors of the physical qubits, in the following sense. Consider one-parameter perturbations $H(t)=H_0+tH_1$ where $H_1$ is 1-local, that is, 
\begin{equation}\label{eq:local}
H_1=
A_1 + A_2 + A_3 + A_4 + A_5,
\end{equation}
where $A_i$-s are linear combinations of the three Paulis acting on site $i$.
Then the robustness means that the order of energy splitting $\mathrm{ord}_0(\mathfrak{s}_2(t))$ of the lowest two eigenvalues along $H(t)$ is at least 3, i.e., the so-called `code distance' of the five-qubit code.

Using our results connecting the order of energy splitting and the order of distancing, we can infer geometrical information as follows.
In the space $\mbox{Herm}(32)$ of dimension $32^2=1024$, the twofold ground state degeneracy submanifold $\Sigma_2$ has dimension $1021$. 
The 15-dimensional space of perturbations described by $H_1$ correspond to a 15-dimensional subspace of the 1021-dimensional tangent space of the twofold ground-state degenerate submanifold at its point $H_0$ (according to Corollary \ref{co:tangstick}, since $r = 3 \geq 2$), and this 15-dimensional subspace is sticking to the degeneracy submanifold strongly, as the distance function in this subspace has order $3$ (according to Theorem \ref{th:minden}).

Examples \ref{ex:ssh}, \ref{ex:ising}, and \ref{ex:stabiliser} demonstrate that known properties of energy splittings in tight-binding models and interacting spin models can be used to illustrate the geometrical description of the degeneracy submanifolds in the space of Hermitian matrices. 
An exciting application of our results would be to make use of this connection in the reverse direction: to describe the degeneracy submanifold using geometrical tools, translate that information to the language of quantum Hamiltonians, and thereby enable the construction of interacting quantum systems with robust energy degeneracies, or even novel quantum error correction codes.
    
\end{ex}

\section{Proofs and details}
\label{sec:proofs}

This section contains the proofs and further details of the statements in Section~\ref{sec:summary}. Every subsection here has the same title as the corresponding subsection in Section~\ref{sec:summary}.

\subsection{The Schrieffer–Wolff transformation induces a local chart}\label{ss.prsw} 

\begin{proof}[Proof of Theorem~\ref{th:sw} on the exact SW decomposition]
Recall that the diagonal matrix $H_0 \in \Sigma_k$ is fixed, and define $\breve{B}:=B+T+H_{\textup{eff}}=\widetilde{B}-H_0$, see decomposition~\eqref{eq:koo} and Figure~\ref{fig:pikto} for notations. $\breve{B}$ is a block diagonal matrix of $k \times k$ and $(n-k) \times (n-k)$ blocks. The correspondence  
    \begin{equation}
        (S, \breve{B}) \mapsto H=f(S, \breve{B}):=e^{iS} \cdot (H_0+\breve{B}) \cdot e^{-iS}
    \end{equation}
    defines  a real analytic map $f: \R^{n^2} \to \R^{n^2}$, in fact, the dimension of the space of $(S, \breve{B})$ is $(n-k)^2+k^2+2 \cdot k \cdot (n-k)=n^2$. We apply the analytic inverse function theorem \cite[pg. 47, Thm. 2.5.1.]{krantz2002primer}.

We show that the Jacobian  of $H$ at $0$ has maximal rank, that is, $\rk ((\mathrm{d} f)_0)=n^2$.
	    This is equivalent to the fact that for every fix $(S, \breve{B}) \neq (0,0)$, the first-order part of $f(tS, t \breve{B})=e^{itS} \cdot (H_0+t \breve{B}) \cdot e^{-itS}$ at $t =0$ is nonzero. That is, 
\begin{equation}\label{eq:erinto1}
  \left.  \frac{\mathrm{d}}{\mathrm{d}t} \left(e^{itS} \cdot (H_0+t \breve{B}) \cdot e^{-itS} \right) \right|_{t=0}=
   i [S, H_0]+ \breve{B} \neq 0.
\end{equation}
To show that $i[S, H_0]+ \breve{B} \neq 0$, consider the entry $a,b$ of the commutator:
\begin{equation}\label{eq:erinto2}
    [S, H_0]_{a,b}=\sum_{l=1}^n 
    (S_{a,l}H_{0;l,b}-H_{0;a,l}S_{l,b})=
    S_{a,b}(\lambda_b-\lambda_a).
\end{equation}
It shows that $[S, H_0]$ is an off-block matrix, and it is nonzero if $S \neq 0$. Indeed,  $S_{a,b} \neq 0$ can happen only if the index $a,b$ satisfies $a \leq k<b$ or $b\leq k<a$, hence in this case $\lambda_b-\lambda_a \neq 0$. Since $\breve{B}$ is block diagonal, for this $a,b$ index $\breve{B}_{a,b}=0$,  therefore it cannot cancel $[S, H_0]_{a,b} \neq 0$. That is, $(i[S, H_0]+\breve{B})_{a,b} \neq 0$, if $S_{a,b} \neq 0$, implying that $i[S, H_0]+\breve{B}=0$ only if $S=0$ and $\breve{B}=0$.  

By the analytic inverse function theorem, there is a neighborhood $\widetilde{\mathcal{W}}_0 $ of $(0,0) \in \R^{n^2}$ (in the $(S, \breve{B})$ space)  and $\mathcal{V}_0$ of $H_0 \in \mbox{Herm}(n)$ such that $f|_{\widetilde{\mathcal{W}}_0}: \widetilde{\mathcal{W}}_0 \to \mathcal{V}_0$ is a bijection, whose inverse is also analytic. This gives the unique decomposition if $H \in \mathcal{V}_0$ and $(\breve{B}, S) \in \widetilde{W}_0$, and the analytic dependence of $\breve{B}$ and $S$ on $H$.

Moreover, we have to show that the `lowest $k$ state property' is satisfied, that is, the first $k$ columns of $e^{iS}$ span the sum of the eigenspaces of $H$ corresponding to the lowest $k$ eigenvalues.  This is an additional property, which possibly requires the choice of smaller neighborhoods $\mathcal{V}_0$ and $\widetilde{W}_0$. Observe the following:
\begin{enumerate}[label=(\alph*)]
    \item The eigenvalues of $\widetilde{B}$ and $H$ are equal.
    \item The subspace spanned by the first $k$ columns of $e^{iS}$ is exactly the sum of the eigenspaces of the $k \times k$ block of $\widetilde{B}$.
\end{enumerate}
Therefore, the lowest $k$ state property is equivalent to the statement that $\lambda < \mu$ holds for each eigenvalue $\lambda$ of the $k \times k$ block of $\widetilde{B}$ and  $\mu$ of the $(n-k) \times (n-k)$ block of $\widetilde{B}$.
 The inequality $\lambda < \mu$ holds for $H=H_0$, and the eigenvalues are continuous functions of $H$.
 Therefore, the inequality, and hence, the lowest $k$ state property holds for every $H$ in a sufficiently small neighborhood of $H_0$.
 
\end{proof}

\begin{remark}
    Surprisingly, in our proof the \emph{continuous} behaviour of the eigenvalues implies  the \emph{analytic} behaviour of the sum of the eigenspaces corresponding to the lowest $k$ eigenvalues around $H_0 \in \Sigma_k$ (more precisely, the projector onto this eigenspace). Indeed, the dependence of $S$ on $H$ is analytic, and the sum of the eigenspaces is equal to the subspace spanned by the first $k$ columns of $e^{iS}$, which is deduced from the continuous behaviour of the eigenvalues, and it implies the analytic dependence of the projector.
\end{remark}

In the \hyperref[ss:Bravyi]{Appendix} we provide exact conditions for
the neighborhoods $\mathcal{V}_0$, $\mathcal{W}_0$ and $\mathcal{S}_0$ based on the results in \cite{BravyiSW}.

\begin{proof}[Proof of Corollary~\ref{co:locchart} on the local chart induced by the SW decomposition]
    
(a) The proof of Theorem~\ref{th:sw} shows that the correspondence $H \mapsto (x_i, y_j)$ is an analytic bijection with analytic inverse between the neighborhood $\mathcal{V}_0 \subset \mbox{Herm}(n)$ of $H_0$ and a neighborhood of $0 \in \R^{n^2}$. (b) Equation~\eqref{eq:locchart} is a straightforward consequence of the construction. Indeed, $y_j=0$ means that $H_{\textup{eff}}=0$, and in this case $H \in \Sigma_k$ by equation~\eqref{eq:koo}.

\end{proof}

\begin{remark}
    The existence of the SW chart fitting to $\Sigma_k$ in the sense of Corollary~\ref{co:locchart} gives an alternative proof for the Neumann--Wigner theorem, stating that $\Sigma_k$ is a submanifold of codimension $k^2-1$ (see Table~\ref{tab:neumann} for the sketch of the original proof).
\end{remark}

\begin{proof}[Proof of Theorem~\ref{th:proj} on the projection to $\Sigma_k$] 
Let $\mathcal{P}$ denote the subspace spanned by the first $k$ columns of $e^{iS}$, which is equal to the sum of the eigenspaces of the lowest $k$ eigenvalues of $H$. Let $\mathcal{P}^{\perp}$ denote its Hermitian complement, this is the subspace spanned by the last $n-k$ columns of $e^{iS}$, and it also agrees with  the sum of the eigenspaces of the highest $n-k$ eigenvalues of $H$. In particular, $\mathcal{P}$ and $\mathcal{P}^{\perp}$ are invariant subspaces of $H_{\Sigma}$ and $H_{\textup{proj}}$. We show that the restrictions of $H_{\textup{proj}}$ and $H_{\Sigma}$ to both $\mathcal{P}$ and $\mathcal{P}^{\perp}$ are equal.

Both $H_{\Sigma}$ and $H_{\textup{proj}}$ has $k$-fold degeneracy, and 
$\mathcal{P}$ is the eigenspace corresponding to the lowest  $k$ degenerate eigenvalues of $H_{\textup{proj}}$, and also $H_{\Sigma}$. Moreover the degenerate eigenvalues of these matrices are also equal. Indeed, the trace of the $k \times k$ block of $H_{\textup{proj}}$ is equal to $\sum_{i=1}^k \lambda_i$ (where $\lambda_i$ denotes the eigenvalues of $H$), since $H_{\textup{eff}}$ has trace 0. Hence the restrictions of $H_{\textup{proj}}$ and $H_{\Sigma}$ to their common degenerate eigensubspace $\mathcal{P}$ are equal.

On the other hand, their restrictions to $\mathcal{P}^{\perp}$ are equal to the restriction of $H$. Then $H_{\textup{proj}}$ and $H_{\Sigma}$ agree on both $\mathcal{P}$ and $\mathcal{P}^{\perp}$, hence $H_{\textup{proj}}=H_{\Sigma}$.
    
\end{proof}

\subsection{Energy splitting and the distance from $\Sigma_k$}\label{ss:dist}  In this subsection we prove Theorem~\ref{th:distance} on the distance from $\Sigma_k$. We start with its easy parts.

\begin{prop}\label{co:distHsigma}
    $d(H, H_{\Sigma})=\| H_{\textup{eff}} \|=\sqrt{k} \cdot D_k(H)$.
\end{prop}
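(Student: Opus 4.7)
The plan is to compute $H - H_\Sigma$ in two different ways and apply the unitary-invariance of the Frobenius norm (Lemma~\ref{le:uniact}) in each case, then compare.

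First, I would use the SW decomposition to establish $d(H,H_\Sigma) = \|H_{\textup{eff}}\|$. By Theorem~\ref{th:sw} and the associated decomposition \eqref{eq:koo}, we have $H = e^{iS}(H_0 + B + T + H_{\textup{eff}})e^{-iS}$, and by Theorem~\ref{th:proj} the projection satisfies $H_\Sigma = H_{\textup{proj}} = e^{iS}(H_0 + B + T)e^{-iS}$. Subtracting gives
\begin{equation}
H - H_\Sigma = e^{iS} \, H_{\textup{eff}} \, e^{-iS},
\end{equation}
and Lemma~\ref{le:uniact} immediately yields $d(H,H_\Sigma) = \|H-H_\Sigma\| = \|H_{\textup{eff}}\|$.

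Second, I would compute $d(H,H_\Sigma)$ directly from the eigendecomposition definition \eqref{eq:hsigma}. Writing $H = U\Lambda U^{-1}$ with $\Lambda = \mathrm{diag}(\lambda_1,\dots,\lambda_n)$ and $H_\Sigma = U\Lambda_\Sigma U^{-1}$ with $\Lambda_\Sigma$ obtained by replacing the first $k$ entries of $\Lambda$ by $\overline\lambda$, we get $H - H_\Sigma = U(\Lambda - \Lambda_\Sigma)U^{-1}$, where $\Lambda - \Lambda_\Sigma$ is diagonal with entries $\lambda_j - \overline\lambda$ for $j\le k$ and zero otherwise. Applying Lemma~\ref{le:uniact} again,
\begin{equation}
d(H,H_\Sigma)^2 = \|\Lambda - \Lambda_\Sigma\|^2 = \sum_{j=1}^k (\lambda_j-\overline\lambda)^2 = k \cdot D_k(H)^2,
\end{equation}
so $d(H,H_\Sigma) = \sqrt{k}\cdot D_k(H)$. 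Chaining the two identities completes the proof.

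There is no real obstacle — both steps are one-line calculations once the right objects are in place. The only subtlety worth flagging is that the second computation is independent of the choice of diagonalising $U$, so it is well defined even at points of $\Sigma_{[k,k+1]}$ where $H_\Sigma$ may not be unique; and that the identification of eigenvalues used implicitly in the first computation could alternatively be extracted from the block structure of $H_0 + B + T + H_{\textup{eff}}$ (whose upper-left $k\times k$ block is $(\lambda^{(0)}_1 + t)I_k + H_{\textup{eff}}|_k$ with $H_{\textup{eff}}|_k$ traceless), providing a direct check that the eigenvalues of $H_{\textup{eff}}$ are precisely $\lambda_j - \overline\lambda$ and hence $\|H_{\textup{eff}}\|^2 = \sum_{j=1}^k (\lambda_j - \overline\lambda)^2$.
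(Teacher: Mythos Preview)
Your proof is correct and follows essentially the same approach as the paper: compute $H-H_\Sigma$ via the SW decomposition (using $H_\Sigma=H_{\textup{proj}}$ from Theorem~\ref{th:proj}) to get $\|H_{\textup{eff}}\|$, and via the eigendecomposition \eqref{eq:hsigma} to get $\sqrt{k}\,D_k(H)$, invoking Lemma~\ref{le:uniact} in both cases. The paper's version is slightly terser but identical in substance.
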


\begin{proof}
    $d(H, H_{\Sigma})= \| H-H_{\Sigma} \| =\| e^{iS} \cdot H_{\textup{eff}} \cdot e^{-iS} \|= \| H_{\textup{eff}} \|$, the last equation follows from Lemma~\ref{le:uniact}. Applying it to Equation~\eqref{eq:hsigma} implies that $d(H, H_{\Sigma})=d(\Lambda, \Lambda_{\Sigma})$, which is obviously equal to 
    \begin{equation}
     \| \Lambda- \Lambda_{\Sigma} \| =   \sqrt{\sum_{i=1}^k (\lambda_i-\overline{\lambda})^2}=\sqrt{k} \cdot \sigma(\lambda_1, \dots, \lambda_k)=\sqrt{k} \cdot D_k(H),
    \end{equation}
    where the last equality defines $D_k(H)$ as the standard deviation of the lowest $k$ eigenvalues of $H$.
    \end{proof}

Next we show the harder part, namely, for every $H \in \mbox{Herm}(n) \setminus \Sigma_{[k,k+1]}$ the unique closest point of $\Sigma_k$ to $H$ is $H_{\Sigma}$. 
First of all, notice that $\Sigma_k$ is not a compact set, hence, theoretically it might happen that  $\Sigma_k$ does not have closest point to $H$, more precisely, the distance function $d_H: \mbox{Herm}(n) \to \R $ defined as $d_H(G)=d(H, G)$ does not have a minimum on $\Sigma_k$. Avoiding this possibility causes several complications, which are managed at the end of this subsection. But before this,  we first observe that a minimum point of the restricted function $d_H|_{\Sigma_k}$ is also critical point of it, therefore we have the following.

\begin{prop}\label{pr:criti}
    If $\mu$ is a  minimum value of $d_H$ on $\Sigma_k$, that is, $\mu=\min \{d(H, G) \ | \ G \in \Sigma_k \}$, and $K \in \Sigma_k$ satisfies $d(H, K)=\mu$, then the line $\{tK+(1-t)H \ | \ t \in \R \}$ is orthogonal to $\Sigma_k$ at $K$.
\end{prop}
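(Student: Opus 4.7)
The plan is to reduce the statement to the standard first-order necessary condition for a constrained minimum on a smooth submanifold. Since $\Sigma_k$ is a smooth submanifold of $\mbox{Herm}(n)$ of codimension $k^2-1$ (Neumann--Wigner, cf.\ Table~\ref{tab:neumann}), the tangent space $T_K\Sigma_k$ is well defined at $K$, and classical calculus applies. The key observation is that minimising $d_H$ on $\Sigma_k$ is equivalent to minimising its square $f(G):=d(H,G)^2=\|H-G\|^2$, which is smooth (in fact a quadratic polynomial) on the ambient Euclidean space $\mbox{Herm}(n)$. This replacement lets us compute a gradient without any differentiability issue.

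First I would fix an arbitrary tangent vector $v\in T_K\Sigma_k$ and pick a smooth curve $\gamma\colon(-\varepsilon,\varepsilon)\to\Sigma_k$ with $\gamma(0)=K$ and $\gamma'(0)=v$. Since $K$ realises the minimum of $f$ on $\Sigma_k$, the composition $t\mapsto f(\gamma(t))=\langle H-\gamma(t),H-\gamma(t)\rangle$ has a local minimum at $t=0$. Differentiating under the Frobenius inner product (which is bilinear and symmetric) gives
\begin{equation*}
0 = \left.\frac{\mathrm{d}}{\mathrm{d}t}\right|_{t=0} \|H-\gamma(t)\|^2 = -2\,\langle H-K,\,v\rangle.
\end{equation*}
Since $v\in T_K\Sigma_k$ was arbitrary, this yields $\langle H-K,v\rangle=0$ for every $v\in T_K\Sigma_k$, i.e.\ $H-K\perp T_K\Sigma_k$.

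Finally, the line $L=\{tK+(1-t)H\mid t\in\R\}$ passes through $K$ (at $t=1$) with direction vector $K-H$, so orthogonality of $L$ to $\Sigma_k$ at $K$ is by definition the statement that $K-H$ is perpendicular to $T_K\Sigma_k$, which is exactly what we have just established. This completes the argument.

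There is no real obstacle in this proof; the only mild subtlety is that $d_H$ itself is not differentiable at $G=H$, which is why I pass to the squared distance $f$. One may worry that $\Sigma_k$ is not closed (so a minimising $K$ need not exist a priori), but the proposition assumes the existence of $K$ as a hypothesis, so this is moot here. The issue of existence of a closest point will have to be handled separately in the proof of Theorem~\ref{th:distance} itself, likely by comparing with the candidate $H_\Sigma$ produced from the spectral decomposition via Eq.~\eqref{eq:hsigma} and showing that $H_\Sigma$ attains the infimum.
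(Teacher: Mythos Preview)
Your proof is correct and follows essentially the same approach as the paper: both argue that a minimum of the distance function on the smooth submanifold $\Sigma_k$ is a critical point, and that the resulting first-order condition forces $H-K$ to be orthogonal to $T_K\Sigma_k$. Your choice to pass to the squared distance $f(G)=\|H-G\|^2$ is a clean way to handle differentiability (the paper works directly with $d_H$ and its gradient, which is fine since $K\neq H$ in the nontrivial case), but otherwise the argument is the same.
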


\begin{proof}
   $d_H$ has a global minimum at $K$, in particular, it is a local minimum. Since $\Sigma_k$ is a smooth manifold, every local minimum of a smooth function is a critical point, meaning that the differential $(\mathrm{d} (d_H|_{\Sigma_k}))_K$ of $d_H|_{\Sigma_k}$ at $K$ is 0. It means that the gradient of $d_H$ at $K$ is orthogonal to $\Sigma_k$. Indeed, for any tangent vector $v \in T_K \Sigma_k$ of $\Sigma_k$ at $K$ the evaluation of the differential is
    \begin{equation}
(\mathrm{d} (d_H|_{\Sigma_k}))_K (v)=(\mathrm{d} (d_H))_K (v)= \langle \mbox{grad}_K (d_H), v \rangle,
    \end{equation}
where the first equality comes from the definition of the restriction, and second equality is definition of the gradient.
    Hence $(\mathrm{d} (d_H|_{\Sigma_k}))_K$ is zero for every $v \in T_K \Sigma_k$ if and only if $\mbox{grad}_K(d_H)$ is orthogonal to $T_K \Sigma_k$.

    On the other hand, since $d_H$ is the distance from $H$, the gradient $\mbox{grad}_K(d_H)$ is parallel to $K-H$, hence it is parallel to the line joining $H$ and $K$. Therefore, this line is orthogonal to the tangent space $T_K \Sigma_k$, hence to $\Sigma_k$ at $K$.

    \end{proof}

    From now on we look for all the lines through $H$ orthogonal to $\Sigma_k$. According to Proposition~\ref{pr:criti}, the intersection points of these lines with $\Sigma_k$ are the candidates for the closest point of $\Sigma_k$ to $H$.
Let 
    \begin{equation}
        L_H=\{t H_{\Sigma} + (1-t) H \ | \ t \in \R \} 
    \end{equation}
  be the line  joining $H$ and $H_{\Sigma}$.

\begin{prop}\label{pr:YHorthog}
    $L_H$ is orthogonal to $\Sigma_k$ at the intersection point $H_{\Sigma}$.
\end{prop}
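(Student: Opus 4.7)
The plan is to show that the direction vector $H-H_\Sigma$ lies in the normal space $N_{H_\Sigma}\Sigma_k$; equivalently, that it is Frobenius-orthogonal to every tangent vector of $\Sigma_k$ at $H_\Sigma$. By Lemma~\ref{le:uniact} the inner product is preserved under unitary conjugation, so I may pre-conjugate and assume $H_\Sigma$ is already diagonal, with its $k$-fold degenerate ground eigenvalue $\overline{\lambda}$ in the first $k$ positions. Since $H_\Sigma\in\Sigma_k$ and satisfies the hypotheses of Theorem~\ref{th:sw}, I would apply Corollary~\ref{co:locchart} with $H_\Sigma$ playing the role of the reference matrix $H_0$; this yields a local chart around $H_\Sigma$ in which $\Sigma_k$ is the common zero locus of the $y$-coordinates, so $T_{H_\Sigma}\Sigma_k$ is spanned by the vectors $\partial H/\partial x_i|_{H_\Sigma}$.

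Next I would compute these $x$-derivatives from the parametrisation $H=e^{iS}(H_\Sigma+B+T+H_{\textup{eff}})e^{-iS}$, differentiated at $(S,B,T,H_{\textup{eff}})=0$. The $B$-derivatives produce the canonical basis vectors supported on the $(n-k)\times(n-k)$ lower-right block; the $T$-derivative produces the multiples of the upper-left projector $I^{(k)}$; and the $S$-derivatives produce commutators $i[c_l,H_\Sigma]$ for off-block basis elements $c_l$. By the same calculation as Eq.~\eqref{eq:erinto2}, the map $c_l\mapsto i[c_l,H_\Sigma]$ on off-block Hermitian matrices multiplies each entry by a nonzero factor $\lambda_b-\lambda_a$ (nonzero because $\overline{\lambda}<\lambda_{k+1}\le\cdots\le\lambda_n$), and hence spans the entire off-block Hermitian subspace. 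Assembling these pieces gives the basis-free characterisation
\begin{equation*}
T_{H_\Sigma}\Sigma_k=\{M\in\mbox{Herm}(n)\mid PMP=cP \text{ for some } c\in\R\},
\end{equation*}
where $P$ is the orthogonal projector onto the degenerate eigenspace $\mathcal{P}$ of $H_\Sigma$. Its Frobenius-orthogonal complement is
\begin{equation*}
N_{H_\Sigma}\Sigma_k=\{N\in\mbox{Herm}(n)\mid N=PNP \text{ and } \tr(N)=0\},
\end{equation*}
i.e.\ traceless Hermitian matrices supported on $\mathcal{P}$.

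Finally I would verify that $H-H_\Sigma$ belongs to $N_{H_\Sigma}\Sigma_k$. By the SW decomposition of $H$ with respect to the original $H_0$ and Theorem~\ref{th:proj}, one has $H-H_\Sigma=e^{iS}H_{\textup{eff}}e^{-iS}$. Property (3) of Theorem~\ref{th:sw} identifies $\mathcal{P}$ as the span of the first $k$ columns of $e^{iS}$, so $P=e^{iS}P_0e^{-iS}$ for the standard projector $P_0$ onto the first $k$ canonical basis vectors. Since $H_{\textup{eff}}=P_0 H_{\textup{eff}}P_0$ and $\tr(H_{\textup{eff}})=0$, a direct substitution gives $P(H-H_\Sigma)P=e^{iS}P_0 H_{\textup{eff}}P_0 e^{-iS}=H-H_\Sigma$ and $\tr(H-H_\Sigma)=\tr(H_{\textup{eff}})=0$, which establishes $H-H_\Sigma\in N_{H_\Sigma}\Sigma_k$, hence the orthogonality of $L_H$ to $\Sigma_k$ at $H_\Sigma$. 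I expect the main obstacle to be Step 2: verifying that the commutator directions really exhaust the off-block part of the tangent space, which relies crucially on the gap $\overline{\lambda}<\lambda_{k+1}$ that distinguishes the degenerate block from the rest; once this is in place, the rest of the argument is a short manipulation of the SW decomposition.
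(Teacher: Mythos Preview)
Your approach is correct and mirrors the paper's: reduce via unitary conjugation so that $H_\Sigma$ is diagonal, identify $T_{H_\Sigma}\Sigma_k$ through the SW chart as the matrices whose upper-left $k\times k$ block is scalar (your Step~2 is precisely the content of the paper's Lemma~\ref{le:tangspace} and the remark following it), and then check that $H-H_\Sigma$ lies in the orthogonal complement. The only difference is in that final check: the paper simply observes that in the conjugated frame $H-H_\Sigma=\Lambda-\Lambda_\Sigma$ is diagonal with entries $\lambda_i-\overline\lambda$ in the first $k$ slots and zero elsewhere, so $\langle\Lambda-\Lambda_\Sigma,K\rangle=c\sum_{i=1}^k(\lambda_i-\overline\lambda)=0$ for any tangent $K$; you instead route through Theorem~\ref{th:proj} and the SW decomposition of $H$ with respect to an ``original $H_0$''. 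This detour is harmless but introduces an auxiliary base point that the proposition does not supply---in the stated generality $H\in\mbox{Herm}(n)\setminus\Sigma_{[k,k+1]}$ you would need to justify that such an $H_0$ with a valid SW decomposition exists (taking $H_0=H_\Sigma$ works and makes the decomposition trivial, at which point your argument collapses to the paper's direct computation).
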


Before the proof we highlight its essential step, the descripiton of the tangent space.

\begin{lem}\label{le:tangspace}
    Let $H_0 \in \Sigma_k$ be a diagonal matrix. Then the tangent space $T_{H_0} \Sigma_k$ of $\Sigma_k$ at $H_0$ consists of the Hermitian matrices whose upper-left $k \times k$ block is a scalar matrix. See Figure~\ref{fig:SWtangent}. 
\end{lem}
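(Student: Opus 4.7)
My plan is to deduce the lemma directly from the SW chart of Corollary~\ref{co:locchart}, by differentiating the decomposition~\eqref{eq:koo} at $H_0$.

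First, I will record a dimension check as a sanity test: Hermitian matrices whose upper-left $k\times k$ block is a scalar matrix form a real subspace of $\mathrm{Herm}(n)$ of dimension $n^2-(k^2-k)-(k-1)=n^2-k^2+1$, which matches $\dim \Sigma_k$. So it suffices to prove either of the two inclusions, and I will in fact produce an explicit linear isomorphism that identifies the two spaces.

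Next, I will differentiate the analytic map $f(S,B,T,H_{\mathrm{eff}})=e^{iS}(H_0+B+T+H_{\mathrm{eff}})e^{-iS}$ at the origin, exactly as in the proof of Theorem~\ref{th:sw}. A direct Taylor expansion gives
\begin{equation}
(\mathrm{d}f)_{(0,0,0,0)}(S_1,B_1,T_1,H_{\mathrm{eff},1})\;=\;i[S_1,H_0]+B_1+T_1+H_{\mathrm{eff},1}.
\end{equation}
By Theorem~\ref{th:sw}, $f$ is an analytic diffeomorphism near the origin, so $(\mathrm{d}f)_{(0,0,0,0)}$ is a linear isomorphism. Consequently, every $H_1\in T_{H_0}\mathrm{Herm}(n)$ admits a \emph{unique} decomposition of the above form. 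By Corollary~\ref{co:locchart}, the tangent space $T_{H_0}\Sigma_k$ consists of those $H_1$ whose $H_{\mathrm{eff},1}$ component vanishes.

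The final step is a block bookkeeping. Recall from the proof of Theorem~\ref{th:sw} that $i[S_1,H_0]$ is off-block (its $(a,b)$ entry equals $iS_{1;a,b}(\lambda_b-\lambda_a)$ and vanishes whenever $a,b$ lie on the same side of $k$), while $B_1$ lives in the lower-right $(n-k)\times(n-k)$ block, $T_1=t_1 I_k$ in the upper-left $k\times k$ block, and $H_{\mathrm{eff},1}$ is a traceless Hermitian $k\times k$ matrix in the upper-left block. Therefore the upper-left $k\times k$ block of $H_1$ equals $t_1 I_k + H_{\mathrm{eff},1}$, and the orthogonal decomposition of any Hermitian $k\times k$ matrix $M$ into its trace part $\tfrac{\tr(M)}{k}I_k$ and its traceless part shows that $H_{\mathrm{eff},1}=0$ if and only if this block is a scalar matrix. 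That is precisely the claim, and no step other than a careful tracking of the block structure is required, so I do not foresee a real obstacle beyond clean presentation.
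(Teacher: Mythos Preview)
Your proof is correct and follows essentially the same approach as the paper: both use the SW chart of Corollary~\ref{co:locchart} to identify $T_{H_0}\Sigma_k$ with the directions in which the $H_{\mathrm{eff}}$-coordinates vanish to first order. Your version is more explicit---you actually write out the linearization $(\mathrm{d}f)_0(S_1,B_1,T_1,H_{\mathrm{eff},1})=i[S_1,H_0]+B_1+T_1+H_{\mathrm{eff},1}$ and read off the block structure---which is exactly what the paper sketches in the Remark immediately following its proof of the lemma.
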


\begin{figure}
	\begin{center}		\includegraphics[width=0.5\columnwidth]{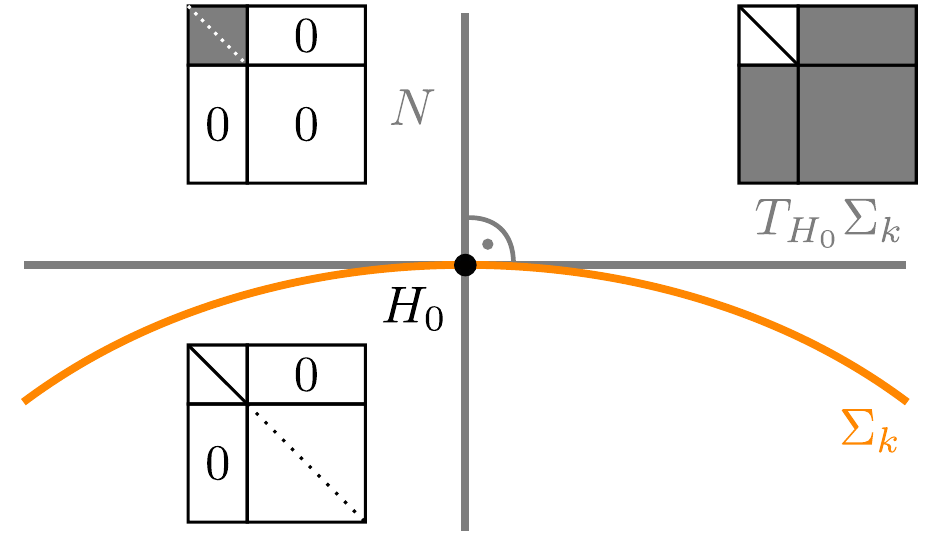}
	\end{center}
	\caption{The matrix form of the elements of the tangent space $T_{H_0}\Sigma_k$ and normal space $N$ of $\Sigma_k$ at a diagonal matrix $H_0 \in \Sigma_k$. By Lemma~\ref{le:tangspace}, the tangent space consists of matrices whose upper-left $k \times k$ block is a scalar matrix. The normal space, i.e. the orthogonal complement of the tangent space agrees with the space of effective Hamiltonians $\mbox{Herm}_0(k) \subset \mbox{Herm}(n)$: its elements have only a traceless upper left $k \times k$ block and zero entries everywhere else. \label{fig:SWtangent}}
\end{figure}

\begin{proof}[Proof of Lemma~\ref{le:tangspace}]
By the SW decomposition \eqref{eq:koo} and Corollary~\ref{co:locchart}, $\Sigma_k$ is locally given by the equation $H_{\textup{eff}}=0$, that is, $y_1=\dots =y_{k^2-1}=0$. Hence the tangent space $T_{H_0} \Sigma_k$ consists of the directions orthogonal to the $H_{\textup{eff}}$ directions, which includes everything in the $(n-k) \times (n-k)$ block, in the off-block, and the trace of the $k \times k$ block, proving the lemma.

\end{proof}

\begin{rem}\label{re:Sgenerates} A more precise analysis also highlights the role of the off-block form of the exponent $S$ in SW decomposition \eqref{eq:koo}: its variation changes the off-block elements of $H_0$ up to first order. Indeed, by Equation~\eqref{eq:erinto1}, every tangent vector can be written in form $i[S,H_0]+\breve{B}$ with an off-block $S$ and a block diagonal $\breve{B}$, where now $\breve{B}=B+T$, that is, $H_{\textup{eff}}=0$, since we are in $\Sigma_k$. By Equation~\eqref{eq:erinto2}, the entries of this tangent vector are $iS_{a,b}(\lambda_b-\lambda_a)+\breve{B}_{a,b}$, showing that $S$ generates the off-block elements, and $\breve{B}$ generates the $(n-k) \times (n-k)$ block and the trace of the $k \times k$ block. 

Moreover, one can prove Lemma~\ref{le:tangspace} without referring to the SW decomposition, but starting from an over-parametrization of $\Sigma_k$ around $H_0$ of the form $e^{iG} (H_0+ \breve{B})e^{-iG}$, where $\breve{B}=B+T$ as above, but now $G$ can be any element of $\mbox{Herm}(n)$. This leads to tangent vectors of the form $i[G,H_0]+\breve{B}$ with entries $iG_{a,b}(\lambda_b-\lambda_a)+\breve{B}_{a,b}$, which also shows that the off-block elements of a tangent vector depend on the off-block entries of $G$ and it can be arbitrary, since $\lambda_b-\lambda_a \neq 0$  between different blocks; the $k \times k$ block of $G$ is irrelevent, since $\lambda_b-\lambda_a=0$ if $a, b \leq k$; and the $(n-k) \times (n-k)$ block of $G$ contributes only to the $(n-k) \times (n-k)$ block of the tangent vector, which can be arbitrary by a choice of $\breve{B}$, as well as the trace of the $k \times k$ block.

\end{rem}

\begin{proof}[Proof of Proposition~\ref{pr:YHorthog}]  It is enough to show that $H -H_{\Sigma}$ is orthogonal to the tangent space $T_{H_{\Sigma}} \Sigma_k$ of $\Sigma_k$ at $H_{\Sigma}$. Because of Lemma~\ref{le:uniact} it is enough to show the orthogonality in the diagonal case, namely, $\Lambda-\Lambda_{\Sigma}$ is orthogonal to $T_{\Lambda_{\Sigma}} \Sigma_k$. But $\Lambda-\Lambda_{\Sigma}$ is a diagonal matrix with nonzero elements only in the $k \times k$ block, and its trace is 0. More precisely, the diagonal  elements are $\lambda_i-\overline{\lambda}$ ($i=1, \dots, k$) where $\lambda_i$ are lowest $k$ eigenvalues of $H$ and $\overline{\lambda}$ is their mean value. By Lemma~\ref{le:tangspace}, the $k \times k$ block of each tangent vector $K \in T_{\Lambda_{\Sigma}} \Sigma_k$ is a scalar matrix, we denote its diagonal entries by $c$. Then
\begin{equation}
\langle \Lambda-\Lambda_{\Sigma}, K \rangle =\sum_{i=1}^kc(\lambda_i-\overline{\lambda})=0,
\end{equation}
proving the proposition.
\end{proof}

\begin{rem}  If $H $ is sufficiently close to $\Sigma_k$, then the orthogonality of $L_H$ to $\Sigma_k$ implies that $H_{\Sigma}$ is the unique closest point of $\Sigma_k$ to $H$. This follows from the tubular neighborhood theorem \cite{foote1984regularity} (see also \cite[pg. 74, exercise 3.]{guillemin2010differential}): if a point $p \in \R^n$  is sufficiently close to a submanifold $N \subset \R^n$, then it has a unique closest point $p' \in N$, characterised by the orthogonality of the line joining $p$ and $p'$ to $N$. However, it is not enough for the global version of Theorem~\ref{th:distance}, that is, for every $H \in \mbox{Herm}(n) \setminus \Sigma_{[k,k+1]}$.
\end{rem}

In the following we characterise the lines orthogonal to $\Sigma_k$ at a point.

\begin{prop}\label{pr:YHorthog2}
    Let $L$ be a line in $\mbox{Herm}(n)$ through a point $H_0 \in \Sigma_k$. Then the following are equivalent:
    \begin{enumerate}
    
    \item $L=L_H$ for an element $H \in L \setminus \{H_0\}$ (in particular, $H_{\Sigma}=H_0$).

     \item $L$ is orthogonal to $\Sigma_k$ at $H_0$.

    \item $L$ can be parametrized as follows: Starting with any diagonalization $H_{0}=U \Lambda_{0} U^{-1}$ with increasing order of the eigenvalues, we choose a  $k \times k$ diagonal matrix $D$ of trace 0 with increasing order of the diagonal elements in the upper-left block, and take the parametrization
    \begin{equation}\label{eq:egyenes}
        t \mapsto U (\Lambda_{0} + t D) U^{-1}.
    \end{equation}
    
    \item $L=L_H$ for every $H \in L \setminus \{H_0 \}$ sufficiently close to $H_0$ (in particular, $H_{\Sigma}=H_0$).

    \end{enumerate}
\end{prop}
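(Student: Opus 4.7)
The plan is to establish the cycle $(3) \Rightarrow (4) \Rightarrow (1) \Rightarrow (2) \Rightarrow (3)$, reducing the geometry at a general $H_0$ to the diagonal case via the unitary invariance of the Frobenius inner product (Lemma~\ref{le:uniact}). The underlying geometric fact is that by Lemma~\ref{le:tangspace}, at a diagonal $H_0 \in \Sigma_k$ the normal space of $\Sigma_k$ is exactly the embedded subspace $\mbox{Herm}_0(k) \subset \mbox{Herm}(n)$ of Hermitian matrices supported in a traceless upper-left $k \times k$ block (cf.~Figure~\ref{fig:SWtangent}), so orthogonality of a line to $\Sigma_k$ at $H_0$ translates, after conjugation by a diagonalizer, into the direction vector lying in this subspace.

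The easy steps are $(4) \Rightarrow (1)$, by picking any $H$ in the prescribed neighborhood other than $H_0$, and $(1) \Rightarrow (2)$, which is Proposition~\ref{pr:YHorthog} applied at $H_\Sigma = H_0$. For $(2) \Rightarrow (3)$, I would fix an arbitrary diagonalization $H_0 = U \Lambda_0 U^{-1}$ with eigenvalues in increasing order, transport the orthogonality of $L$ to $\Lambda_0$, and write the transported direction vector as some $\tilde D \in \mbox{Herm}_0(k)$. Diagonalizing $\tilde D = V D V^{-1}$ with $V \in U(k)$ (extended to $U(n)$ by the identity on the lower $(n-k)$-block) and $D$ a traceless diagonal matrix whose entries can be arranged in increasing order, one checks that $V$ commutes with $\Lambda_0$ because $\Lambda_0$ is scalar on the upper $k$-block. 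Hence $UV$ also diagonalizes $H_0$ with the same $\Lambda_0$, and the line $L$ acquires the parametrization \eqref{eq:egyenes} with $UV$ in place of $U$.

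For $(3) \Rightarrow (4)$, I would compute the projection $H(t)_\Sigma$ directly from \eqref{eq:hsigma}: the eigenvalues of $H(t) = U(\Lambda_0 + tD) U^{-1}$ are the union $\{\lambda_0 + t d_i\}_{i=1}^k \cup \{\lambda_{k+1}, \dots, \lambda_n\}$, where $d_1, \dots, d_k$ are the diagonal entries of $D$, and for $|t|$ small enough that $H(t) \notin \Sigma_{[k, k+1]}$ these first $k$ are the lowest eigenvalues. Their mean equals $\lambda_0$ because $\tr(D) = 0$, so $\Lambda_\Sigma(t) = \Lambda_0$ and $H(t)_\Sigma = H_0$, forcing $L_{H(t)} = L$. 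The main obstacle I anticipate is the bookkeeping step in $(2) \Rightarrow (3)$: one has to simultaneously diagonalize the direction vector of $L$ on the upper $k$-block and absorb the corresponding $V \in U(k)$ into the $U(k)$ gauge freedom of the eigenbasis of $H_0$, while respecting the ordering convention for both the spectrum of $H_0$ and the diagonal entries of $D$.
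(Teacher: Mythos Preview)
Your cycle $(3)\Rightarrow(4)\Rightarrow(1)\Rightarrow(2)\Rightarrow(3)$ is correct, and the implications $(3)\Rightarrow(4)$, $(4)\Rightarrow(1)$, $(1)\Rightarrow(2)$ match the paper's essentially verbatim. The one substantive difference is your treatment of $(2)\Rightarrow(3)$. The paper argues this step by a dimension count in the spirit of the Neumann--Wigner argument (Table~\ref{tab:neumann}): since the lines of type (3) are already known to be orthogonal to $\Sigma_k$ at $H_0$, one counts the directions they cover ($k-1$ from the choice of $D$ up to scale, plus $k^2-k$ from the choice of the first $k$ columns of $U$ modulo $U(1)^k$), finds $k^2-1$, and concludes that every normal direction is realized. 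Your argument is instead constructive: you conjugate the given orthogonal direction to $\widetilde D \in \mbox{Herm}_0(k)$ via Lemma~\ref{le:tangspace}, diagonalize it by $V \in U(k)$, and absorb $V$ into the $U(k)$ gauge freedom of the degenerate eigenspace, explicitly producing the pair $(UV, D)$. Your approach is more self-contained and actually exhibits the diagonalizer witnessing (3), at the cost of the bookkeeping you flagged; the paper's count is shorter but relies on the reader accepting that the parameter count translates into surjectivity onto the normal space. Both are sound.
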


\begin{proof}
    (1) $\Rightarrow$ (2): Follows from Proposition~\ref{pr:YHorthog}. 
    
    (1) $\Rightarrow$ (3): Follows directly from the construction of $H_{\Sigma}$, by choosing $D=\Lambda-\Lambda_{0}$, with any choice of $U$ diagonalizing $H$ (hence, also $H_0$), where $\Lambda=U^{-1} H U$ is diagonal.

    (3) $\Rightarrow$ (4): Consider an element $G=U (\Lambda_{0} + t D) U^{-1} \in L$, then $G_{\Sigma}=H_{0}$ if the diagonal elements of $\Lambda_{\Sigma} + t D$ are still in increasing order, that is, until the line reaches the $\Sigma_{[k, k+1]}$ degeneracy stratum. This holds for sufficiently small $t$.

    (4) $\Rightarrow$ (1) is obvious. Until now we proved the equivalence of (1), (3) and (4), and any of them implies (2).

    (2) $\Rightarrow$ (3): It follows by counting the dimensions, cf. Table~\ref{tab:neumann} for a similar method. First note that (3) implies (2), that is, the parametrization \eqref{eq:egyenes} provides orthogonal lines to $\Sigma_k$ at $H_0$. Then we count how many dimensions can be covered by such a parametrization. The choice of $D$ up to a real scalar factor gives $k-1$ dimensions.  The choice of the lowest $k$ eigenspaces, that is, the first $k$ columns of $U$ up to $U(1)$ rotations gives $k^2-k$  dimensions. Hence, the dimension of the subspace of matrices which can be reached by a parametrization in form \eqref{eq:egyenes} has $k-1+k^2-k=k^2-1$ dimensions, therefore it covers the whole normal space of $\Sigma_k$ at $H_0$.

    The above implications imply the equivalence, proving the theorem.
    
\end{proof}

\begin{remark} By point (3) of the above proposition, the orthogonal lines to $\Sigma_k$ at $H_0$ can be described as `spreading the eigenvalues linearly'. 
    Although this construction seems to be insightful, its difficulty is that the direction of the line depends on both $D$  and the unitary matrix $U$ -- more precisely, on the subspaces spanned by its first $k$ columns. Another way to characterize the orthogonal lines via SW decomposition is `turning an effective Hamiltonian on': Starting from the unique SW decomposition \eqref{eq:koo0} of $H_0$ (with respect to a possibly different base point $H_0'$), we choose a $k \times k$ Hermitian matrix $H_{\textup{eff}}$ of trace 0 in the upper left block. The corresponding line is parametrized as
    \begin{equation}
        t \mapsto H_{0} + t \cdot e^{iS} \cdot H_{\textup{eff}} \cdot e^{-iS},
    \end{equation}
    which is orthogonal to $\Sigma_k$ at $H_0$, cf. Lemma~\ref{le:tangspace}.
\end{remark}

\begin{figure}
	\begin{center}		\includegraphics[width=0.9\columnwidth]{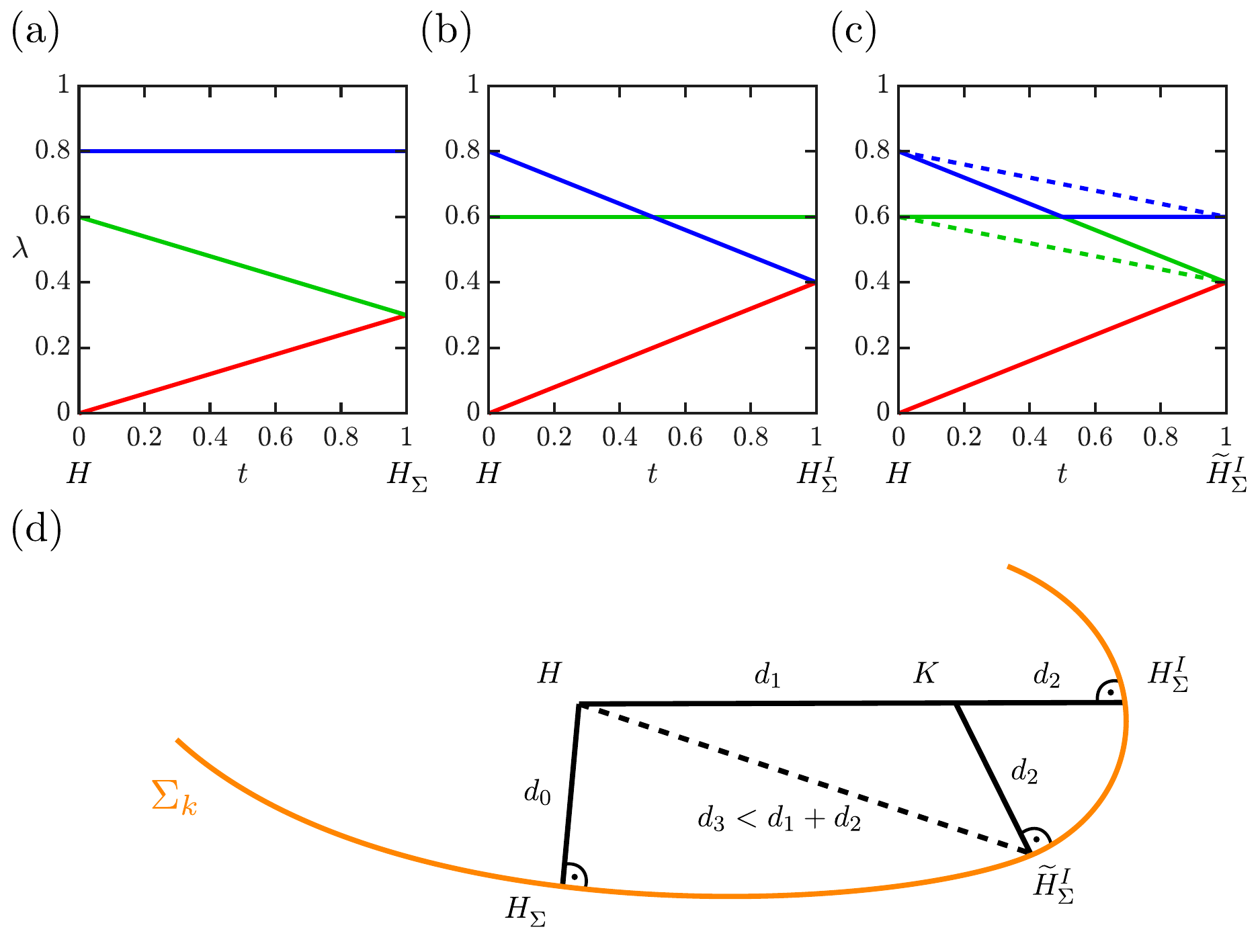}
	\end{center}
	\caption{The projections of $H$ onto $\Sigma_k$ for $n=3$, $k=2$, illustrating the proof of Proposition~\ref{pr:notclosest}.\\
 Panel $(a)$ ($(b)$ respectively):
 $H_{\Sigma}=H_{\Sigma}^{\{1,2\}}$ ($H_{\Sigma}^{I}=H_{\Sigma}^{\{1,3\}}$) is constructed by contracting the eigenvalues $\lambda_1$ and $\lambda_2$ ($\lambda_1$ and $\lambda_3$) of $H$ to their mean value. 
 The linear motion of the eigenvalues (red, green and blues line segments) realizes the line segments of $L_H$ ($L_H^{I}$), joining $H$ with $H_{\Sigma}$ ($H_{\Sigma}^{I}$).\\
Panel $(c)$: The crossing of the eigenvalues is resolved (see colors), providing another matrix $\widetilde{H}_{\Sigma}^{I} \in \Sigma_k$. The non-linear motion of the eigenvalues traces a broken-line (consisting of two line segments) joining $H$ and  $\widetilde{H}_{\Sigma}^{I}$. The vertex of the broken line is denoted by $K$. \\
Panel $(d)$: Illustration of the whole configuration in $\mbox{Herm}(n)$. By construction, the distances $d(K, H_{\Sigma}^{I})$ and  $d(K, \widetilde{H}_{\Sigma}^{I})$ are equal, it is denoted by $d_2$. With the notations $d_1=d(H, K)$ and $d_3=d(H, \widetilde{H}_{\Sigma}^{I})$ (dashed line), $d_3 < d_1+d_2$ holds by the triangle inequality, showing that  $\widetilde{H}_{\Sigma}^{I}$ is a closer point of $\Sigma_k$ to $H$ than $H_{\Sigma}^{I}$	\label{fig:eigproj}}
\end{figure}

 Next we characterise the lines through a point $H \in \mbox{Herm}(n) \setminus \Sigma_{[k,k+1]}$ which intersect $\Sigma_k$ orthogonally. $L_H$ is one of these lines. To find the others, first consider an example, a matrix $H \in \mbox{Herm}(3)$ with eigenvalues $\lambda_1 < \lambda_2 < \lambda_3$. To obtain its projection $H_{\Sigma}^{\{1,2\}} =H_{\Sigma} \in \Sigma_2$, we contract $\lambda_1 $ and $\lambda_2$ to their mean value $\overline{\lambda}^{\{1,2\}}=(\lambda_1+\lambda_2)/2$, then the line through $H_{\Sigma}$ and $H$ is $L_H^{\{1,2\}}=L_H$. 

Another possibility is the contraction of $\lambda_1$ and $\lambda_3$ to $\overline{\lambda}^{\{1,3\}}=(\lambda_1+\lambda_3)/2$. If $ \overline{\lambda}^{\{1,3\}} < \lambda_2$, then we obtain a point $H_{\Sigma}^{\{1,3\}}  \in \Sigma_2$. Consider the line 
\begin{equation}
    L_H^{\{1,3\}}=\{ t H_{\Sigma}^{\{1,3\}} + (1-t) H  \ | \ t \in \R \}
\end{equation} 
joining $H_{\Sigma}^{\{1,3\}}$ and $H$. It consists of the matrices
\begin{eqnarray}
  t H_{\Sigma}^{\{1,3\}} + (1-t) H &=&
  U \begin{pmatrix}
        \lambda_1+ \frac{t}{2} (\lambda_3-\lambda_1) & 0 & 0 \\
       0 &  \lambda_2 & 0 \\
       0 & 0 &  \lambda_3+ \frac{t}{2} (\lambda_1-\lambda_3)
    \end{pmatrix} U^{-1}\nonumber\\
    &=&
  U' \begin{pmatrix}
        \lambda_1+ \frac{t}{2} (\lambda_3-\lambda_1) & 0 & 0 \\
       0 &  \lambda_3+ \frac{t}{2} (\lambda_1-\lambda_3) & 0 \\
       0 & 0 &  \lambda_2
    \end{pmatrix} (U')^{-1},
\end{eqnarray}
where $U'$ is the product of $U$ with the transposition of the 2nd and 3th basis elements. The role of $U'$ is to satisfy the increasing order of the eigenvalues, if we want to have a conventional form.
Then, by Proposition~\ref{pr:YHorthog2}, $L_H^{\{1,3\}}$ is orthogonal to $\Sigma_2$ at $H_{\Sigma}^{\{1,3\}}$. However, it crosses the stratum $\Sigma_{(12)}$ of $\Sigma$ (corresponding to the degeneracy $\lambda_2=\lambda_3$) between $H_{\Sigma}^{\{1,3\}}$ and $H$, namely, for $t=2(\lambda_3 -\lambda_2)/(\lambda_3-\lambda_1)$. See Figure~\ref{fig:eigproj}.

The third possibility, the contraction of $\lambda_2$ and $\lambda_3$ to $\overline{\lambda}^{\{2,3\}}$ results a matrix $H_\Sigma^{\{2,3\}}$, which is not in $\Sigma_2$, since $\overline{\lambda}^{\{2,3\}} > \lambda_1$.

The same can be done in general, for an $H \in \mbox{Herm}(n) \setminus \Sigma_{[k,k+1]}$. Choose $k$ indices  $1 \leq i_1 < i_2 <  \dots < i_k \leq n$, and let $I=\{i_1, \dots, i_k \}$ denote their set. Let $H_{\Sigma}^I$ denote the matrix obtained from $H$ by replacing the eigenvalues $\lambda_{i_1}, \dots, \lambda_{i_k}$ with their mean $\overline{\lambda}^I =\sum_{j=1}^k \lambda_{i_j}/k$.  More precisely,  let $\Lambda_{\Sigma}^I$ be the diagonal matrix obtained from $\Lambda=U^{-1} H U= \mbox{diag}(\lambda_1, \dots, \lambda_n)$ by replacing $\lambda_i$ with $\overline{\lambda}^I$ for $i \in I$, and define 
\begin{equation}\label{eq:HSigma}
    H_{\Sigma}^I =U \cdot \Lambda_{\Sigma}^I \cdot U^{-1}.
\end{equation}
In general, $H_{\Sigma}^{I}$ can depend on the choice of $U$. In fact, if there is a degeneracy $\lambda_{i}=\lambda_{i'}$ of eigenvalues of $H$ with $i \in I$ and $i' \notin I$, then their separation depends on the choice of the diagonalization inside the degenerate eigenspace. For simplicity we omit the $U$ dependence from the notation, and $H_{\Sigma}^{I}$ denotes any of the possible choices.

If $\overline{\lambda}^I $ is smaller then the lowest omitted eigenvalue $\lambda_m$, where $m=\min (\{1, \dots, n\} \setminus I)$, then $H_{\Sigma}^I \in \Sigma_k$. Let 
\begin{equation}
    L_H^{I}=\{ t H_{\Sigma}^{I} + (1-t) H  \ | \ t \in \R \}
\end{equation} 
be the line joining $H_{\Sigma}^{I}$ and $H$ Obviously, $L_{H}^{I}$ crosses other strata of $\Sigma$ between $H$ and $H_{\Sigma}^{I}$. By Proposition~\ref{pr:YHorthog2}, the lines $L_{H}^{I}$ are orthogonal to $\Sigma_k$ at $H_{\Sigma}^{I}$, moreover:

\begin{prop}\label{pr:YorthogOnly}
 The only lines through $H \in \mbox{Herm}(n) \setminus \Sigma_{[k,k+1]} $ which are orthogonal to $\Sigma_k$ are the $L_H^I $ lines with $I \subset \{1, \dots, n \}$ satisfying  $H_{\Sigma}^I \in \Sigma_k$. 
\end{prop}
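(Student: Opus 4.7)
The plan is to combine the characterization of orthogonal lines provided by Proposition~\ref{pr:YHorthog2}(3) with the constraint that the line passes through $H$. First, every line of the form $L_H^I$ with $H_\Sigma^I \in \Sigma_k$ is automatically orthogonal to $\Sigma_k$ at $H_\Sigma^I$: this follows directly from Proposition~\ref{pr:YHorthog2}(3), applied with $D$ chosen as the diagonal matrix whose upper-left $k \times k$ block carries the entries $\lambda_{i_j} - \overline{\lambda}^I$ sorted increasingly. So only the converse inclusion requires work.

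For the converse, let $L$ be a line through $H$ which is orthogonal to $\Sigma_k$ at some intersection point $K \in L \cap \Sigma_k$. By Proposition~\ref{pr:YHorthog2}(3) applied at $K$, I may parametrize $L$ as $t \mapsto U(\Lambda_0 + tD)U^{-1}$, where $K = U \Lambda_0 U^{-1}$ is a diagonalization of $K$ with its $k$-fold degenerate eigenvalue $\mu$ occupying the upper-left block in increasing order, and $D$ is a traceless diagonal matrix supported in that upper-left $k \times k$ block with entries $d_1 \leq \dots \leq d_k$. The requirement that $H$ lies on $L$ gives $H = U(\Lambda_0 + t_0 D) U^{-1}$ for some $t_0 \in \R$. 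Hence the same $U$ diagonalizes $H$: the first $k$ diagonal slots carry the eigenvalues $\mu + t_0 d_j$, while the last $n-k$ slots contain the remaining diagonal entries of $\Lambda_0$ unchanged. Let $I \subset \{1, \dots, n\}$ of size $k$ index which positions in the increasing list $\lambda_1 \leq \dots \leq \lambda_n$ of the eigenvalues of $H$ are represented in the first block. Since $D$ is traceless, averaging yields $\mu = \overline{\lambda}^I$.

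The remaining step is to identify $K$ with $H_\Sigma^I$. I would introduce the permutation $\sigma$ (with permutation matrix $P_\sigma$) such that $U P_\sigma^{-1}$ diagonalizes $H$ in increasing order of eigenvalues, and then verify by direct computation that conjugation by $P_\sigma$ sends $\Lambda_0$ to $\Lambda_\Sigma^I$, the diagonal obtained from the increasingly-ordered $\Lambda$ by replacing the entries at positions in $I$ with $\overline{\lambda}^I$. This produces $K = U P_\sigma^{-1} \Lambda_\Sigma^I P_\sigma U^{-1} = H_\Sigma^I$, whence $L = L_H^I$. The condition $K \in \Sigma_k$, namely that $\mu$ be strictly smaller than every other entry of $\Lambda_0$, translates precisely into $\overline{\lambda}^I < \min\{\lambda_i : i \notin I\}$, which is exactly the condition $H_\Sigma^I \in \Sigma_k$ appearing in the hypothesis of the proposition.

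The main obstacle I anticipate is the bookkeeping in this final identification when $H$ has eigenvalue coincidences across $I$ and its complement: in such cases the diagonalizing unitary of $H$ is defined only up to $U(\cdot)$ rotations within each degenerate eigenspace, and the notation $H_\Sigma^I$ itself carries a corresponding ambiguity. I would handle this by observing that the freedom in the admissible choices of $U$ diagonalizing $H$ matches the ambiguity in the construction of $H_\Sigma^I$ discussed after Equation~\eqref{eq:HSigma}, so that the identification $K = H_\Sigma^I$ holds for some admissible representative on each side.
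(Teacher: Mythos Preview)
Your proposal is correct and follows essentially the same route as the paper: both invoke Proposition~\ref{pr:YHorthog2}(3) to parametrize an orthogonal line as $t \mapsto U(\Lambda_0 + tD)U^{-1}$, observe that $H$ lying on this line forces $U$ to diagonalize $H$ with the first $k$ slots picking out some index set $I$, and conclude $K = H_\Sigma^I$. Your version is more explicit about the permutation bookkeeping and the ambiguity when $H$ has eigenvalue coincidences across $I$ and its complement, points the paper leaves implicit in its very terse proof.
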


\begin{proof}
    Assume that a line $L$ through $H$ intersects $\Sigma_k$ orthogonally at $H_0 \in \Sigma_k$. By point (3) of Proposition~\ref{pr:YHorthog2}, $L$ has a special parametrization in form \eqref{eq:egyenes}. The linear motion of the degenerate eigenvalues of $H_0$ arrives at $k$ eigenvalues of $H$ corresponding to the indices $\{i_1, \dots, i_k\}=I$, and then, $H_0=H_{\Sigma}^{I}$ by the construction.
\end{proof}

Recall that our goal is to prove that $H_{\Sigma}$ is the closest point of $\Sigma_k$ to $H$. Towards this goal the next step is to show  that $H^{I}_{\Sigma}$ cannot be a closest point of $\Sigma_k$ to $H$ if $I \neq \{1, 2, \dots, k\}$. Recall that it does not imply directly that $H_{\Sigma}$ is the closest point, until we prove that there is a closest point, which will be the last step.  
    
\begin{prop}\label{pr:notclosest}
Consider an index set $I \neq \{1, 2, \dots, k\}$ and the corresponding point $H_{\Sigma}^{I} \in \Sigma_k$ (with a fixed unitary matrix $U$ diagonalizing $H$, cf. Equation~\eqref{eq:HSigma}). Then, there is a point $\widetilde{H}_{\Sigma}^{I} \in \Sigma_k$ such that $d (H, \widetilde{H}_{\Sigma}^{I}) < d(H, H_{\Sigma}^{I})$.
\end{prop}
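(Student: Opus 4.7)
The plan is to construct $\widetilde{H}_{\Sigma}^{I}$ by exploiting an eigenvalue crossing that must occur along the straight line segment from $H$ to $H_{\Sigma}^{I}$ whenever $I \neq \{1,\dots,k\}$. Since $H \notin \Sigma_{[k,k+1]}$ we have $\lambda_k < \lambda_{k+1}$, and because $|I|=k$ but $I \neq \{1,\dots,k\}$ one can choose $i \in I$ with $i \geq k+1$ and $j \notin I$ with $j \leq k$, yielding the strict chain $\overline{\lambda}^I < \lambda_j \leq \lambda_k < \lambda_{k+1} \leq \lambda_i$; here the leftmost inequality uses the hypothesis $H_{\Sigma}^{I} \in \Sigma_k$, which forces $\overline{\lambda}^I < \lambda_l$ for every $l \notin I$.

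First I would parametrise the segment from $H$ to $H_{\Sigma}^{I}$ as $U\Lambda(t)U^{-1}$ for $t\in[0,1]$, where $\Lambda(t)$ is diagonal with entries $(1-t)\lambda_l + t\overline{\lambda}^I$ for $l \in I$ and $\lambda_l$ for $l \notin I$. The $i$-th diagonal entry of $\Lambda(t)$ decreases continuously from $\lambda_i$ to $\overline{\lambda}^I$ and therefore passes through $\lambda_j$ at a unique $t^* \in (0,1)$, characterised by the identity $(1-t^*)(\lambda_i - \overline{\lambda}^I) = \lambda_j - \overline{\lambda}^I$. Setting $K := U\Lambda(t^*)U^{-1}$, the $i$-th and $j$-th diagonal entries of $\Lambda(t^*)$ both equal $\lambda_j$.

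Next I would define $\widetilde{H}_{\Sigma}^{I} := U\widetilde{\Lambda}U^{-1}$, where $\widetilde{\Lambda}$ is diagonal with entry $\overline{\lambda}^I$ in positions $(I \setminus \{i\}) \cup \{j\}$, entry $\lambda_j$ in position $i$, and entry $\lambda_l$ elsewhere. Then $\widetilde{H}_{\Sigma}^{I} \in \Sigma_k$: the value $\overline{\lambda}^I$ appears exactly $k$ times, and is strictly smaller than $\lambda_j$ and than $\lambda_l$ for any $l \notin I \cup \{j\}$. A direct entry-by-entry comparison of $\Lambda(t^*) - \Lambda_{\Sigma}^{I}$ with $\Lambda(t^*) - \widetilde{\Lambda}$ shows that the two squared Frobenius norms agree termwise except that the summand $(1-t^*)^2(\lambda_i - \overline{\lambda}^I)^2$ in the former is replaced by $(\lambda_j - \overline{\lambda}^I)^2$ in the latter. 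These are equal by the defining identity of $t^*$, so (using Lemma~\ref{le:uniact} to strip off $U$) one obtains $d(K, H_{\Sigma}^{I}) = d(K, \widetilde{H}_{\Sigma}^{I})$.

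To finish, $K$ lies on the segment from $H$ to $H_{\Sigma}^{I}$, so $d(H, H_{\Sigma}^{I}) = d(H, K) + d(K, H_{\Sigma}^{I}) = d(H, K) + d(K, \widetilde{H}_{\Sigma}^{I})$, while the triangle inequality yields $d(H, \widetilde{H}_{\Sigma}^{I}) \leq d(H, K) + d(K, \widetilde{H}_{\Sigma}^{I})$. The main remaining step, and the one I expect to be the most delicate, is verifying that this inequality is strict, i.e.\ ruling out collinearity of $H$, $K$, $\widetilde{H}_{\Sigma}^{I}$. I would do this by noting in the $U$-basis that the $i$-th diagonal entry of $K - H$ equals $t^*(\overline{\lambda}^I - \lambda_i) \neq 0$ whereas the $i$-th diagonal entry of $\widetilde{H}_{\Sigma}^{I} - K$ vanishes, so these two nonzero vectors cannot be nonnegative scalar multiples of one another. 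This gives the strict bound $d(H, \widetilde{H}_{\Sigma}^{I}) < d(H, H_{\Sigma}^{I})$, completing the proof.
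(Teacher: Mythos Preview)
Your argument is correct and follows essentially the same route as the paper: choose $i\in I$, $j\notin I$ with $\lambda_i>\lambda_j$, locate the crossing point $K$ on the segment $[H,H_\Sigma^I]$, swap the $i$-th and $j$-th entries of $\Lambda_\Sigma^I$ to produce $\widetilde{H}_\Sigma^I\in\Sigma_k$, check $d(K,H_\Sigma^I)=d(K,\widetilde{H}_\Sigma^I)$, and finish with the (strict) triangle inequality. Your version is in fact slightly more careful than the paper's, since you explicitly take $i\geq k+1$ and $j\leq k$ (which guarantees the strict inequality $\lambda_i>\lambda_j$ via $H\notin\Sigma_{[k,k+1]}$), verify $\widetilde{H}_\Sigma^I\in\Sigma_k$, and spell out the non-collinearity argument.
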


\begin{proof}
We fix $U$ during the construction, hence we actually work in the space $\R^n$ of the diagonal entries, endowed with the usual inner product. The lines correspond to the linear motion of the diagonal entries. 

Consider two indices $i \in I $ and $i' \notin I$ with $i > i'$. Then  $\lambda_i > \lambda_{i'}$ holds for the eigenvalues of $H$, and $ \overline{\lambda}^{I} < \lambda_{i'}$, since $H^{I}_{\Sigma} \in \Sigma_k$. Define $\widetilde{H}_{\Sigma}^{I}$ as 
\begin{equation}
    \widetilde{H}_{\Sigma}^{I}=U \cdot \widetilde{\Lambda}_{\Sigma}^{I} \cdot U^{-1},
\end{equation}
where $\widetilde{\Lambda}_{\Sigma}^{I}$ is the diagonal matrix containing the same entries as ${\Lambda}_{\Sigma}^{I}$, but the $i$-th and $i'$-th eigenvalues swapped. Namely, the $i$-th diagonal element of $\widetilde{\Lambda}_{\Sigma}^{I}$ is $\lambda_{i'}$, and the $i'$-th diagonal element of $\widetilde{\Lambda}_{\Sigma}^{I}$ is $\overline{\lambda}^{I}$. See Figure~\ref{fig:eigproj} and Table~\ref{tab:HK}. 

\begin{table}[h]
    \setlength{\extrarowheight}{3pt}
	\begin{tabularx}{\textwidth}{|>{\raggedright\arraybackslash}X|>{\raggedright\arraybackslash}X|>{\raggedright\arraybackslash}X|>{\raggedright\arraybackslash}X|>{\raggedright\arraybackslash}X|}
	    \hline
		& $H$ & $H_{\Sigma}^{I}$ & $\widetilde{H}_{\Sigma}^{I}$ & $K$ \\
  \hline\hline
   $ j \in I$, $j \neq i$ & 
   $\lambda_j$ & 
  $\overline{\lambda}^{I}$ & 
  $\overline{\lambda}^{I}$ & 
  $t \overline{\lambda}^{I} + (1-t) \lambda_j$\\
  \hline
  $ j' \notin I$, $j' \neq i'$
  & $\lambda_{j'}$ &  $\lambda_{j'}$ &  $\lambda_{j'}$ &  $\lambda_{j'}$ \\
  \hline
  $i \ (\in I)$ & $\lambda_i$ & 
  $\overline{\lambda}^{I}$ & 
  $ \lambda_{i'} $ & 
  $ \lambda_{i'} $ \\
  \hline
  $i' \ (\notin I)$
  &  $\lambda_{i'} $ &  $\lambda_{i'}$ &  $\overline{\lambda}^{I}$ & $\lambda_{i'}$ \\
  \hline
  \end{tabularx}
	\caption{The diagonal entries of the diagonalizations of $H$, $H_{\Sigma}^{I}$,  $\widetilde{H}_{\Sigma}^{I}$ and $K$ in the proof of Proposition~\ref{pr:notclosest}. The four matrices are diagonalized by the same unitary matrix $U$, and the diagonal entries of $H$ are in increasing order. $H_{\Sigma}^{I}$ is constructed by replacing the eigenvalues of $H$ corresponding to the indices in $I$ with their mean value $\overline{\lambda}^{I}$. For two fixed indices $i \in I$ and $i' \notin I$ with $i > i'$, we swap the $i$-th and $i'$-th diagonal entries of ${H}_{\Sigma}^{I}$, obtaining $\widetilde{H}_{\Sigma}^{I}$. $K$ is defined as $K=t H_{\Sigma}^{I} + (1-t) H $ with $t \overline{\lambda}^{I} +(1-t) \lambda_i =\lambda_{i'}$. 
 \label{tab:HK}}
\end{table}

We  show that $d (H, \widetilde{H}_{\Sigma}^{I}) < d(H, H_{\Sigma}^{I})$. Along the line segment of $L_H^{I}$ joining $H$ and $H_{\Sigma}^{I}$, the eigenvalue $\lambda_i$ moves to $\overline{\lambda}^{I}$ linearly, and it crosses $\lambda_{i'}$ at a point $K \in L_H^{I}$. More precisely, $K=t H_{\Sigma}^{I} + (1-t) H $ with $t \overline{\lambda}^{I} +(1-t) \lambda_i =\lambda_{i'}$. Both the $i$-th and $i'$-th diagonal entry of $K$ is $\lambda_{i'}$, see Figure~\ref{fig:eigproj} and Table~\ref{tab:HK}. Then we have 
\begin{equation}
    d(K, \widetilde{H}_{\Sigma}^{I})=d(K, {H}_{\Sigma}^{I}).
\end{equation}
Indeed, the only difference in the distances might come from the $i$-th and $i'$-th diagonal entries (see Table~\ref{tab:HK}), which gives
\begin{equation}
    (d(K, \widetilde{H}_{\Sigma}^{I}))^2-(d(K, {H}_{\Sigma}^{I}))^2=[(\lambda_{i'}- \lambda_{i'})^2+(\lambda_{i'}-\overline{\lambda}^{I})^2]-
    [(\lambda_{i'}-\overline{\lambda}^{I})^2+(\lambda_{i'}- \lambda_{i'})^2]=0.
\end{equation}
Then, 
\begin{equation}
    d(H, H_{\Sigma}^{I})=d(H, K)+d(K, H_{\Sigma}^{I})=d(H, K)+d(K, \widetilde{H}_{\Sigma}^{I}) > d(H, \widetilde{H}_{\Sigma}^{I}),
\end{equation}
where the last inequality is the triangle inequality applied to the non-collinear points $H$, $K$ and $\widetilde{H}_{\Sigma}^{I}$, see Figure~\ref{fig:eigproj}. This completes the proof.

\end{proof}

Comparing Proposition 
\ref{pr:criti},  \ref{pr:YHorthog}, \ref{pr:YorthogOnly} and \ref{pr:notclosest}, we conclude the following.

\begin{cor}\label{co:majdnem}
    If the distance function $d_H$ has a global minimum $\mu$ on $\Sigma_k$, then $d(H, H_{\Sigma})=\mu$ and $H_{\Sigma}$ is the unique closest point of $\Sigma_k$ to $H$. In particular, $d(H, \Sigma_k)=d(H, H_{\Sigma})$ holds.
\end{cor}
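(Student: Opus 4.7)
The plan is to combine the four preceding propositions into one short synthesis argument. Assume the distance function $d_H$ attains a global minimum $\mu$ at some $K \in \Sigma_k$, i.e.\ $d(H,K) = \mu = \min_{G \in \Sigma_k} d(H,G)$. The goal is to force $K = H_{\Sigma}$.

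First, since $K$ is in particular a local minimum of the restricted function $d_H|_{\Sigma_k}$ and $\Sigma_k$ is a smooth submanifold, Proposition~\ref{pr:criti} applies: the line through $H$ and $K$ is orthogonal to $\Sigma_k$ at $K$. Next, Proposition~\ref{pr:YorthogOnly} enumerates all such orthogonal lines through $H$: each is of the form $L_H^I$ for some $k$-subset $I \subset \{1, \dots, n\}$ with $H_{\Sigma}^I \in \Sigma_k$, and its point of intersection with $\Sigma_k$ is $K = H_{\Sigma}^I$. Finally, Proposition~\ref{pr:notclosest} rules out every $I \neq \{1, 2, \dots, k\}$: for any such $I$ it constructs a point $\widetilde{H}_{\Sigma}^I \in \Sigma_k$ with $d(H, \widetilde{H}_{\Sigma}^I) < d(H, H_{\Sigma}^I) = \mu$, contradicting the global minimality of $K$. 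Hence $I = \{1, 2, \dots, k\}$ and $K = H_{\Sigma}$, giving $d(H, H_{\Sigma}) = \mu$ and $d(H, \Sigma_k) = d(H, H_{\Sigma})$.

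Uniqueness follows from the same chain of implications: any candidate closest point must equal $H_{\Sigma}$, which is a single well-defined element of $\mathrm{Herm}(n)$ for $H \in \mathrm{Herm}(n) \setminus \Sigma_{[k,k+1]}$ (independence of $H_{\Sigma}$ from the choice of diagonalizing $U$ was established in the discussion preceding Theorem~\ref{th:proj}).

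Since the heavy lifting is already done in the four propositions, the corollary itself is mostly a bookkeeping argument; the only point requiring care is to notice that Proposition~\ref{pr:notclosest} is stated for a fixed diagonalizing unitary $U$, so when applying it to the competitor $K = H_{\Sigma}^I$ one should pick exactly the $U$ that realizes $K$ via \eqref{eq:HSigma}. This is harmless because Proposition~\ref{pr:YorthogOnly} already supplies such a $U$ along with the index set $I$.
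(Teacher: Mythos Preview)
Your proof is correct and follows exactly the approach the paper intends: the paper's own proof is the single sentence ``Comparing Proposition~\ref{pr:criti}, \ref{pr:YHorthog}, \ref{pr:YorthogOnly} and \ref{pr:notclosest}, we conclude the following,'' and you have simply spelled out the synthesis argument in detail. Your observation about fixing the diagonalizing unitary $U$ when invoking Proposition~\ref{pr:notclosest} is a valid care point that the paper leaves implicit.
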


In the following we show that $d_H$ has a global minimum on $\Sigma_k$. First we take a weaker observation.

\begin{prop}\label{pr:closure}
    The distance function $d_H$ has a global minimum on the closure $\mbox{cl}(\Sigma_k)$ of $\Sigma_k$ in $\mbox{Herm}(n)$.
\end{prop}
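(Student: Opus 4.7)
The plan is a standard compactness argument based on the extreme value theorem, taking advantage of the fact that $\mbox{Herm}(n)$ is a finite-dimensional Euclidean space. The distance function $d_H: \mbox{Herm}(n) \to \R$ is continuous (in fact $1$-Lipschitz) with respect to the Frobenius metric, and the set $\mbox{cl}(\Sigma_k)$ is closed by definition. The only obstacle to immediately invoking the extreme value theorem is that $\mbox{cl}(\Sigma_k)$ is not compact --- it is unbounded in $\mbox{Herm}(n)$.

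To circumvent this, I would first pick any reference point $G_0 \in \Sigma_k \subset \mbox{cl}(\Sigma_k)$ (such a point exists since $\Sigma_k$ is nonempty for $1\leq k\leq n$) and set $R := d(H, G_0)$. Next, consider the set
\begin{equation}
    A := \mbox{cl}(\Sigma_k) \cap \overline{B}(H, R),
\end{equation}
where $\overline{B}(H,R)=\{G\in \mbox{Herm}(n) \ | \ d(H,G)\leq R\}$ is the closed ball of radius $R$ around $H$. Since $A$ is the intersection of two closed subsets of the finite-dimensional Euclidean space $\mbox{Herm}(n)$ and is bounded, it is compact by the Heine--Borel theorem. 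Moreover $A$ is nonempty as it contains $G_0$.

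By the extreme value theorem, the continuous function $d_H$ attains its minimum on the compact set $A$ at some point $H_{\min}\in A$; in particular $d(H, H_{\min}) \leq d(H, G_0) = R$. It remains to argue that $H_{\min}$ is a global minimum of $d_H$ on $\mbox{cl}(\Sigma_k)$. This is immediate: for any $G \in \mbox{cl}(\Sigma_k) \setminus A$ we have $d(H,G) > R \geq d(H, H_{\min})$, so $H_{\min}$ beats every point of $\mbox{cl}(\Sigma_k)$ outside $A$ as well. Hence $H_{\min}$ realizes the infimum of $d_H$ over $\mbox{cl}(\Sigma_k)$, which is what was claimed.

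There is no genuine obstacle here --- the argument is the textbook "continuous function on closed set in $\R^N$ with a coercivity cutoff" pattern, which works because $\mbox{Herm}(n)$ is a finite-dimensional normed vector space. The subtle point (addressed in the next stage of the paper) is that this proposition only delivers a minimizer on the \emph{closure} of $\Sigma_k$, while Theorem~\ref{th:distance} requires a minimizer on $\Sigma_k$ itself; the extra work of ruling out minimizers in $\mbox{cl}(\Sigma_k)\setminus \Sigma_k = \Sigma_{k+1}\cup\dots\cup\Sigma_n$ under the hypothesis $H\notin\Sigma_{[k,k+1]}$ will presumably combine this proposition with Corollary~\ref{co:majdnem} and Propositions~\ref{pr:YorthogOnly}--\ref{pr:notclosest}.
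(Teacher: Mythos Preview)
Your proof is correct and essentially identical to the paper's own argument: both intersect $\mbox{cl}(\Sigma_k)$ with a closed ball around $H$ of radius large enough to contain a known point of $\Sigma_k$, invoke Heine--Borel to get compactness, and apply the extreme value theorem. The only cosmetic difference is that the paper uses $H_\Sigma$ as the reference point to fix the radius while you use an arbitrary $G_0\in\Sigma_k$, which works just as well.
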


\begin{proof} Actually it is  a classical fact for any closed subset $A$ of $\R^n$ and a point $P \in \R^n$ that $A$ has a point with minimal distance from $P$. For completeness we write a proof for our particular situation, which works in general.

    Take a closed ball $\mathcal{B} \subset \mbox{Herm}(n)$ of radius $R$ centered at $H$, that is, $\mathcal{B}=\{ G \in \mbox{Herm}(n) \ | \ d(H, G) \leq R \}$. The radius $R$ has to be chosen such that $\mbox{cl}(\Sigma_k) \cap \mathcal{B}$ is non-empty, e.g., $R> d(H, H_{\Sigma})$ is good.   If $d_H$ has a global minimum on $\mbox{cl}(\Sigma_k) \cap \mathcal{B}$, then it is the global minimum on $\mbox{cl}(\Sigma_k)$ as well. 

    $\mbox{cl}(\Sigma_k) \cap \mathcal{B}$ is a closed and bounded subset of $\R^n$, hence it is compact, therefore, any continuous function on it has a global minimum, proving the proposition.
\end{proof}

\begin{prop}\label{pr:vanlegkozelebbi}
  $H_{\Sigma}$ is the unique closest point of $\Sigma_k$ to $H$. In particular, $d(H, \Sigma_k)=d(H, H_{\Sigma})$ holds.
\end{prop}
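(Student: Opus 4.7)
The plan is to close the gap between Corollary~\ref{co:majdnem}, which identifies $H_{\Sigma}$ as the unique closest point in $\Sigma_k$ whenever a global minimum of $d_H$ exists on $\Sigma_k$, and Proposition~\ref{pr:closure}, which only delivers a global minimum on the closure $\mbox{cl}(\Sigma_k) = \Sigma_k \cup \Sigma_{k+1} \cup \cdots \cup \Sigma_n$. Let $K \in \mbox{cl}(\Sigma_k)$ be such a minimum point; since $H_{\Sigma} \in \Sigma_k \subset \mbox{cl}(\Sigma_k)$, we have $d(H, K) \le d(H, H_\Sigma)$. The task is to show that $K$ actually lies in the open stratum $\Sigma_k$, so that Corollary~\ref{co:majdnem} applies and identifies $K = H_\Sigma$.

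The key observation is that Propositions~\ref{pr:criti}, \ref{pr:YorthogOnly} and \ref{pr:notclosest} generalize with their proofs essentially unchanged when $\Sigma_k$ is replaced by the smooth submanifold $\Sigma_{k'}$ (for $k' \ge k$) and $H_{\Sigma}$ by the analogous projection $H_{\Sigma,k'}$ obtained from $H$ by collapsing its $k'$ lowest eigenvalues to their mean $\bar\lambda_{k'}$; the proofs use only submanifold structure, diagonalization, and the tangent-space description of Lemma~\ref{le:tangspace}, all of which carry over. Suppose for contradiction that $K \in \Sigma_{k'}$ with $k' > k$. Since $\Sigma_{k'} \subset \mbox{cl}(\Sigma_k)$, $K$ is in particular a local minimum of $d_H|_{\Sigma_{k'}}$, so the generalized Proposition~\ref{pr:criti} forces $L = \{tK+(1-t)H\}$ to be orthogonal to $\Sigma_{k'}$ at $K$. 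The generalized Propositions~\ref{pr:YorthogOnly} and~\ref{pr:notclosest} then pin $K$ to the distinguished projection $K = H_{\Sigma, k'}$ corresponding to the index set $\{1, \ldots, k'\}$, and Proposition~\ref{co:distHsigma} (applied with $k$ replaced by $k'$) gives
\begin{equation*}
d(H, K) \;=\; \sqrt{k'}\, D_{k'}(H).
\end{equation*}

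I would then conclude with the elementary variance comparison
\begin{equation*}
k'\, D_{k'}(H)^2 \;=\; \sum_{i=1}^{k'} (\lambda_i - \bar\lambda_{k'})^2 \;\ge\; \sum_{i=1}^{k} (\lambda_i - \bar\lambda_{k'})^2 \;\ge\; \sum_{i=1}^{k} (\lambda_i - \bar\lambda_k)^2 \;=\; k\, D_k(H)^2,
\end{equation*}
where the first inequality drops the non-negative tail and the second uses that $\bar\lambda_k$ minimizes $c \mapsto \sum_{i=1}^k (\lambda_i - c)^2$. Simultaneous equality would require $\lambda_{k+1} = \cdots = \lambda_{k'} = \bar\lambda_{k'} = \bar\lambda_k \le \lambda_k$, contradicting the hypothesis $H \notin \Sigma_{[k,k+1]}$, which says $\lambda_k < \lambda_{k+1}$. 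Hence the inequality is strict, so $d(H, K) > d(H, H_\Sigma)$, contradicting $d(H, K) \le d(H, H_\Sigma)$. Therefore $K \in \Sigma_k$, and Corollary~\ref{co:majdnem} finishes the proof, giving both $d(H, \Sigma_k) = d(H, H_\Sigma)$ and the uniqueness of $H_\Sigma$ as the closest point.

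The main obstacle is the edge case $H \in \Sigma_{[k', k'+1]}$ for some $k' > k$ (so that $\lambda_{k'}=\lambda_{k'+1}$), where the projection $H_{\Sigma, k'}$ is no longer canonically defined and the generalization of Proposition~\ref{pr:YorthogOnly} must be formulated with care about the $U$-dependence, exactly as in the remark following Theorem~\ref{th:distance}. Since the variance inequality above depends only on the ordered eigenvalues of $H$ and not on any choice of $U$, the strict inequality $\sqrt{k'}\, D_{k'}(H) > \sqrt{k}\, D_k(H)$ continues to deliver the contradiction for every such choice, so this subtlety does not affect the conclusion.
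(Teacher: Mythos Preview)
Your argument is correct and structurally identical to the paper's: both take a minimizer $K \in \mbox{cl}(\Sigma_k)$ furnished by Proposition~\ref{pr:closure}, generalize the earlier propositions to $\Sigma_{k'}$ in order to identify $K = H_{\Sigma,k'}$ when $K \in \Sigma_{k'}$, and then derive the strict inequality $d(H,K) > d(H,H_\Sigma)$ for $k' > k$ to reach a contradiction. The only difference is in this last step. You compute $d(H,K) = \sqrt{k'}\,D_{k'}(H)$ and compare it to $\sqrt{k}\,D_k(H)$ via the elementary variance inequality. The paper instead observes that the segment from $H_\Sigma$ to $K = H_{\Sigma,k'}$ lies in $\Sigma_k$ (both are diagonalized by the same $U$, and along the segment the first $k$ eigenvalues stay equal and below $\lambda_{k+1}$), so this segment is orthogonal to $L_H$ at $H_\Sigma$ by Proposition~\ref{pr:YHorthog}; the right angle at $H_\Sigma$ then gives $d(H,K)^2 = d(H,H_\Sigma)^2 + d(H_\Sigma,K)^2 > d(H,H_\Sigma)^2$ directly. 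The paper's Pythagorean argument is slightly more geometric and sidesteps the need to explicitly evaluate $d(H,K)$, whereas your variance inequality is self-contained and makes the dependence on the hypothesis $\lambda_k < \lambda_{k+1}$ transparent. Your handling of the edge case $H \in \Sigma_{[k',k'+1]}$ is also fine and mirrors the paper's own remark that the distance to any choice of $H_{\Sigma,k'}$ is the same.
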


\begin{proof} 
    According to Proposition~\ref{pr:closure}, take a point $K \in \mbox{cl}(\Sigma_k)$ with minimal distance from $H$. Recall from Section~\ref{ss:preldegen} the disjoint decomposition $\mbox{cl}(\Sigma_k)=\Sigma_k \cup \Sigma_{k+1} \cup \dots \cup \Sigma_{n}$, hence, $K \in \Sigma_{k'}$ holds with a $k' \geq k$. Then, by applying Corollary~\ref{co:majdnem} to $\Sigma_{k'}$, it follows that $K$ is the projection of $H$ to $\Sigma_{k'}$ by contracting the lowest $k'$ eigenvalues of $H$ to their mean value. 

    Assume indirectly that $k' > k$. The triangle of vertices $H$, $H_{\Sigma}$, $K$  has a right angle at $H_{\Sigma}$. Indeed, by Proposition~\ref{pr:YHorthog}, the line $L_H$ joining $H$ and $H_{\Sigma}$ is orthogonal to $\Sigma_k$. Since the line joining $H_{\Sigma}$ and $K$ lies in $\Sigma_k$, it is orthogonal to $L_H$. See Figure~\ref{fig:pithagorean}. It follows that $d(H, H_{\Sigma}) < d(H,K)$, which contradicts with the premise that $K$ is the closest point of $\mbox{cl}(\Sigma_k)$ to $H$. 

    \begin{figure}
	\begin{center}		\includegraphics[width=0.4\columnwidth]{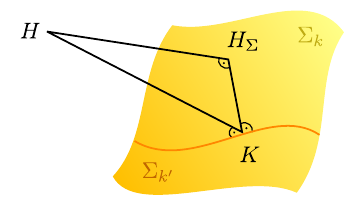}
	\end{center}
	\caption{The proof of Proposition~\ref{pr:vanlegkozelebbi}. Since $k'>k$, the closure of $\Sigma_k$ contains $\Sigma_{k'}$. The projections of $H$ to $\Sigma_k$ and $\Sigma_{k'}$, (obtained by contracting the lowest $k$ and the lowest $k'$ eigenvalues of $H$ to their mean value) is denoted by $H_{\Sigma}$ and $K$, respectively. $K$ is also the projection of $H_{\Sigma}$ to $\Sigma_{k'}$ in the same sense. Moreover, the line segment joining $H_{\Sigma}$ and $K$ lies in $\Sigma_k$, implying that it is orthogonal to the line segment joining $H$ and $H_{\Sigma}$, by Proposition~\ref{pr:YHorthog}. Therefore, the right triangle with vertices $H$, $H_{\Sigma}$ and $K$ shows that $d(H, H_{\Sigma}) < d(H, K)$. (The orthogonality of the line segment joining $H$ and $K$ to $\Sigma_{k'}$ and $H_{\Sigma}$, respectively, and $K$ to $\Sigma_{k'}$ is not used in the proof.)
 \label{fig:pithagorean}}
\end{figure}

    Therefore, $k'=k$ and $K=H_{\Sigma}$, which proves the proposition.
\end{proof}

\begin{proof}[Proof of Theorem~\ref{th:distance} on the distance from $\Sigma_k$] Proposition~\ref{pr:vanlegkozelebbi} and Corollary~\ref{co:distHsigma} prove Theorem~\ref{th:distance}.
    
\end{proof}

\begin{remark}
    The same argument holds for arbitrary (not necessarily ground state) $k$-fold degeneracy with the obvious modifications. The degeneracy set $\Sigma^{(a)}_k$ of matrices with $\lambda_{a} < \lambda_{a+1} = \lambda_{a+2} = \dots = \lambda_{a+k} < \lambda_{a+k+1}$ is a submanifold of $\mbox{Herm}(n)$, and if $\lambda_{a} < \lambda_{a+1} $ and $\lambda_{a+k} < \lambda_{a+k+1}$ holds for a matrix $H \in \mbox{Herm}(n)$, then $\Sigma^{(a)}_k$ has a unique closest element $H_{\Sigma}^{(a)}$ to $H$. The matrix $H_{\Sigma}^{(a)}$ is constructed from $H$ by replacing its eigenvalues $\lambda_{a+1}, \dots, \lambda_{a+k}$ with their mean value. The proof presented here can be generalized with the straightforward modifications.
\end{remark}

\begin{remark}
    If $H \in \Sigma_{[k, k+1]}$, then $H_{\Sigma}$ is not unique, it depends on the choice of the unitary matrix diagonalizing $H$, more precisely, on the subspace generated by the $k$-th column of $U$. However, $d(H, H_{\Sigma})$ is the same for every $H_{\Sigma}$, and it is the global minimum of the distance function $d_H$ on $\Sigma_k$. Therefore, $d(H, \Sigma_k)=d(H, H_{\Sigma})$ holds in this case with any choice of $H_{\Sigma}$.

\end{remark}

\subsection{Order of energy splitting of a degeneracy due to a perturbation}\label{ss:split}
In this subsection we prove Theorem~\ref{th:anal} on the analyticity of the eigenvalues, Theorem~\ref{th:ordspl}, Theorem~\ref{th:minden} and Corollary~\ref{co:tangstick} on the order of energy splitting and distancing. 

      Consider a one-parameter analytic perturbation $H: \R \to \mbox{Herm}(n)$, $H(0)=H_0 \in \Sigma_k$. The standard deviation $D_k(H(t))$ is not differentiable at 0 in general. We want to modify it slightly to obtain an analytic function $\mathfrak{s}_k(t)$ which agrees with $D_k(H(t))$ up to a sign for every $t$, and for $t \geq 0$ they are equal. For this we use the following observation:
    
\begin{lemma}\label{le:order}
    Let $f_1, \dots, f_l: \R \to \R $ be analytic functions defined in a neighborhood of 0, that is, locally convergent power series centered at 0. Let $r=\min_{1 \leq i \leq l} \{\mathrm{ord}_0 (f_i(t))  \}$ be the minimum of the orders. Then 
    \begin{equation}\label{eq:order}
         F(t):=  (\mbox{sgn}(t))^r \cdot \sqrt{\sum_{i=1}^l (f_i(t))^2} 
        \end{equation}
        is also analytic at 0, and its order $\mathrm{ord}_0 (F(t))$ is equal to $r$.
\end{lemma}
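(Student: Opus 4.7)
The plan is to factor out the common order $t^r$ from all the $f_i$ and reduce $F(t)$ to the product of $t^r$ with the square root of an analytic function that is strictly positive at $0$.

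First I would use the definition of order: since $\mathrm{ord}_0(f_i) \geq r$ for every $i$, each $f_i$ admits a factorisation $f_i(t) = t^r g_i(t)$ with $g_i$ analytic in a neighbourhood of $0$. Moreover, because $r$ is the minimum, there exists an index $i_0$ with $\mathrm{ord}_0(f_{i_0}) = r$, hence $g_{i_0}(0) \neq 0$. Setting
\begin{equation*}
G(t) := \sum_{i=1}^l g_i(t)^2,
\end{equation*}
the function $G$ is analytic at $0$ and $G(0) = \sum_i g_i(0)^2 \geq g_{i_0}(0)^2 > 0$.

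Next I would substitute into the formula for $F$. Since $\sum_i f_i(t)^2 = t^{2r} G(t)$, one has $\sqrt{\sum_i f_i(t)^2} = |t|^r \sqrt{G(t)}$, and therefore
\begin{equation*}
F(t) = (\operatorname{sgn}(t))^r \, |t|^r \, \sqrt{G(t)}.
\end{equation*}
The key arithmetic observation is the identity $(\operatorname{sgn}(t))^r |t|^r = t^r$ for every $t \in \R$ and every non-negative integer $r$: for $r$ even both factors collapse to $t^r$, and for $r$ odd one has $\operatorname{sgn}(t)|t|^r = t \cdot |t|^{r-1} = t^r$ (checking the cases $t>0$ and $t<0$ separately). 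Consequently
\begin{equation*}
F(t) = t^r \sqrt{G(t)}.
\end{equation*}

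Finally, since $G$ is analytic with $G(0) > 0$, the composition $\sqrt{G(t)}$ is analytic on a neighbourhood of $0$ (the square root is real-analytic on the positive reals), and $\sqrt{G(0)} \neq 0$. Therefore $F(t) = t^r \sqrt{G(t)}$ is analytic at $0$ and its order is exactly $r$, as desired. The only subtle point, which I would make sure to spell out carefully, is the sign identity $(\operatorname{sgn}(t))^r |t|^r = t^r$; everything else is a routine application of the factorisation of analytic functions by their order together with analyticity of $\sqrt{\cdot}$ at a positive value.
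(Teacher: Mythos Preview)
Your proof is correct and follows essentially the same approach as the paper: factor out $t^r$ so that the expression under the square root becomes an analytic function with strictly positive constant term, then use the analyticity of the square root near a positive value. The only cosmetic difference is that the paper factors each $f_i$ by its own order $t^{r_i}$ before extracting $t^{2r}$ from the sum, whereas you factor $t^r$ from each $f_i$ directly; the resulting function under the radical and the conclusion are identical.
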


    \begin{proof}
        
        The proof can be done by factoring out the lowest degree term under the root sign. We can write \begin{equation}
            f_i(t)=t^{r_i} \cdot \widetilde{f}_i(t),
        \end{equation}
        where $r_i=\mathrm{ord}_0 (f_i)$ hence $\widetilde{f}_i$ is a power series with nonzero constant term, that is, $\widetilde{f}_i(0) \neq 0$. Taking $r:=\min \{ r_i\}$, we can write
         \begin{equation}\label{eq:order}
            F(t) =
           ( \mbox{sgn}(t))^r \cdot |t^r| \cdot  \sqrt{\sum_{i=1}^l t^{2r_i-2r} \cdot  (\widetilde{f}_i(t))^2}=:t^r \sqrt{\widetilde{F}(t)}.
        \end{equation}
        
        The expression under the square root (denoted by $\widetilde{F}(t)$) is a power series with nonzero constant term, that is, $\mathrm{ord}_0 (\widetilde{F}(t))=0$. Indeed, (1) all the exponents $2r_i-2r$ are bigger or equal to $0$, and (2) at least one of them is zero, moreover (3) the coefficients corresponding to the zero exponents --- that is, the constant terms of the corresponding $(\widetilde{f}_i(t))^2$ --- are positive, hence they cannot cancel each other. 

        Hence $\widetilde{F}(0) \neq 0$, which implies that $\sqrt{\widetilde{F}(t)}$ is also an analytic function (a locally convergent power series), and its rank is 0. Therefore $F(t)$ is analytic too. The order of  $F(t)$ is equal to $r$, the minimum of the orders of $f_i$, proving Lemma~\ref{le:order}.
    \end{proof}

By SW decomposition theorem~\ref{th:sw} the effective Hamiltonian $H_{\textup{eff}}$ depends on $H$ analytically, the real matrix elements $y_j(H(t))$ of the effective Hamiltonian $H_{\textup{eff}}(t)$ of $H(t)$ are analytic functions of $t$. Let $r$ be the minimum of the orders of $y_j(H(t))$. Define
      \begin{equation}\label{eq:signedszoras}
      \mathfrak{s}_k(t):= (\mbox{sgn}  (t))^r \cdot D_k(H(t)).
    \end{equation}
    Note that for $t>0$ this agrees with the function $\mathfrak{s}_k(t)$ defined in Equation~\ref{eq:orderofstddev}, indeed, the present version is its analytic  extension for every $t$.

    \begin{cor}\label{co:anal0}
$\mathfrak{s}_k(t)$ is an analytic function at $0$ of order $r$.
    \end{cor}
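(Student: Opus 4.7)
The plan is to apply Lemma~\ref{le:order} directly to the coordinate functions $y_j(H(t))$, since the minimum of their orders is precisely the integer $r$ built into the definition of $\mathfrak{s}_k(t)$. The only thing to set up is that these coordinate functions are genuinely analytic in $t$ near $0$, and then to identify $\mathfrak{s}_k(t)$ with the expression appearing in Lemma~\ref{le:order} up to a positive constant.

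First I would observe that by Theorem~\ref{th:sw}, the SW decomposition, and therefore the effective Hamiltonian $H_{\textup{eff}}$, depends analytically on $H$ in the neighborhood $\mathcal{V}_0$ of $H_0$. Consequently, each coordinate $y_j : \mathcal{V}_0 \to \R$ of $H_{\textup{eff}}$ in the orthonormal basis $\breve{c}_1, \dots, \breve{c}_{k^2-1}$ of $\mbox{Herm}_0(k)$ is an analytic function of $H \in \mathcal{V}_0$. Since $t \mapsto H(t)$ is analytic and $H(0) = H_0 \in \mathcal{V}_0$, for $t$ in a sufficiently small neighborhood of $0$ the image remains in $\mathcal{V}_0$, so the compositions $f_j(t) := y_j(H(t))$ are analytic real-valued functions of $t$ near $0$. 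By the definition stated just before Eq.~\eqref{eq:signedszoras}, we have $r = \min_{1 \leq j \leq k^2-1} \{ \mathrm{ord}_0(f_j) \}$.

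Next, by Theorem~\ref{th:distance},
\begin{equation}
\sqrt{k} \cdot D_k(H(t)) = \| H_{\textup{eff}}(t)\| = \sqrt{\sum_{j=1}^{k^2-1} f_j(t)^2},
\end{equation}
where the last equality uses the orthonormality of $\breve{c}_1, \dots, \breve{c}_{k^2-1}$. Hence
\begin{equation}
\mathfrak{s}_k(t) = (\mbox{sgn}(t))^r \cdot D_k(H(t)) = \frac{1}{\sqrt{k}} \cdot (\mbox{sgn}(t))^r \sqrt{\sum_{j=1}^{k^2-1} f_j(t)^2}.
\end{equation}
Applying Lemma~\ref{le:order} to the functions $f_1, \dots, f_{k^2-1}$ shows that the right-hand side (without the $1/\sqrt{k}$ factor) is analytic at $0$ with order exactly $r$. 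Multiplying by the positive constant $1/\sqrt{k}$ preserves both analyticity and order, finishing the proof. There is no real obstacle here: all the work was done in setting up the SW chart (ensuring analyticity of the $y_j$) and in proving Lemma~\ref{le:order}; the corollary is essentially a packaging of those two ingredients together with the identification $\|H_{\textup{eff}}\| = \sqrt{k}\, D_k(H)$ from Theorem~\ref{th:distance}.
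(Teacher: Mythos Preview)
Your proof is correct and follows essentially the same approach as the paper: both identify $\sqrt{k}\,\mathfrak{s}_k(t)$ with $(\mbox{sgn}(t))^r \sqrt{\sum_{j} y_j(H(t))^2}$ via Theorem~\ref{th:distance} and then invoke Lemma~\ref{le:order}. Your version spells out the analyticity of the $y_j(H(t))$ more explicitly, but the substance is identical.
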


    \begin{proof}
       By Theorem~\ref{th:distance},
        \begin{equation}
            \sqrt{k} \cdot \mathfrak{s}_k(t)=(\mbox{sgn}(t))^r \cdot \| H_{\textup{eff}} (t) \|=
            (\mbox{sgn}(t))^r \cdot \sqrt{\sum_{j=1}^{k^2-1} (y_j(H(t)))^2 },
        \end{equation}
      and by Lemma~\ref{le:order}, the right hand side  is analytic and its order is $r$.
    \end{proof}

    In the following we want to compare the order of $\mathfrak{s}_k(t)$ with the order of the pairwise differences of the eigenvalues, but in general, these differences in the form $\lambda_i(t)-\lambda_j(t)$ are not differentiable at the origin, cf. Figure~\ref{fig:order}. Similarly to $\mathfrak{s}_k(t)$, we can modify the pairwise differences by a suitable reordering of the eigenvalues using Theorem~\ref{th:anal}, which is proved here in the following reformulated form.

 \begin{thm}[The eigenvalues are analytic]\label{th:anal2} Let $H(t)$ be an analytic one-parameter family of Hermitian matrices.
         There are analytic functions $\widetilde{\lambda}_1(t), \dots, \widetilde{\lambda}_n(t) $ such that for all $t$, the  eigenvalues of $H(t)$ are $\widetilde{\lambda}_i(t)$ ($i=1, \dots, n$). 
    
Equivalently, there is a permutation $i \mapsto i'$ of the indices $i \in \{1, \dots, n\}$ such that the functions
\begin{equation}\label{eq:analeig}
    \widetilde{\lambda}_i(t)= 
    \left\{ 
    \begin{array}{lcc}
    \lambda_{i}(t)  & \mbox{if} & t \geq 0, \\
\lambda_{i'}(t)  & \mbox{if} & t<0

\end{array}
    \right.
\end{equation}
are analytic (where $\lambda_i(t)$ denotes the eigenvalue functions in increasing order).
    \end{thm}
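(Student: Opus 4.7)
The plan is an induction on the matrix size $n$, with the generalised SW decomposition (Appendix~\ref{ss:Bravyi}) providing the dimensional reduction. It suffices to establish analyticity of a coherent choice of eigenvalue functions on a neighbourhood of each $t_0\in\R$, as analytic continuation along $\R$ then patches these local choices into global analytic functions; by translating $t$, take $t_0=0$. The base case $n=1$ is trivial, since $H(t)$ is itself the only eigenvalue.

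For the inductive step, assume the claim is known for families of size smaller than $n$, and let $\lambda_1^{(0)}<\cdots<\lambda_m^{(0)}$ be the distinct eigenvalues of $H(0)$ with multiplicities $k_1,\ldots,k_m$. If $m\geq 2$, the generalised SW decomposition (valid around a non-ground-state cluster too, see the Appendix) yields, on a neighbourhood of $0$,
\begin{equation*}
H(t)=e^{iS(t)}\,\tilde B(t)\,e^{-iS(t)},\qquad \tilde B(t)=\mathrm{diag}\bigl(B_1(t),\ldots,B_m(t)\bigr),
\end{equation*}
with $S(t)$ and each $B_j(t)$ analytic, $B_j$ of size $k_j$, and $B_j(0)=\lambda_j^{(0)} I_{k_j}$. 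Conjugation preserves spectra, so the eigenvalues of $H(t)$ are those of the blocks; since every $k_j<n$, the inductive hypothesis yields analytic eigenvalues for each block, and hence for $H$.

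The remaining case is $m=1$, i.e.\ $H(0)=\lambda^{(0)}I_n$, where SW provides no size reduction. Expand $H(t)-\lambda^{(0)}I=\sum_{j\geq 1}A_j t^j$ and let $r\geq 1$ be the smallest index (if any) for which $A_r$ is not a scalar multiple of $I$. If no such $r$ exists, then $H(t)=f(t)\,I_n$ for an analytic scalar $f$, and every eigenvalue of $H(t)$ equals $\lambda^{(0)}+f(t)$. Otherwise write $H(t)=\lambda^{(0)}I+g(t)\,I+t^{r}M(t)$ with $g(t)=\sum_{j=1}^{r-1}a_j t^j$ an analytic scalar and $M(t)$ analytic with $M(0)=A_r$ non-scalar. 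The eigenvalues of $H(t)$ then take the form $\lambda^{(0)}+g(t)+t^r\mu(t)$, with $\mu$ running over the eigenvalues of $M(t)$; since $M(0)$ has at least two distinct eigenvalues, the $m\geq 2$ case applied to $M$ in place of $H$ supplies analytic eigenvalue functions $\mu_i(t)$, closing the induction.

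The principal obstacle is precisely this scalar-initial case: SW is powerless on $H$ itself, and one has to factor out the largest scalar part of the Taylor expansion before SW can act on the non-scalar remainder. A secondary technical point is that Theorem~\ref{th:sw} is stated only for ground-state $k$-fold degeneracies, so the inductive step relies on the generalisation to arbitrary eigenvalue clusters discussed in the Appendix. Once the local analytic eigenvalues are constructed, assembling them into global functions $\tilde\lambda_i(t)$ and reading off the permutation $i\mapsto i'$ from the ascending sort on the two half-lines is then routine.
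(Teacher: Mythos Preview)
Your proof is correct and uses the same two ingredients as the paper---block-diagonalisation via SW when $H(0)$ has at least two distinct eigenvalues, and division by a power of $t$ when $H(0)$ is scalar---but packages them as a clean induction on the matrix size $n$. The paper instead runs an open-ended iteration $H\mapsto H^{(1)}\mapsto H^{(2)}\mapsto\cdots$ (taking the effective Hamiltonian and dividing by $t$ at each step) and needs a separate termination argument (its Case~3: if some eigenvalues never separate, then $H^{(j)}$ is divisible by every power of $t$ and hence identically zero). Your organisation sidesteps this: the scalar case is dispatched by factoring out $t^r$ in one stroke, after which $M(0)$ is non-scalar and a single SW step strictly lowers the block sizes, so the induction hypothesis applies directly. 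One small point: the $m$-block form $\tilde B=\mathrm{diag}(B_1,\ldots,B_m)$ you invoke is not literally what Theorem~\ref{th:sw} or its Appendix generalisation provide (those give two blocks), but a single two-block split already suffices for your induction since both blocks have size strictly less than $n$.
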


\begin{figure}
	\begin{center}		\includegraphics[width=0.7\columnwidth]{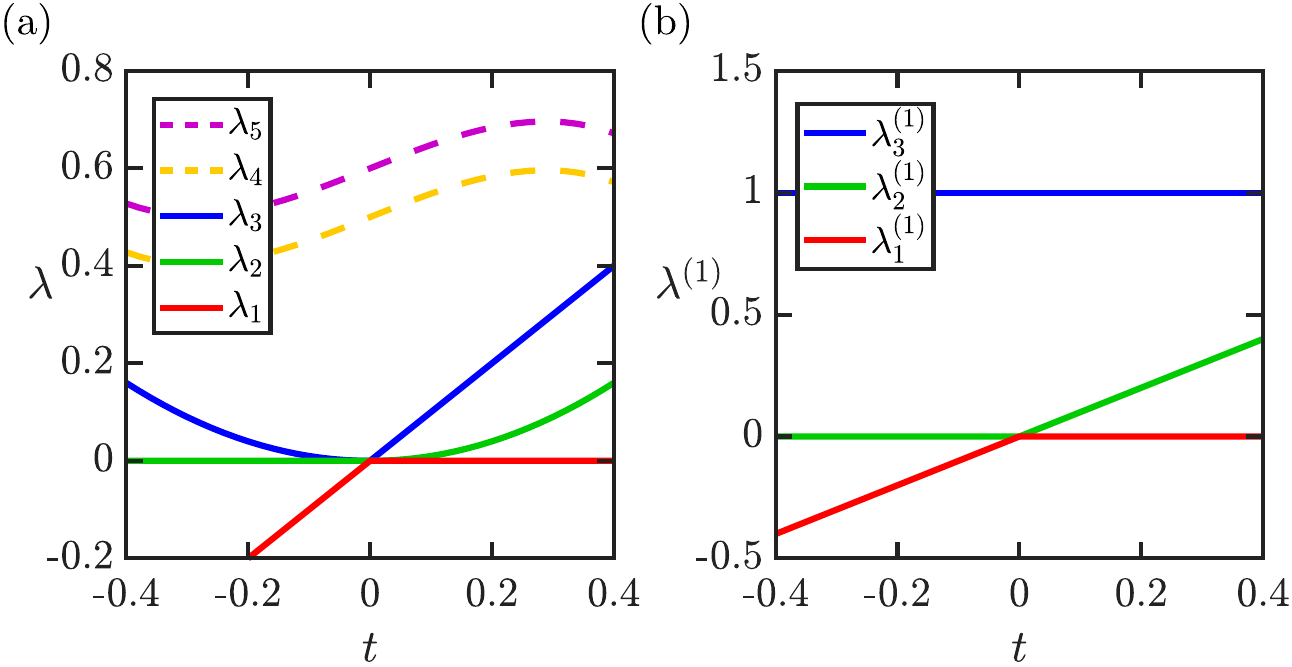}
	\end{center}
	\caption{Separating the degenerate eigenvalues in the proof of Theorem~\ref{th:anal2}. (a) The eigenvalues $\lambda_i$ of $H(t)$. The degenerate eigenvalues in increasing order are not analytic functions, however, after a suitable permutation on the negative part the eigenvalues form analytic functions. (b) To find the right permutation, we consider the eigenvalues $\lambda_i^{(1)}$ of $H_{\textup{eff}}(t)/t$. These  are `less degenerate' than the eigenvalues of $H$ in the sense that the order of the pairwise energy splittings decreases by one, in particular, the first-order intersections become non-degenerate.
	\label{fig:order}}
\end{figure}

 Although this theorem follows from a more general argument \cite[Theorem 1.10.]{kato2013perturbation}, we present here a new direct proof based on the exact SW decomposition Theorem~\ref{th:sw}. 
 The interesting case is when $H(0)$ is degenerate, e.g., $H(0) \in \Sigma_k$. 
 By Remark~\ref{re:root} below, the permutation is the identity for non-degenerate eigenvalues, i.e., $i'=i$ for a non-degenerate eigenvalue $\lambda_i$.

    \begin{rem}\label{re:root}
        The non-degenerate eigenvalues are always analytic functions of the parameters for arbitrary matrix families with arbitrary (finite) number of parameters. More precisely, consider a family $H(t)$ of complex matrices depending analytically on the parameter $t \in \R^m$ (or $t \in \C^m$) defined in a neighborhood of the origin. If $\lambda(0)$ is a non-degenerate eigenvalue of $H(0)$, then there is an analytic family $\lambda(t)$ such that $\lambda(t)$ is an eigenvalue of $H(t)$. Classically, it is proved via the following steps:
        \begin{itemize}
            \item The coefficients of the characteristic polynomial are analytic functions of the entries of the matrix $H(t)$, hence they are analytic functions of the parameter $t$.
            \item A root of a polynomial with multiplicity 1 depends analytically on the coefficients of the polynomial. It follows from the analytic implicit function theorem \cite[pg. 47, Thm. 2.5.1.]{krantz2002primer}, since a single root is not a root of the derivative of the polynomial.
        \end{itemize}
        However, for a Hermitian matrix family $H(t)$ (with $t \in \R^m$), the analytic dependence of a non-degenerate eigenvalue $\lambda$ can be deduced alternatively from the SW decomposition theorem \ref{th:sw}, more precisely, applying the same argument for $k=1$ as follows. Consider an $H_0$ with a non-degenerate eigenvalue $\lambda$. 
        For simplicity, assume that $H(0)$ is diagonal and $\lambda$ is in its $1\times 1$ upper-left corner (this can be reached by a unitary change of the basis of $\C^n$). The correspondence $f: (S,\breve{B}) \mapsto e^{iS} (H_0+\breve{B})e^{-iS}$ defines an analytic map from $\R^{n^2}$ to $\R^{n^2}$, and  its Jacobian has maximal rank at $(0,0)$, as one can show in 
        the same way as we proved Theorem \ref{th:sw} in Section~\ref{ss.prsw}. Hence, locally $f$ has an analytic inverse, and it provides a SW-type decomposition for the nearby matrices $H$. If $T$ is the analytically dependent  $1 \times 1$ upper-left  block of $\breve{B}$, then $T+\lambda$ is an eigenvalue of $H$, and it is equal to $\lambda$ for $H=H_0$. 
    \end{rem}

    \begin{rem}   

For the  eigenvalues degenerate at $t=0$, the situation is more complicated.
Their analytic behavior in an analytic \emph{one-parameter} family is a unique property of normal (in particular, Hermitian) matrices. This is based on the fact that they are unitarily diagonalizable. However, in general, the degenerate eigenvalues of $H(0)$ are not analytic functions of $t$ if $H(t)$ is a family of matrices with more than one parameter. Cf. \cite[Chapter Two]{kato2013perturbation}.
\end{rem}

\begin{proof}[Proof of Theorem~\ref{th:anal2}]
We prove the analiticity of eigenvalues with an iterated process. First, take the eigenvalue $\lambda_i(0)$ of $H(0)$. There are two possibilities.

\emph{Case 1: }  If the eigenvalue $\lambda_i(0)$ of $H(0)$ is non-degenerate, then Remark~\ref{re:root}  implies that 
\begin{equation}
            \widetilde{\lambda}_i(t):=\lambda_i(t)
        \end{equation} is an analytic function of $t$ around $t=0$.

\emph{Case 2: } If the eigenvalues $\lambda_{j+1}(0)=\lambda_{j+2}(0)=\dots = \lambda_{j+k}(0)$ of $H(0)$  are $k$-fold degenerate, then we can apply the SW decomposition to the $k$-fold degeneracy. For simplicity, assume that $j=0$, i.e., $H(0) \in \Sigma_k$. By Theorem~\ref{th:sw}, the traceless effective Hamiltonian $H_{\textup{eff}} (t)$ of $H(t)$ is an analytic function of $t$. Each $H_{\textup{eff}} (t)$ is  a Hermitian matrix with zero trace, and  $H_{\textup{eff}}(0)=0$, since $H(0)  \in \Sigma_k$. Hence $H_{\textup{eff}} (t)$ can be expressed as
        \begin{equation}
            H_{\textup{eff}} (t)=t H_1 + t^2 H_2 + t^3 H_3 + \dots,
        \end{equation}
        with traceless Hermitian matrices $H_j \in \mbox{Herm}_0(k)$.

        Define 
        \begin{equation}
            H^{(1)}(t)=\frac{H_{\textup{eff}}(t)}{t}= H_1 + t H_2 + t^2 H_3 + \dots,
        \end{equation}
        which is an analytic function of $t$. If we show that the eigenvalues of $H^{(1)}(t)$ can be expressed as analytic functions, then that implies this property for $H(t)$ as well. Indeed, let $\widetilde{\lambda}^{(1)}_i(t)$ be the  analytic eigenvalues of $H^{(1)}(t)$, indexed in increasing order for $t>0$. 
        Then the eigenvalues of $H_{\textup{eff}} (t)$ are $t \cdot \widetilde{\lambda}_{i}^{(1)}(t)$, and we can define
     \begin{equation}\label{eq:analeig}
    \widetilde{\lambda}_{i}(t)= \lambda_1(0)+
    \overline{\lambda}(t)+t \cdot  \widetilde{\lambda}_{i}^{(1)}(t),  
\end{equation}
where $\lambda_1(0)$ is the degenerate eigenvalue of $H(0)$, and $\overline{\lambda}(t)$ is the diagonal entry of the scalar matrix $T(t)$ in Equation~\eqref{eq:koo}, hence it is analytic. The functions $ \widetilde{\lambda}_{i}(t)$ are analytic, and one can verify that $\widetilde{\lambda}_{i}(t)$ is an eigenvalue of $H(t)$ in form of equation~\eqref{eq:analeig} (that is,  indexed in increasing order for $t>0$). 

Therefore, the proof is ready in the special case if $H_1$ is non-degenerate. In fact, in this case, the non-degenerate eigenvalues $\widetilde{\lambda}^{(1)}_i(t)=\lambda^{(1)}_i(t)$ are analytic functions (by case 1), and they are in increasing order also for $t\leq 0$. Since the multiplication by a negative $t$ reverses their order, in this case we have 
\begin{equation}\label{eq:permutation}
    \widetilde{\lambda}_{i}(t)= 
    \left\{ 
    \begin{array}{lcc}
    \lambda_{i}(t),  & \mbox{if} & t \geq 0 ,\\
\lambda_{k+1-i}(t),  & \mbox{if} & t<0 ,
\end{array}
    \right.
\end{equation}
for $i=1, \dots, k$. In this particular case the permutation is $i'=k+1-i$ for $i=1, \dots, k$ and $i'=i$ for $i=k+1, \dots, n$.

If $H(0)$ has an arbitrary (not ground state) $k$-fold degeneracy, then the straight-forward generalization of the SW decomposition can be used.

If $H^{(1)}(0)=H_1$ is still degenerate, that is, some of the eigenvalues $\lambda_{i}^{(1)}(0)$ coincide, we iterate the previous steps with $H^{(1)}(t)$ in place of $H(t)$. We create the traceless effective Hamiltonian $H_{\textup{eff}}^{(1)} (t)$ for each degenerate subspace, and since it is divisible by $t$, we can define the analytic family $H^{(2)}(t):=H^{(1)}_{\textup{eff}}(t)/t$.  If an eigenvalue $\lambda_{i}^{(2)}(0)$ of $H^{(2)}(0)$ is non-degenerate (case 1), then the corresponding eigenvalue of $H^{(1)}(t)$ and $H(t)$ is an analytic function of $t$. 
         
For a degenerate eigenvalue $\lambda_{i}^{(2)}(0)$ we iterate the above steps obtaining the analytic matrix families $H^{(j)}(t)$ and the eigenvalues $\lambda_{i}^{(j)}(t)$ (where $j=1, 2, 3, \dots$). If an eigenvalue becomes non-degenerate in finite steps, that is, there is a $j$ such that $\lambda_{i}^{(j)}(0)$ is a non-degenerate eigenvalue of $H^{(j)}(0)$, then by induction the corresponding eigenvalue of $H(t)$ is an analytic function of $t$. 
         
\emph{Case 3: } Some eigenvalues may not split in finite steps. More precisely, for some index $i$ we have $\lambda_{i}^{(j)} (0)=\lambda_{i+1}^{(j)} (0) = \dots =\lambda_{i+l-1}^{(j)} (0)$ holds for every step $j$. In this case consider the first $j=j_0$ such that the other eigenvalues of $H^{(j_0)}(0)$ are different from them. Then for every $j > j_0$ we have $H_{\textup{eff}}^{(j)}(t)=H^{(j)}(t)$, hence $H^{(j+1)}(t)=H^{(j)}(t)/t$. Thus (each entry of) $H^{(j)}(t)$ is divisible by an arbitrary power of $t$, in other words its order is infinite, implying that $H^{(j)}(t)=0$ identically. Hence the eigenvalues of $H^{(j)}(t)=0$ are analytic (they are constant zero functions), therefore the corresponding eigenvalues of $H(t)$ are analytic.
    \end{proof}

\begin{rem}
    The permutation of the indices can be followed step by step. As in Equation~\eqref{eq:permutation}, the  multiplication by a negative $t$ reverses
the order of the eigenvalues in each step, and the final permutation $i \mapsto i'$ is the product of these permutations.
\end{rem}

For $0 < i< j \leq k$ define the pairwise splitting of the eigenvalues in analytic way as
    \begin{equation}
        \mathfrak{s}_{i,j} (t)=\widetilde{\lambda}_i(t)-\widetilde{\lambda}_j(t).
    \end{equation}

    The proof of Theorem~\ref{th:ordspl}  on the order of energy splitting is based on the following lemma.

    \begin{lem}\label{le:mean}
        Let $f_1, \dots, f_l: \R \to \R $ be analytic functions defined in a neighborhood of 0. Let $\overline{f}=\sum_{i=1}^l f_i/l$ be their mean. Then 
        \begin{equation}\label{eq:mean}
                       \min_{1 \leq i, j \leq l} \{\mathrm{ord}_0 (f_i-f_j) \} =  \min_{1 \leq i \leq l-1} \{\mathrm{ord}_0 (f_i-f_{i+1}) \} = \min_{1 \leq i \leq l} \{\mathrm{ord}_0 (f_i- \overline{f}) \} .
        \end{equation}
    \end{lem}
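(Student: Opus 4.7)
The plan is to prove the two equalities in Equation~\eqref{eq:mean} by exploiting the subadditivity of the vanishing order under linear combinations of analytic functions, namely the elementary fact that $\mathrm{ord}_0(f+g) \geq \min\{\mathrm{ord}_0(f),\mathrm{ord}_0(g)\}$. Denote the three quantities in Equation~\eqref{eq:mean} by $r_1$, $r_2$ and $r_3$, respectively; I would show $r_1 \leq r_2 \leq r_1$ and $r_1 \leq r_3 \leq r_1$ separately.

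For the first equality, $r_1 \leq r_2$ holds trivially since the minimum in $r_2$ is taken over the subset of consecutive pairs. Conversely, I would use the telescoping identity
\begin{equation}
f_i - f_j = \sum_{m=i}^{j-1} (f_m - f_{m+1}) \qquad (i<j)
\end{equation}
together with subadditivity of $\mathrm{ord}_0$ to conclude that for every $i<j$ one has $\mathrm{ord}_0(f_i-f_j) \geq \min_{m}\mathrm{ord}_0(f_m-f_{m+1}) = r_2$, and hence $r_1 \geq r_2$.

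For the second equality, I would use two dual decompositions. On the one hand, $f_i - \overline{f} = \frac{1}{l}\sum_{j=1}^{l}(f_i-f_j)$, so subadditivity gives $\mathrm{ord}_0(f_i-\overline{f}) \geq \min_j \mathrm{ord}_0(f_i-f_j) \geq r_1$, proving $r_3 \geq r_1$. On the other hand, $f_i - f_j = (f_i-\overline{f}) - (f_j-\overline{f})$, so the same principle yields $\mathrm{ord}_0(f_i-f_j) \geq \min\{\mathrm{ord}_0(f_i-\overline{f}),\mathrm{ord}_0(f_j-\overline{f})\} \geq r_3$, giving $r_1 \geq r_3$.

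There is no real obstacle here; the only thing to remark is that if some $f_i$ and $f_j$ coincide identically, the corresponding order equals $+\infty$, but this does not affect any of the inequalities above because the extrema are taken over nonempty sets and $+\infty$ is the neutral element for the minimum appearing in the subadditivity estimate.
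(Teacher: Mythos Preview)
Your proof is correct. Each of the four inequalities ($r_1 \leq r_2$, $r_2 \leq r_1$, $r_1 \leq r_3$, $r_3 \leq r_1$) follows cleanly from the subadditivity of $\mathrm{ord}_0$ applied to the linear decompositions you wrote down, and the remark about $+\infty$ handles the degenerate cases.

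The route differs from the paper's. The paper expands each $f_i$ as a Taylor series $f_i(t)=\sum_r a_{i,r}t^r$ and observes that, for a fixed degree $r$, the four statements ``not all $a_{i,r}$ are equal'', ``some $a_{i,r}\neq a_{j,r}$'', ``some $a_{i,r}\neq a_{i+1,r}$'', and ``some $a_{i,r}\neq \overline{a}_r$'' are equivalent; the three minima are then the smallest $r$ witnessing (2), (3), (4), hence equal. So the paper argues coefficientwise via a single equivalence, while you argue at the level of functions via four order inequalities. Your approach packages the key fact as the reusable principle $\mathrm{ord}_0(\sum c_m g_m)\geq \min_m \mathrm{ord}_0(g_m)$, which is arguably more transparent when one later wants to combine it with Lemma~\ref{le:order}; the paper's coefficient argument is slightly shorter because it proves all three equalities in one stroke rather than as two pairs of inequalities.
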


    \begin{proof}[Proof of Lemma~\ref{le:mean}]

    Let
    \[ f_i(t)= \sum_{r=0}^{\infty} a_{i,r} t^r \mbox{, and } \ \overline{f}(t)= \sum_{r=0}^{\infty} \overline{a}_r t^r\]
    be the Taylor series of $f_i$ and $\overline{f}$, respectively.

    For a fixed integer $r$ the following are equivalent:
    \begin{enumerate}
    \item Not all the coefficients $a_{i,r}$  are equal ($1 \leq i \leq l$).
        \item There are indices $1 \leq i, j \leq l$ such that $a_{i,r} \neq a_{j,r}$.
        \item There is an index $1 \leq i \leq l-1$ such that $a_{i,r} \neq a_{i+1,r}$.
        \item There is an index $1 \leq i \leq l$ such that  $a_{i,r} \neq \overline{a}_r $.
    \end{enumerate} 

    By definition, the left hand side of Equation~\eqref{eq:mean} is the smallest integer $r$ for which (2) holds, the middle part is the smallest integer $r$ for which (3) holds, and the right hand side is the smallest integer such that (4) holds. This proves the lemma.

    \end{proof}

 \begin{proof}[Proof of Theorem~\ref{th:ordspl} and Theorem~\ref{th:minden} on the order of energy splitting and distancing] 

 By Lemma~\ref{le:mean} we get
\begin{equation}
    \min_{ 1 \leq i, j \leq k} \{ \mathrm{ord}_0(\widetilde{\lambda}_i(t)-\widetilde{\lambda}_j(t))  \}=
\min_{ 1 \leq i \leq k-1} \{ \mathrm{ord}_0(\widetilde{\lambda}_i(t)-\widetilde{\lambda}_{i+1}(t))
    =\min_{ 1 \leq i \leq k} \{ \mathrm{ord}_0(\widetilde{\lambda}_i(t)-\overline{\lambda}(t)) \},
\end{equation}
that is, 
\begin{equation}
\min_{ 1\leq i < j \leq k} \{ \mathrm{ord}_0 (\mathfrak{s}_{i,j})\}=
\min_{ 1 \leq i \leq k-1} \{ \mathrm{ord}_0 (\mathfrak{s}_{i,i+1})\}
=\min_{ 1 \leq i \leq k} \{ \mathrm{ord}_0 (\overline{\mathfrak{s}}_{i})\}.
\end{equation}
Let $r$ denote this minimum. In particular, considering only the right side, $r$ is the minimum of the orders of the functions $\overline{\mathfrak{s}}_i=\widetilde{\lambda}_i(t)-\overline{\lambda}(t)$. Applying Lemma~\ref{le:order} to these functions (i.e., $f_i=\overline{\mathfrak{s}}_i$) gives that
      \begin{equation}\label{eq:ujra}
         t \mapsto  \frac{ (\mbox{sgn}(t))^r }{\sqrt{k}} \cdot \sqrt{\sum_{i=1}^k (\widetilde{\lambda}_i(t)-\overline{\lambda}(t))^2 }
      \end{equation}
      is an analytic function of order $r$. But this function is equal to $\mathfrak{s}_k(t)$ (defined by Equation~\eqref{eq:signedszoras}).
      This shows that $\mathrm{ord}_0 (\mathfrak{s}_k)=\min_{ 1 \leq i \leq k} \{ \mathrm{ord}_0 (\overline{\mathfrak{s}}_{i})\}=r$. Together with Corollary~\ref{co:anal0}, it also shows that $r$ agrees with the order of the effective Hamiltonian, i.e., the last two expressions of Equation~\eqref{eq:ord22}.

    Next we show that $\mathrm{ord}_0 (\mathfrak{s}_{1,k}) =r$. Clearly, \begin{equation}\mathrm{ord}_0 (\mathfrak{s}_{1,k}) \geq \min_{ 1\leq i < j \leq k} \{ \mathrm{ord}_0 (\mathfrak{s}_{i,j})\} \end{equation} 
    holds by definition.
    Assume indirectly that 
    the inequality is strict. Take $i, j$ such that $\mathrm{ord}_0 (\mathfrak{s}_{i,j}) =r$. Then there is an $\epsilon > 0$ such that $|\mathfrak{s}_{i,j}(t)| > |\mathfrak{s}_{1,k}(t)|$ holds for $0 < |t| < \epsilon$. This is a contradiction, since $|\mathfrak{s}_{i,j}(t)| < |\mathfrak{s}_{1,k}(t)|$ holds for sufficiently small positive $t$, since the eigenvalues are in increasing order. This proves the theorem.
      \end{proof}

   \begin{proof}[Proof of Corollary~\ref{co:tangstick} on the order of energy splitting in linear families] Consider the SW chart (Corollary~\ref{co:locchart}) on the neighborhood $\mathcal{V}_0$ of $H_0$ in $\mbox{Herm}(n)$, with coordinates $\phi(H)=(x,y) \in \R^{n^2-k^2+1} \times \R^{k^2-1}$. Consider the map $\phi_y=\mathrm{pr}_2 \circ \phi:  \mathcal{V}_0 \to \R^{k^2-1}$, that is, $\phi_y(H)=y$. Since $\Sigma_k \cap \mathcal{V}_0=\phi_y^{-1}(0)$, its tangent space at $H_0$  is $T_{H_0} \Sigma_k=\ker ((\mathrm{d} \phi_y)_{H_0})$. Then, $H_1 \in T_{H_0} \Sigma_k$ if and only if 
   \begin{equation} 0=(\mathrm{d} \phi_y)_{H_0} (H_1)=\left. \frac{\mathrm{d}}{\mathrm{d}t} (\phi_y(H_0+tH_1)) \right|_{t=0}, \end{equation}
   which is equivalent to the fact that the order of every component $(\phi_y)_j(H_0+tH_1)=y_j(H_0+tH_1)$ is bigger than 1, that is, 
   \begin{equation}\min_{1 \leq j \leq k^2-1} \{\mbox{ord}_0 (t \mapsto y_j(H_0+tH_1)) \}
   >1 \end{equation}
   and the left side is equal to the order of energy splitting $r$ by Theorem~\ref{th:minden}. This proves the corollary.
       
   \end{proof}

      \subsection{Parameter-dependent quantum systems and Weyl points}

      \begin{proof}[Proof of Theorem~\ref{th:charweyl} on the characterization of Weyl points]
          The equivalence of (1) and (2) is a well-known property of transversality, see e.g. \cite[pg. 28]{guillemin2010differential} or \cite[Lemma 4.3.]{golubitsky2012stable}. To see it in our particular situation, consider a chart on $M^3$ around $p_0=0$, and consider the SW chart (Corollary~\ref{co:locchart}) on the neighborhood $\mathcal{V}_0$ of $H_0$ in $\mbox{Herm}(n)$, with coordinates $\phi(H)=(x, y) \in \R^{n^2-3} \times \R^{3}$. Let $\phi_y: \mathcal{V}_0 \to \R^3$ denote the map $\phi_y (H)=(y_1, y_2, y_3)$. Obviously, $h=\phi_y \circ H$, and the tangent space of $\Sigma_2$ at $H_0$ is $T_{H_0} \Sigma_2=\ker ((\mathrm{d} \phi_y)_{H_0})$. The transversality of $H$ to $\Sigma_2$ at $H(p_0)=H_0$ means that
          \begin{equation}
              T_{H_0} \Sigma_2 + (\mathrm{d} H)_{p_0}(T_{p_0} M^3)=T_{H_0} \mbox{Herm}(n), 
          \end{equation}
           or equivalently,
          \begin{equation}
              \ker ((\mathrm{d} \phi_y)_{H_0}) + (\mathrm{d} H)_{p_0}(T_{p_0} M^3)=T_{H_0} \mbox{Herm}(n).
          \end{equation}
          (Recall the definitions from Section~\ref{sec:weylrobustness}.)
          By applying $(\mathrm{d}\phi_y)_{H_0}$  on both sides, we get
          \begin{equation}
              (\mathrm{d}\phi_y)_{H_0}  \circ (\mathrm{d} H)_{p_0}(T_{p_0} M^3)=(\mathrm{d}\phi_y)_{H_0} (T_{H_0} \mbox{Herm}(n)).
          \end{equation}
          On the left side, $(\mathrm{d}\phi_y)_{H_0}  \circ (\mathrm{d} H)_{p_0}=(\mathrm{d} h)_{p_0}$ by the chain rule, and the right side is equal to $T_0 \R^3$, since $(\mathrm{d}\phi_y)_{H_0}$ is surjective. Therefore, the transversality is equivalent to the fact that $(\mathrm{d} \phi_y)_{H_0}$ is surjective on the the image of $(\mathrm{d}H)_{p_0}$, which is equivalent to the fact that $(\mathrm{d} h)_{p_0}$ has maximal rank 3. This proves (1) $\Leftrightarrow$ (2).

         Point (3) is clearly equivalent to (2), since it is equivalent to the following: for every $\gamma$ with $\gamma(0)=p_0$, $\gamma'(0) \neq 0$, $\gamma'(0) \notin \ker ((\mathrm{d} h)_{p_0})$. Point (3) is also equivalent to (4), since, by Theorem~\ref{th:minden}, the order of energy splitting is equal to the order of $h(\gamma(t))$, which is 1 by (3).
      \end{proof}

      \begin{proof}[Proof of Corollary~\ref{co:weyliso} on Weyl points being isolated]  

      First we show that $p_0$ has a neighborhood $\widetilde{\mathcal{W}}_0$ in $M$ such that the restriction $H|_{\widetilde{\mathcal{W}}_0}$ is transverse to $\Sigma_2$. The differential of $h$ has maximal rank at $p_0$, that is, its determinant is non-zero. But the determinant $p \mapsto \det((\mathrm{d} h)_p)$ is a continuous map, hence $p_0$ has a neighborhood $\widetilde{\mathcal{W}}_0$ such that $\det((\mathrm{d} h)_p) \neq 0$ if $p \in \widetilde{\mathcal{W}}_0$, that is, the rank of the differential of $h$ has maximal rank at every point $p \in \widetilde{\mathcal{W}}_0$.  Then, by point (2) of Theorem~\ref{th:charweyl}, the restriction $H|_{\widetilde{\mathcal{W}}_0}: \widetilde{\mathcal{W}}_0 \to \mbox{Herm}(n)$ is transverse to $\Sigma_2$, i.e., at every $p \in \widetilde{\mathcal{W}}_0$, either $H$ is transverse to $\Sigma_2$ at $H(p) \in \Sigma_2$, or $H(p) \notin \Sigma_2$. Then, by the theorem in \cite[pg. 30]{guillemin2010differential} (see also \cite[Thm. 4.4.]{golubitsky2012stable}), $H^{-1}(\Sigma_2) \cap \widetilde{\mathcal{W}}_0$ is a submanifold of dimension $\dim(M^3)-\mbox{codim}(\Sigma_2)=0$ that is, it consists only of isolated points, proving the corollary.

      \end{proof}
      
      \begin{proof}[Proof of Corollary \ref{co:weylgene} on Weyl points being stable and generic]
Part (a) is essentially the stability theorem \cite[35]{guillemin2010differential}, see also \cite[pg. 59, exercise (1) (a)]{golubitsky2012stable}. We formulate the proof in our particular situation.
      
The perturbation $H_t$ induces a perturbation $h_t$ of $h_{t=0}=h$, that is, a smooth map $(p,t) \mapsto h_t(p)$ defined in a neighborhood of $(p_0,0)$. The determinant map $(p,t) \mapsto \det((\mathrm{d} h_t)_p)$ is continuous, hence, it is non-zero in a neighborhood of $(p_0, 0)$. It implies that there is an $\epsilon_1>0$ such that for a fixed $|t|<\epsilon_1$ the map $p \mapsto H_t(p)$ is transverse to $\Sigma_2$ in a sufficiently small neighborhood of $p_0$ in $M$. Therefore, the map $(p, t) \mapsto H_t(p)$ is also transverse to $\Sigma_2$, hence the preimage of $\Sigma_2$ is a smooth manifold $\Gamma$ in a neighborhood of $(p,0)$ in $M \times \R$ of dimension $\dim(\Gamma)=\dim(M \times \R) - \mbox{codim}(\Sigma_2)=1$. Let $\Gamma_0$ be the component of $\Gamma$ containing $(p_0, 0)$. Applying the projection $(p,t) \mapsto p$ to $\Gamma_0$ gives the $\mathcal{C}^{\infty}$ curve $\gamma: (-\epsilon_1, \epsilon_1) \to M$ through $\gamma(0)=p_0$, satisfying that $\gamma(t) \in H_t^{-1}(\Sigma_2)$ is a Weyl point of $H_t$.

Other degeneracy points can be avoided by a further restriction of $t$, i.e., $|t| < \epsilon$ with a suitable $0 <\epsilon < \epsilon_1$, defined as follows. Choose a neighborhood $\mathcal{W_{\gamma}}$ of the image of $\gamma$ in $\mathcal{W}_0$ which does not contain other degeneracy points, that is, $H_t^{-1}(\Sigma_2) \cap \mathcal{W_{\gamma}}= \{ \gamma(t) \} $ for all $-\epsilon_1 <t <\epsilon_1$. Consider the compact set $\mbox{cl}(\mathcal{W}_0) \setminus \mathcal{W}_{\gamma}$. 
Observe that there is an $0 <\epsilon <\epsilon_1$ such that $H_t(p) \notin \Sigma_2$ for $p \in \mbox{cl}(\mathcal{W}_0) \setminus \mathcal{W}_{\gamma}$ and $0 <|t| <\epsilon$. Indeed, otherwise there would be a series $(p_i, t_i)$ such that $t_i$ converges to $0$ and $H_{t_i}(p_i) \in \Sigma_2$, and the limit of a convergent subseries of $p_i$ would be a point $q \in \mbox{cl}(\mathcal{W}_0) \setminus \mathcal{W}_{\gamma}$ with $H(q) \in \Sigma_2$. This proves point (a).

Part (b) follows from the transversality theorem \cite[pg 68-69]{guillemin2010differential}, see also \cite[Lemma 4.6.]{golubitsky2012stable}.
Consider a perturbation parametrized by the Hermitian matrices $M \times \mbox{Herm}(n) \to \mbox{Herm}(n)$, defined as $(p, K) \mapsto H_K(p):=H(p)+K$. This map is transverse to $\Sigma_2$, indeed, for a fixed $p$ it is a translation of $\mbox{Herm}(n)$. By the transversality theorem, those parameter values $K$ for which the map $p \mapsto H_K(p)$ is transverse to $\Sigma_2$ form a dense subset in $\mbox{Herm}(n)$. This proves the theorem.

      \end{proof}

      \begin{rem}\label{re:altalanosabb}
In Corollary~\ref{co:weylgene} we formulated the protected nature of the Weyl points in a way which can be deduced from the properties of transversality. The goal is to demonstrate the power of this approach in context of the degeneracy points. However, there are several possible generalizations, whose rigorous proofs are obstructed by technical difficulties.
          \begin{enumerate}
          \item In part (b), one may expect a stronger result, namely, the existence of a one-parameter perturbation $H_t$ of $H$ which has only Weyl points for $0<t< \epsilon$. This does not follow from the transversality theorems appearing in the literature. Indeed, the set of the `wrong' perturbations (in the particular form $H_K(p)=H(p)+K$, the set of those $K$ matrices for which $H_K$ is non-transverse) might be very complicated in general, although its complement is a dense set. If the map germ of $h$ at $p_0$ is `sufficiently nice', then the stronger statement holds. See the Appendix of \cite{Pinter2022} for related results and examples.

          \item For a sufficiently small $\epsilon$, the Weyl points of $H_K$ in $\mathcal{W}_0$ can be regarded as the Weyl points born from the non-generic degeneracy point $p_0$ of $H$. If there is a one-parameter perturbation $H_t$ with only Weyl points, then these Weyl points converge to $p_0$, as $t$ tends to 0, and they can be separeted from other Weyl points similarly as in the proof of part (a) of Corollary~\ref{co:weylgene}. In general it is harder to formalize the concept of `Weyl points born from $p_0$'.
          \item Every Weyl point $p$ of $H_t$ has a sign ($\pm 1$), defined as the sign of $\det ((\mathrm{d}h_t)_p)$. The sum of the signs for the Weyl points born from $p_0$ does not depend on the perturbation, it is an invariant of the degeneracy point $p_0$ of $H$. Moreover, it is equal to the local degree $\deg_{p_0} (h)$ of $h$ at $p_0$, and also with the Chern number of the lowest eigenstate (up to sign). This number is the topological charge of the non-generic degeneracy point $p_0$ of $H$.

          \item One can consider the global version, i.e., the transversality  to the whole degeneracy set $\Sigma$. However, $\Sigma$ is not a manifold. A map $H: M^3 \to \mbox{Herm}(n)$ defined on a 3-manifold $M^3$ is transverse to $\Sigma$ if it is transverse to every stratum of $\Sigma$. In other words, such map has only isolated transverse two-fold degeneracy points, but not necessarily ground-state. A map which is non-transverse to $\Sigma$ can have various degeneracy patterns, for example, multifold degeneracy points or non-isolated degeneracies. Every map can be perturbed to a transverse one by an arbitrary small perturbation, hence, any arbitrarily complicated degeneracy splits into transverse two-fold degeneracy points (Weyl points).

          \end{enumerate}
      \end{rem}

\section{Conclusions}

Energy degeneracies in parameter-dependent quantum systems are often accompanied by interesting phenomenology. 
In this work, we connected such energy degeneracies, and the effects of perturbations breaking those degeneracies, to the geometry of degeneracy submanifolds of the Euclidean space of Hermitian matrices. 
One link we have found is that the Schrieffer-Wolff transformation, a standard perturbative method to treat the eigenvalue problem of quasidegenerate matrices, is, in fact, a local chart of the space of Hamiltonians, which is aligned with the degeneracy submanifold.
We also established a distance theorem, relating the Frobenius distance of a matrix from a degeneracy submanifold to the energy splitting of its energy eigenvalues.
As a consequence, we have found that the order of energy splitting, caused by a one-parameter perturbation of a degenerate energy level, is the same as the order of distancing of the perturbed Hamiltonian from the degeneracy submanifold.
Finally, as applications of our results, we have proven the protected nature of Weyl points using the transversality theorem, and shown that geometrical information on the degeneracy submanifold can be obtained using results known in the domains of topological order and quantum error correction.
We anticipate that the connections established in our work  will further enhance the already-existing and fruitful cross-fertilisation between physics --- band-structure theory, quantum information, quantum materials
--- and differential geometry.

\section*{Author Contributions}

G.~P. and A.~P. initiated the project.
A.~P. managed the project.
G.~P. wrote the initial draft of the manuscript.
G.~P. and Gy.~F. derived the rigorous proofs with assistance from A.~P. and D.~V..
Gy.~F. prepared the figures.
All authors took part in writing the final version of the manuscript.

\acknowledgements

We thank J. Asb\'oth, S.~Diaz, P. Fromholz, A.~Hamma, I.~Lovas, P.~L\'evay, T.~Rakovszky, A.~S\'{a}ndor, P.~Vrana for useful discussions.
This research is supported by 
the Ministry of Culture and Innovation and the National Research, Development and Innovation Office within the Quantum Information National Laboratory of Hungary (Grant No. 2022-2.1.1-NL-2022-00004),
by the European Union within the Horizon Europe research and innovation programme via the ONCHIPS project under grant agreement No 101080022,
and by the HUN-REN Hungarian Research Network through the Supported Research Groups Programme, HUN-REN-BME-BCE Quantum Technology Research Group (TKCS-2024/34).
D.~V. was supported by the National Research, Development and Innovation Office of Hungary under OTKA grant no. FK 146499.
and the Deutsche Forschungsgemeinschaft (DFG, German Research Foundation) under Germany’s Excellence Strategy through the W\"{u}rzburg-Dresden Cluster of Excellence on Complexity and Topology in Quantum Matter – \textit{ct.qmat} (EXC 2147, project-id 392019).

\newpage
\appendix

\section*{Appendix: Further notes on SW decomposition}
\label{ss:Bravyi}

In Appendix~\ref{ss:coordfree} we formulate the coordinate free version of the SW decomposition Theorem~\ref{th:sw}, and we provide an important observation about the dependence on the base point $H_0$ in Appendix~\ref{ss:depbase}. In Appendix~\ref{app:directrot} we summarize the approach of \cite{BravyiSW}, which describes $e^{iS}$ as a `direct rotation'. We present another description as a parallel transport in Appendix~\ref{app:para}. Finally, we specify the validity of decomposition \eqref{eq:koo} in Appendix~\ref{app:range}.

\section{Coordinate free description}\label{ss:coordfree}
In \cite{BravyiSW}, $H_0$ is not assumed to be degenerate, neither diagonal. Dropping the  degeneracy of the base point does not cause an essential change in the SW decomposition \eqref{eq:koo}. Indeed, the effective Hamiltonian of $H$ with respect to a non-degenerate Hermitian matrix $G $ is equal to $G_{\Sigma}+H_{\textup{eff}}-G$, where $H_{\textup{eff}}$ is the effective Hamiltonian of $H$ with respect to $H_0:=G_{\Sigma}$. However, from our point of view the degeneracy of the base point is important, since we study the geometry around $\Sigma_k$. Hence we everywhere assume that $H_0$ has $k$-fold ground state degeneracy.

If $H_0 \in \Sigma_k$ is not diagonal, then Theorem~\ref{th:sw} can be formulated by replacing the block diagonality conditions with their coordinate free generalizations. Let $\mathcal{P}$ and $\mathcal{P}_0$ be the sum of the eigenspaces corresponding to the lowest $k$ eigenvalues of $H$ and $H_0$ respectively.
Their respective Hermitian orthogonal complements $\mathcal{P}^{\perp}$ and $\mathcal{P}_0^{\perp}$ are the sum of the eigenspaces corresponding to the $n-k$ highest eigenvalues.
Let $P$ and $P_0$ be the orthogonal projectors onto $\mathcal{P}$ and $\mathcal{P}_0$ respectively. Then $I-P$ and $I-P_0$ are the respective orthogonal projectors onto $\mathcal{P}^{\perp}$ and $\mathcal{P}_0^{\perp}$, where $I$ is the $n \times n$ unitary matrix. 

The general version of Theorem~\ref{th:sw} induces exactly the same decomposition as \eqref{eq:koo} except for replacing the conditions (1)--(4) with the following ones:
\begin{enumerate}
    \item[(1)*] $P_0^{\perp} BP_0^{\perp}=B$.
    \item[(2)*] $T=c P_0$ with a $c \in \R$.
    \item[(3)*] $P_0 H_{\textup{eff}} P_0=H_{\textup{eff}}$ and $\tr (H_{\textup{eff}})=0$.
    \item[(4)*]  $P_0 S P_0=0,\;P_0^{\perp} S P_0^{\perp}=0$. 
\end{enumerate}

Although it can be deduced from \cite[Section 3.1.]{BravyiSW}, another way to see it is the change to the eigenbasis of $H_0$ in $\C^n$. Let $H_0'=U_0^{-1} H_0 U_0$ be a diagonalization of $H_0$ with increasing order of the eigenvalues, and consider the SW decomposition \eqref{eq:koo} of $H':=U_0^{-1} H U_0$ with respect to $H_0'$, that is, 
\begin{equation}
    H'=e^{iS'} \cdot (H'_0+B'+T'+H'_{\textup{eff}}) \cdot e^{-iS'}.
\end{equation}
Then, $B:=U_0B'U_0^{-1}$, $T:=U_0 T U_0^{-1}$, $H_{\textup{eff}}:=U_0 H_{\textup{eff}} U_0^{-1} $ and $S:=U_0 S' U_0^{-1}$ satisfy (1)*--(4)*, and 
\begin{equation}
    H=e^{iS} \cdot (H_0+B+T+H_{\textup{eff}}) \cdot e^{-iS},
\end{equation}
providing a SW decomposition of $H$ with respect to $H_0$ in the general sense.

A SW chart around a non-diagonal $H_0 \in \Sigma_k$ is formed by the coordinates of the matrices $B'$, $T'$, $S'$ (these are the local ccordinates of $\Sigma_k$) and $H_{\textup{eff}}'$ (transverse coordinates), generalizing Corollary~\ref{co:locchart}. This chart obviously depends on the choice of $U_0$.

\section{Dependence on the base point $H_0$}\label{ss:depbase} 
First of all, if two base points $H_0, H_0' \in \Sigma_k$ can be diagonalized in the same basis with increasing order of the eigenvalues, then the SW decomposition with respect to $H_0$ and $H_0'$ are essentially the same: the difference of the eigenvalues appears as a shift in $B$, $T$, while $H_{\textup{eff}}$ and $S$ remain unchanged. Hence, in the interesting case the base points have different eigenbases.

In the same way as we did it in case of diagonal $H_0$, one can define in general (for non-diagonal $H_0$) 
\begin{equation}
    H_{\textup{proj}}:=e^{iS} \cdot (H_0+B+T) \cdot e^{-iS}
\end{equation}
and prove that $H_{\textup{proj}}$ is equal to $H_{\Sigma}$, that is, the closest point of $\Sigma_k$ to $H$ constructed independently of the SW decomposition, see \eqref{eq:hsigma} and Section~\ref{ss:dist}. Hence, $H_{\textup{proj}}$ does not depend on the choice of the base point.

In contrast, the effective Hamiltonian deeply depends on the base point $H_0$.  If two base points $H_0$ and $H'_0$ cannot be simultaneously diagonalized, then the corresponding effective Hamiltonians $H_{\textup{eff}}$ and $H'_{\textup{eff}}$ of $H$ are obviously not equal in general, since (3)* in Section~\ref{ss:coordfree} imposes a different condition on them. 
Moreover, two matrices $H$ and $G$ having the same effective Hamiltonian $H_{\textup{eff}}=G_{\textup{eff}}$ with respect to $H_0$ can have different effective Hamiltonian $H'_{\textup{eff}} \neq G'_{\textup{eff}}$ with respect to $H_0'$, in general. However, their spectra and norms are the same, $ \| H'_{\textup{eff}} \| = \| G'_{\textup{eff}} \|$. To see it, observe that
\begin{equation}
    H-H_{\Sigma}=e^{iS(H)} \cdot H_{\textup{eff}} \cdot e^{-iS(H)}=
    e^{iS'(H)} \cdot H'_{\textup{eff}} \cdot e^{-iS'(H)},
\end{equation}
\begin{equation}
    G-G_{\Sigma}=e^{iS(G)} \cdot G_{\textup{eff}} \cdot e^{-iS(G)}=
    e^{iS'(G)} \cdot G'_{\textup{eff}} \cdot e^{-iS'(G)},
\end{equation}
where $S(H)$ and $S(G)$ ($S'(H)$ and $S'(G)$, respectively) denote the exponents coming from the SW decomposition of $H$ and $G$ with respect to $H_0$ ($H'_0$), see Section~\ref{ss:coordfree}. Then,  $H_{\textup{eff}}=G_{\textup{eff}}$ implies that
\begin{equation}\label{eq:Geff}
G'_{\textup{eff}}=e^{-iS'(G)} \cdot e^{iS(G)} \cdot e^{-iS(H)} \cdot e^{iS'(H)} \cdot H'_{\textup{eff}} 
\cdot e^{-iS'(H)} \cdot e^{iS(H)} \cdot e^{-iS(G)} \cdot e^{i S'(G)},
\end{equation}
which does not imply $H'_{\textup{eff}}= G'_{\textup{eff}}$ in general, however, it implies that the norms and spectra of $H'_{\textup{eff}}$ and $ G'_{\textup{eff}}$ are equal. Furthermore, if  $H_0$ and $H_0'$ can be diagonalized in the same basis (with increasing order of the eigenvalues), then $H_{\textup{eff}}= G_{\textup{eff}}$ implies $H'_{\textup{eff}}= G'_{\textup{eff}}$, since in this case $e^{iS(H)}=e^{iS'(H)}$ and $e^{iS(G)}=e^{iS'(G)}$ in Equation \eqref{eq:Geff}. Indeed, in this case $H_{\textup{eff}}= G_{\textup{eff}}=H'_{\textup{eff}}= G'_{\textup{eff}}$ also holds.

\section{Direct rotation}\label{app:directrot}

Here we recall the characterization of $e^{iS}$ as a direct rotation.
We follow Ref.~\cite{BravyiSW}. 
About the role of $S$, recall Remark~\ref{re:Sgenerates}: in first order, $S$ generates the off-block elements of $H$.

To define direct rotation, recall that $\mathcal{P}$ and $\mathcal{P}_0$ are the sum of the eigenspaces corresponding to the lowest $k$ eigenvalues of $H$ and $H_0$ respectively, and $e^{iS}$ is a unitary rotation between $\mathcal{P}_0$ and $\mathcal{P}$, specified as follows. 
$I-2 P_0$, and $I-2 P$ are the reflections with respect to $\mathcal{P}_0^{\perp}$ and $\mathcal{P}^{\perp}$, respectively. Assuming that $\sqrt{(I-2 P)(I-2P_0)}$ is defined (we assume that the matrix under the square root has no negative real eigenvalues, and we use the principal branch of the square root with a branch cut at the negative real axis), it is a unitary matrix called the direct rotation between the subspaces $\mathcal{P}_{0}$ and $\mathcal{P}$, see \cite{davis1969some, BravyiSW}.  By \cite[Section 2.2.]{BravyiSW}, $\sqrt{(I-2 P)(I-2P_0)}=e^{iS}$ holds. 
It follows that $PSP=P^{\perp} S P^{\perp}=0$ also holds for $S$, that is, $S$ is off-block with respect to $H_0$, and also with respect to $H$. Moreover, if $H_0$ is diagonal, then the direct rotation $e^{iS}$ has a special form,
\begin{equation}
e^{iS}=\begin{pmatrix}
    U_{1,1} & U_{1,2} \\
    -U_{1,2}^{\dagger} & U_{2,2} \\
\end{pmatrix}   , 
\end{equation}
where
\begin{eqnarray}
U_{1,1}&=&\sqrt{I_{k \times k}-U_{1,2}U_{1,2}^\dagger},\\
U_{2,2}&=&\sqrt{I_{(n-k) \times (n-k)}-U_{1,2}^\dagger U_{1,2}},
\end{eqnarray}
are Hermitian positive definite matrices, see \cite[Section 2.3.]{BravyiSW}.

\section{Parallel transport}\label{app:para}
Alternatively,  $e^{iS}$ can be described as a parallel transport in two conceptually different ways. We thank the anonymous reviewer for suggesting this approach.

 For simplicity, assume that $H_0$ is diagonal, hence, $S$ is off-block. $S$ defines a path (one-parameter subgroup) $U(t)=e^{itS}$ ($t \in [0;1] $) in the unitary group $U(n)$. Let $u_1(t), \dots, u_n(t)$ denote the columns of $U(t)$, and let $\mathcal{P}(t) \subset \C^n$ denote the subspace spanned by the vectors $u_1(t), \dots, u_k (t)$. Obviously, $\mathcal{P}(0)= \mathcal{P}_0$, $\mathcal{P}(1)=\mathcal{P}$, and $\mathcal{P}(t)^{\perp}$ is the subspace of $\C^n$ spanned by $u_{k+1}(t), \dots, u_n(t)$. Observe that 
\begin{equation}\label{eq:parallel}
    U(t)^{\dagger} \cdot  
    \dot{U}(t) =iS
\end{equation}
holds for all $t$. Since $S$ is off-block, this means that  for  $1 \leq i, j \leq k$  (or $ k+1 \leq i, j \leq n $)
\begin{equation}
    u_i(t)^{\dagger} \cdot \dot{u}_j(t)=0  
\end{equation} 
holds, that is, $\dot{u}_j(t)$ is orthogonal to $\mathcal{P}(t)$ for $j=1, \dots, k$.

The first way to interpret this as a parallel transport is to consider $\mathcal{P}(t)$ as a  complex vector bundle of rank $k$ over $[0;1]$. The orthogonality of $\dot{u}_j(t)$  to $\mathcal{P}(t)$ exactly defines the parallel transport with respect to the Berry connection, see \cite{Asboth,BohmBook}. In other words, the direct rotation $e^{iS}$ parallel transports a unitary basis of $\mathcal{P}_0$ to a unitary basis of $\mathcal{P}$ over the path $\mathcal{P}(t)$ of subspaces with respect to the Berry connection. 

For the second formulation of the same observation, let $\mathrm{Gr}_k(\C^n)$ denote the Grassmannian manifold consisting of the $k$ dimensional complex subspaces of $\C^n$. Consider the map $ \pi: U(n) \to \mathrm{Gr}_k(\C^n)$ which associates to each unitary matrix $U \in U(n)$ the subspace $\pi(U)=\mathcal{T} \in \mathrm{Gr}_k(\C^n)$ spanned by the first $k$ columns of $U$. This map $\pi$ defines a locally trivial fiber bundle over the base space $\mathrm{Gr}_k(\C^n)$ with total space $U(n)$. The fiber $\pi^{-1}(\mathcal{T})$ over $\pi(U)=\mathcal{T} $ consists of the matrices whose first $k$ columns span the subspace $\mathcal{T}$. 
These are exactly the matrices in form $U \cdot V$, where $V \in U(k) \times U(n-k)$ (i.e., $V$ is block diagonal), hence, $\pi^{-1}(\mathcal{T})$  is diffeomorphic to $U(k) \times U(n-k)$. Moreover, the fiber-wise right action of the group $U(k) \times U(n-k)$ makes $\pi: U(n) \to \mathrm{Gr}_k(\C^n)$ a $U(k) \times U(n-k)$-principal bundle, see e.g.  \cite[chap. 3]{dupont2003fibre} or \cite{husemoller1994fibre, BohmBook} for details.

Recall that the Lie algebra of $U(n)$ -- defined as the tangent space at the identity -- is the space of anti-hermitian matrices, that is, $T_I U(n)=\mathfrak{u}(n) =i \cdot \text{Herm}(n)$. Similarly to the Frobenius inner product in $\text{Herm}(n)$, $\mathfrak{u}(n)$ admits an inner product, which determines a left-right invariant Riemannian metric on $U(n)$.
This Riemannian metric induces an Ehresmann connection (that is, a subbundle of `horizontal subspaces' of the tangent bundle $TU(n)$) as follows. The Lie algebra $\mathfrak{u}(n)$ splits into the orthogonal subspaces $\mathfrak{u}(k) \oplus \mathfrak{u}(n-k) \oplus \mathfrak{h}$, where $\mathfrak{u}(k) \oplus \mathfrak{u}(n-k)$ is the Lie algebra of the subgroup $U(k) \times U(n-k) \subset U(n)$ and it consists of the block diagonal anti-hermitian matrices, and $\mathfrak{h} \subset \mathfrak{u}(n)$ is the subspace consisting of off-block anti-hermitian matrices. This decomposition can be copied to each tangent space $T_U U(n)$ via the tangent map $T_I L_U: T_I U(n) \to T_U U(n)$ of the left multiplication $L_U: U(n) \to U(n)$, defined by $L_U(V)=U \cdot V$ (for $U, V \in U(n)$). Since $L_U$ is the restriction of the left multiplication by $U$ on $\C^{n \times n}$, its tangent map is also the left multiplication by $U$, that is, $T_IL_U(A)=U \cdot A$ for all $A \in T_I U(n)=\mathfrak{u}(n)$. 

We obtain an orthogonal decomposition for each tangent space of $U(n)$ in form
\begin{equation}
    T_U U(n)=U \cdot \mathfrak{u}(k) \oplus U \cdot \mathfrak{u}(n-k) \oplus U \cdot \mathfrak{h} .
\end{equation}
Here $U \cdot \mathfrak{u}(k) \oplus U \cdot \mathfrak{u}(n-k)$ is the tangent space of the fiber $\pi^{-1}(\pi(U))$ at $U$, and $\mathfrak{h}_U:=U \cdot \mathfrak{h}$ is its orthogonal complement. The subspaces $ \mathfrak{h}_U \subset T_U U(n)$ form an Ehresmann connection \cite[chap. 6]{dupont2003fibre}, \cite{BohmBook}. Intuitively, $ \mathfrak{h}_U$ is  a `horizontal subspace', if the fibers are imagined as `vertical'. 

The connection $ \mathfrak{h}_U$ is  invariant under the fiber-wise right action of the group  $U(k) \times U(n-k)$ on $U(n)$, that is, $\mathfrak{h}_{U \cdot V}=\mathfrak{h}_U \cdot V$ holds for $U \in U(n)$, $V\in U(k) \times U(n-k)$. Indeed, \begin{equation}
    \mathfrak{h}_{U} \cdot V=  U \cdot \mathfrak{h}  \cdot V= U \cdot V \cdot \mathfrak{h} \cdot V^{-1}  \cdot V=U \cdot V \cdot \mathfrak{h}=\mathfrak{h}_{U \cdot V},
\end{equation}
using that the off-block antihermitian matrices are closed under the conjugate action of $U(k) \times U(n-k)$: $V \cdot \mathfrak{h} \cdot V^{-1}=\mathfrak{h}$ for $V\in U(k) \times U(n-k)$. 

By Equation~\eqref{eq:parallel}, $\dot{U}(t)=U(t) \cdot iS $, hence, $\dot{U}(t) \in U(t) \cdot \mathfrak{h}=\mathfrak{h}_{U(t)}$ for every $t$. That is, $U(t)$ is a parallel transport (horizontal lift) over the path $\mathcal{P}(t)$ in $\mathrm{Gr}_k(\C^n)$. 

Importantly, the parallel transport itself does not characterize the rotation $e^{iS}$. Instead, if the path $\mathcal{P}(t)$ of subspaces is given, then the parallel transport over this path provides $u(t)=e^{itS}$, in particular, $u(1)=e^{iS}$. Choosing another path in $\mathrm{Gr}_k(\C^n)$ joining $\mathcal{P}_0$ and $\mathcal{P}$, the parallel transport over this path possibly provides a different rotation matrix between the two subspaces. 

\section{The range of validity of the SW decomposition}\label{app:range} 
The range of validity can be discussed in different senses, e.g. (1) Given an $H_0$, which matrices $H$ have a SW decomposition with respect to $H_0$?
(2) How can we choose a neighborhood $\mathcal{V}_0$ of $H_0$ in $\mbox{Herm}(n)$ and a neighborhood $\mathcal{U}_0$ of $(0,0,0,0)$ in the $(S, B, T, H_{\textup{eff}})$ space such that every $H \in \mathcal{V}_0$ has a unique SW decomposition with $(S, B, T, H_{\textup{eff}}) \in \mathcal{U}_0$? Which are the largest possible choices? (3) What is the radius of convergence of the Taylor series expressing $H_{\textup{eff}}$ or $S$ in terms of $H$? These problems lead to the  analysis of the analytic map 
\begin{equation}\label{eq:f}
    f: (S, B, T, H_{\textup{eff}}) \mapsto H=e^{iS} \cdot (H_0+B+T+S+H_{\textup{eff}}) \cdot e^{-iS},
\end{equation}
determining its image and injective restrictions. The configurations $(S, B, T, H_{\textup{eff}})$ where the Jacobian of the map has maximal rank have a neighborhood on which the map is invertible, i.e., there is a unique decomposition. An additional question is the `lowest $k$ state property': for given neighborhoods $\mathcal{U}_0$ and $\mathcal{V}_0$ such that $f: \mathcal{U}_0 \to \mathcal{V}_0$ is a bijection, is it true that the first $k$ columns of $e^{iS}$ span the sum of the eigenspaces of $H$ corresponding to the lowest $k$ eigenvalues? Cf. the proof Theorem~\ref{th:sw} in Section~\ref{ss.prsw}.

This article is not intended to provide complete answers to these questions, but we list some basic observations about them, and summarize the corresponding results of \cite{BravyiSW}. Throughout the observations below, $f$ denotes the map defined by Equation~\eqref{eq:f} with a fixed $H_0 \in \Sigma_k$ specified in each case.
\begin{itemize}
\item Question (1) is trivial: for every $H_0$ and $H$ there is a SW decomposition of $H$ with respect to $H_0$. It can be constructed by taking a direct rotation between the eigenspace of $H_0$ corresponding to the lowest $k$ eigenvalues to any $k$-dimensional eigenspace of $H$. This shows that $f$ is a surjective map, and also shows that without any additional requirement (formulated by the other questions) the SW decomposition itself is irrelevant.
\item Without any restriction, the exponent $S$ is not unique in the SW decomposition \eqref{eq:koo},   since the exponential map $S \mapsto e^{iS}$ is not injective. 
    \item If we fix $S$, the other terms $B$, $T$ and $H_{\textup{eff}}$ can be modified as long as the resulting matrix $H$ does not reach $\Sigma_{[k, k+1]}$, that is, until $\lambda_k \neq \lambda_{k+1}$. Furthermore, different configurations of $B$, $T$ and $H_{\textup{eff}}$ provide different matrices $H$, hence, the uniqueness of the SW decomposition cannot break down in this way.
    \item For a fixed $S$ and varying $B, T, H_{\textup{eff}}$, possibly $H$ crosses $\Sigma_{[k, k+1]} $. Then, two eigenvalues of the two blocks of $\widetilde{B}$ are exchanged, hence, property (3) of Theorem~\ref{th:sw} breaks down. A more careful analysis shows that the unique decomposition property also fails when $H$ reaches $\Sigma_{[k, k+1]}$, since a continuous family of exponents $S$ provides the same $H$. Indeed, the eigenspace $\mathcal{P}$ corresponding to the lowest $k$ eigenvalues has a freedom if $\lambda_{k}=\lambda_{k+1}$. Moreover, for the same reason, the Jacobian of the map $f$ has determinant 0 if $H \in \Sigma_{[k, k+1]}$. Cf. the proof of Theorem~\ref{th:sw} in Section~\ref{ss.prsw}, where the maximal rank of the Jacobian is crucial for the local uniqueness of the SW decomposition.

  \item Given $G \notin \Sigma_{[k, k+1]}$, choose $H_0:=G_{\Sigma}$. Then, $G$ obviously has a SW decomposition with respect to $H_0$ with $T=B=S=0$ and $G_{\textup{eff}}=G-H_0$. Moreover, there is a neighborhood of $G $ in $\mbox{Herm}(n)$ and $(0,0,0, G_{\textup{eff}})$ in the $(T, B, S, H_{\textup{eff}})$ space, such that $f$ is an analytic diffeomorphism between these neighborhoods. It can be seen as follows.
  Without losing generality we can assume that $G$ and $H_0=G_{\Sigma}$ are diagonal. The proof of Theorem~\ref{th:sw} can be modified to show that the Jacobian of $f$ at $(0,0,0, G_{\textup{eff}})$ has maximal rank, using that the eigenvalues of $G$ satisfies $\lambda_k < \lambda_{k+1}$.

    \item Choose $G \notin \Sigma_{[k, k+1]}$ as above, and choose $H_0=G_{\Sigma}$. The elements of the line segment joining $G$ and $H_0$ also have a SW decomposition with $T=B=S=0$, only $H_{\textup{eff}}$ varies. We can show that there is a neighborhood of the line segment joining $G$ and $H_0$ in $\mbox{Herm}(n)$ whose elements admit a unique SW decomposition, with a suitable restriction for its terms. This neighborhood can be imagined as a `capsule' around the line segment, cf. Figure~\ref{fig:capsule}. The existence of such capsule can be deduced as follows.
    
    Consider the line segment $L$ joining $(0,0,0, H-H_0)$ and $(0,0,0,0)$ in the $(S,B,T, H_{\textup{eff}})$ space. $L\subset \R^{n^2}$ is a compact submanifold and $f|_L$ is injective on $L$, furthermore, the Jacobian of $f$ has maximal rank at the points of $L$. By a generalization of the inverse function theorem, see \cite[Pg. 19., Exercise 10; Pg. 56., Exercise 14]{guillemin2010differential}, $f$ is a diffeomorphism between a neighborhood $\mathcal{L}_0$ of $L$ and its image. The image of this neighborhood  $f(\mathcal{L}_0)=\mathcal{V}_0 \subset \mbox{Herm}(n)$ admits a unique SW decomposition with $(S,B,T, H_{\textup{eff}}) \in \mathcal{L}_0$.

    \begin{figure}
	\begin{center}		\includegraphics[width=0.4\columnwidth]{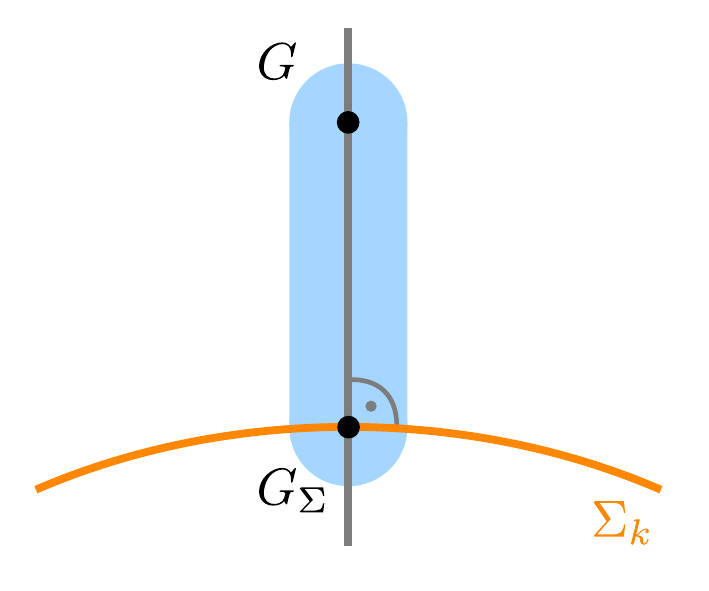}
	\end{center}
	\caption{`Capsule': a neighborhood of the line segment joining $G \notin \Sigma_{[k, k+1]}$ and $H_0=G_{\Sigma}$ in $\mbox{Herm}(n)$ whose elements admit a unique SW decomposition. \label{fig:capsule}}
    \end{figure}
    
    Furthermore, we can define a `discus' (of dimension $n^2$) extending the capsule, with unique SW decomposition. We take a disc $\mathcal{B}$ of dimension $k^2-1$ consisting of matrices $H$ with  $H_{\Sigma}=H_0$ and $d(H,H_0)\leq d(G, H_0)$. The disc $\mathcal{B}$ is orthogonal to $\Sigma_k$ at $H_0$.  Then, there is a neighborhood $\mathcal{V}_0$ of $\mathcal{B}$ with a unique SW decompostition, resulting a discus shaped volume in $\mathrm{Herm}(n)$.
\end{itemize}

We can give other conditions for the range of validity of the SW decomposition \eqref{eq:koo} based on \cite{BravyiSW}. Let $r_0$ denote half of the spectral gap of $H_0$, i.e.
\begin{equation}
    r_0=\frac{\lambda_{k+1}-\lambda_k}{2},
\end{equation}
where $\lambda_i$ denote the eigenvalues of $H_0$. Recall that $\| \cdot \|_2$ denotes the operator 2-norm, cf. Remark~\ref{re:op2norm}. Then, based on \cite{BravyiSW}, we can deduce the following: The matrices $H$ satisfying 
$ \| H- H_0 \|_2 < r_0 $
have a unique SW decomposition with $\| S \|_2 < \pi /2$. Since $\|H-H_0\|_2 \leq \| H-H_0 \|$, the more strict condition $ \| H- H_0 \| < r_0 $
also implies a unique SW decomposition  with $\| S \|_2 < \pi /2$. Indeed, by \cite[Lemma 3.1.]{BravyiSW}, $\| H- H_0 \|_2 < r_0$ implies that $\| P-P_0 \|_2 < 1$, and in this case there is a unique $S$ with $\| S \|_2 < \pi/2$ such that $e^{iS}$ is the direct rotation  between $\mathcal{P}$ and $\mathcal{P}_0$, see \cite[Corollary 2.2., Lemma 2.3.]{BravyiSW}. 

Next we show that geometrically $r_0$ is the radius of the  largest   open ball around $H_0$ in operator 2-norm which does not intersect $\Sigma_{[k, k+1]}$. First, take a closest element $G_0$ of $\Sigma_{[k, k+1]}$ to $H_0$ in the Frobenius norm. Then $r_0$ is equal to the distance of $H_0$ and $G_0$ in the operator 2-norm. Indeed, in a similar way we constructed $H_{\Sigma}$ and we proved in Section~\ref{ss:split} that  it is the closest element of $\Sigma_k$ to $H$, one can construct the closest element of $\Sigma_{[k, k+1]}$ to $H_0$ by collapsing $\lambda_k$ and $\lambda_{k+1}$ to their mean value $(\lambda_k+\lambda_{k+1})/2$ (and leave the other eigenvalues unchanged). Because the $k$-th eigenvalue of $H_0$ is degenerate, the closest element $G_0$ of $\Sigma_{[k, k+1]}$ is not unique: the construction depends on the choice of the eigenspace corresponding to $\lambda_k$, which can be any one dimensional subspace of $\mathcal{P}_0$. The distance of $H_0$ and $G_0$ in the operator 2-norm is obviously $r_0$. 

Moreover, $G_0$ is a closest point of $\Sigma_{[k,k+1]}$ to $H_0$ in the operator 2-norm as well. To show this, assume that $H_0+K \in \Sigma_{[k,k+1]}$ for a Hermitian  matrix $K$. Let $\mu$ denote the $k$-th eigenvalue of $H_0+K$, which is equal to the $k+1$-th eigenvalue of $H_0+K$.  One can deduce from Weyl's inequality~\cite{Weyl1912} that $|\mu-\lambda_k|\leq \|K\|_2$ and also $|\mu-\lambda_{k+1}|\leq \|K\|_2$. But at least one of the left sides is at least $r_0$, i.e., $r_0 \leq |\mu-\lambda_k|$ or $r_0 \leq |\mu-\lambda_{k+1}|$ holds, implying that $r_0 \leq \|K\|_2$.

Summarizing the above arguments,  
\begin{equation}
    \mathcal{V}_0=\{H \in \mbox{Herm}(n) \ | \ \|H-H_0 \|_2 < r_0 \}
\end{equation}
is the largest open ball around $H_0$ in operator 2-norm which does not intersect $\Sigma_{[k, k+1]}$. There is a unique SW decomposition on $\mathcal{V}_0$ if we restrict $(S, B, T, H_{\textup{eff}})$ such that $\|S\|_2 <\pi/2$. The decomposition automatically satisfies the lowest $k$ state property, since Ref.~\cite{BravyiSW} uses the direct rotation between the eigenspaces of $H_0$ and $H$ corresponding to the lowest $k$ eigenvalues.

We do not know, whether the Jacobian of $f$ has maximal rank in the domains indicated by \cite{BravyiSW}. There is a weaker result in \cite{BravyiSW} about analytic one parameter families $H(t)=H_0 + tH_1$: The radius of convergence of the power series expressing $H_{\textup{eff}}(t)$ and $S(t)$ in terms of $H(t)$ around $H_0$ is -- surprisingly, only -- at least $r_0/4$.  

\printbibliography

\end{document}